\newcommand{\paperTitle}{The Feasibility Conditions for Interference Alignment in MIMO Networks}
\begin{document}
\author{Liangzhong~Ruan,~\IEEEmembership{Student~Member,~IEEE,
}Vincent~K.N.~Lau, \IEEEmembership{Fellow,~IEEE,}
\\and Moe~Z.~Win, \IEEEmembership{Fellow,~IEEE}
\thanks{This work is funded by Hong Kong Research Grants Council RGC 614910.}
\thanks{L.\ Ruan is with the Electronic and Computer Engineering Department, Hong Kong University of Science and Technology. The author is also a visiting student at Massachusetts Institute of Technology (e-mail: \texttt{lruan@mit.edu}).}
\thanks{V.\ K.\ N.\  Lau is with the Electronic and Computer Engineering Department, Hong Kong University of Science and Technology (e-mail: \texttt{eeknlau@ust.hk}).}
\thanks{M.\ Z.\ Win is with the Laboratory for Information and Decision Systems (LIDS), Massachusetts Institute of Technology (e-mail: \texttt{moewin@mit.edu}).}}

\title{\paperTitle}

\newtheorem{Thm}{Theorem}[section]
\newtheorem{Thm2}{Theorem}[subsection]
\newtheorem{Lem}{Lemma}[section]
\newtheorem{Lem2}{Lemma}[subsection]
\newtheorem{Asm}{Assumption}[section]
\newtheorem{Def}{Definition}
\newtheorem{Remark}{Remark}[section]
\newtheorem{Prob}{Problem}
\newtheorem{Prop}{Proposition}[section]
\newtheorem{Prop2}{Proposition}[subsection]
\newtheorem{Alg}{Algorithm}
\newtheorem{Sta}{Stage}
\newtheorem{Fact}{Fact}
\newtheorem{Example}{Example}
\newtheorem{Cor}{Corollary}[section]
%\definecolor{mblue}{rgb}{0.05,0.05,0.6}
\definecolor{mblue}{rgb}{0,0,0}

\maketitle

\renewcommand{\IEEEQED}{\IEEEQEDopen}
\begin{abstract}
Interference alignment (IA) has attracted great attention in the last few years for its breakthrough performance in interference networks. However, despite the numerous works dedicated to IA, the feasibility conditions of IA remains unclear for most network topologies. The IA feasibility analysis is challenging as the IA constraints are sets of high-degree polynomials, for which no systematic tool to analyze the solvability conditions exists.
 In this work, by developing a new mathematical framework that maps the solvability of sets of polynomial equations to the \emph{linear} independence of their first-order terms, we propose a sufficient condition that applies to MIMO interference networks with general configurations. We have further proved that this sufficient condition matches with the necessary conditions under a wide range of configurations. These results further consolidate the theoretical basis of IA.
\end{abstract}

\section{Introduction}\label{sec:intro}
\mynote{The breakthrough performance of IA.}

Interference has been a fundamental performance bottleneck in wireless communication. Conventional schemes either treat interference as noise or use channel orthogonalization to avoid interference. However, these schemes are non-capacity achieving in general. Interference alignment (IA), first proposed in \cite{firstIA2}, significantly improves the performance of interference networks by aligning the aggregated interference from multiple sources into a lower dimensional subspace. For instance, in a system with $K$ transmitter-receiver (Tx-Rx) pairs and $N$ antennas at each node, the IA achieves a total throughput which scales as $\mathcal{O}\left(\frac{KN}{2}\log(\mbox{SNR})\right)$ \cite{JafarK1}. This scaling law is optimal and well dominates that of conventional orthogonalization schemes, i.e. $\mathcal{O}\left({N}\log(\mbox{SNR})\right)$. The IA solution in \cite{JafarK1} is also applied to other topologies such as the MIMO-X channels \cite{JafarX2} and MIMO relay channels \cite{Jnl:IArelay} and achieves the optimal throughput scaling law. As such, there is a surge in the research interests of IA.

\mynote{People are more interested in IA with finite dimension due}
\mynote{to practicality concerns.}

To achieve the optimal scaling law of throughput, the IA solution in \cite{JafarK1} requires
$\mathcal{O}((KN)^{2K^2N^2})$ dimensions of signal space, which is realized by time or frequency domain symbol extension. Such symbol extension approach is difficult to implement in practice due to the huge dimensions of the signal space involved.
To overcome this problem, IA designs with signal space dimension limited by the number of antennas, are proposed in \cite{Conf:MIMOC1,Conf:MIMOC2,MIMOCell2,Conf:JafarD1,Conf:IA_Heath} for practical MIMO systems. In the IA designs proposed in \cite{Conf:MIMOC1,Conf:MIMOC2,MIMOCell2}, closed-form solutions are obtained for few specific and simple configurations. For instance, in \cite{Conf:MIMOC1}, all Rx have 2 antennas. In \cite{Conf:MIMOC2}, all nodes have {$(K+1)$} antennas. And in \cite{MIMOCell2}, there are only 2 Rxs in the network. Moreover, in all the works mentioned above, each Tx only has one independent data stream. Iterative IA solutions based on alternating optimization
 are proposed for MIMO interference networks with general configurations in \cite{Conf:JafarD1,Conf:IA_Heath}. However, these approaches may not converge to the global optimal solution.

\mynote{With finite signal dimension, existing works on IA feasibility}
\mynote{are quite limited.}

 When the signal space dimension is limited, the IA is not always feasible. Therefore, the characterization of the feasibility conditions under limited signal space dimension is the primary issue to address. In general, the feasibility of the IA problem is associated with the solvability of a set of polynomial equations, which is the focus of {\em algebraic geometry} \cite{Bok:AG_Harris,Bok:AG_Cox2}. There are very few works that studied the feasibility condition of IA problems using {\em algebraic geometry} \cite{JafarDf,Misc:IAfeasible_Tse,Conf:IAfeasible_Luo,Jul:IAfeasible_Luo}.
 In \cite{JafarDf}, the authors studied the feasibility condition of IA problem in single stream MIMO interference networks using {\em Bernstein's Theorem} in algebraic geometry \cite[Thm. 5.4, Ch. 7]{Bok:AG_Cox2}. This work has been extended to the multiple stream case by two parallel works \cite{Misc:IAfeasible_Tse}, and {[14,15]}\mysubsubnote{numbers need to change manually}, respectively. The first approach in \cite{Misc:IAfeasible_Tse} established some necessary conditions for the IA feasibility condition for general network topology by analyzing the dimension of the {\em algebraic varieties} \cite{Bok:AG_Harris}. The authors further showed that these conditions are also sufficient when the number of antennas and data streams at every node are identical. The second approach in \cite{Conf:IAfeasible_Luo,Jul:IAfeasible_Luo} established a similar necessary conditions for the IA feasibility problem based on {\em algebraic independence} between the IA constraints. The authors further proved that these conditions are also sufficient when the number of data stream at every node is the same and the number of antennas at every node is divisible by the number of data streams. In summary, the aforementioned works have proposed some necessary conditions for MIMO interference networks with general configuration, but the proposed sufficient conditions are limited to specific configurations.

\mynote{Our contribution: New tool and its impact.}

In this paper, we develop new tools in algebraic geometry which allows us to address the IA feasibility issue in the general configuration. The newly developed tool maps the solvability of a set of general polynomial equations to the {\em linear independence} of their first order terms. Based on this new tool, we can extend our understanding on the IA feasibility conditions in the following aspects:
\begin{itemize}
\item[\bf A.] Further tighten the IA feasibility conditions from the {\em necessary} side;
\item[\bf B.] Propose and prove a {\em sufficient} condition of IA feasibility which applies to MIMO interference networks with {\em general} configurations;
{\item[\bf C.] Prove that scaling the number of antennas and data streams of a network simultaneously preserves IA feasibility;}
\item[\bf D.] Determine the necessary and sufficient conditions of IA feasibility in a wider range of network configurations comparing with the results given in \cite{Misc:IAfeasible_Tse,Conf:IAfeasible_Luo,Jul:IAfeasible_Luo}.
\end{itemize}
\mynote{Paper outline}

Organization: Section~\ref{sec:model} presents the system model and define the IA feasibility problem. Section~\ref{sec:mainresult} pairs the analytical results of this paper
to their contributions.  Section~\ref{sec:proof} provides the  proofs of the results based on a new mathematical framework. Section~\ref{sec:conclude} gives the conclusion.

\mynote{Notation}

Notations: $a$, $\mathbf{a}$, $\mathbf{A}$, and $\mathcal{A}$ represent scalar, vector, matrix, set/space, respectively.  $\mathbb{N}$, $\mathbb{Z}$ and $\mathbb{C}$ denote the set of natural numbers, integers and complex numbers, respectively. The operators $(\cdot)^T$, $(\cdot)^H$, {$\det(\cdot)$}, $\mbox{rank}(\cdot)$, and $\mathcal{N}(\cdot)$ denote transpose, Hermitian transpose, {determinate}, rank, and null space of a matrix. And the operators $\langle\cdot \rangle$, $\mathcal{V}(\cdot)$ denote the ideal, and the vanishing set \cite{Bok:AG_Cox} of a set of polynomials. { For a field $\mathcal{K}$, $\mathcal{K}[x_1,...x_j]$ represents the field of rational functions in variables $x_1,...,x_j$ with coefficients drawn from $\mathcal{K}$.} $\mbox{size}(\cdot)$ represents the size of a vector and {$|\cdot|$ represents the cardinality of a set.} $\mbox{I}(\cdot)$ is the indicator function. $\dim(\cdot)$ denotes the dimension of a space. $\mbox{span}(\mathbf{A})$ and $\mbox{span}(\{\mathbf{a}\})$ denote the linear space spanned by the column vectors of $\mathbf{A}$ and the vectors in set $\{\mathbf{a}\}$, respectively.  { $\mbox{gcd}(n,m)$ denotes the greatest common divisor of $n$ and $m$, $n|m$ denotes that $n$ divides $m$, and $\mbox{mod}(n,m)$ denotes $n$ modulo $m$, $n,m\in\mathbb{Z}$. $\mbox{diag}^{n}(\mathbf{A},...,\mathbf{X})$ represents a block diagonal matrix with submatrixes $\mathbf{A},...,\mathbf{X}$ on its  $n$-th diagonal. For instance, $\mbox{diag}^{-\!1}([2,1],[1,2])=\Bigg[\!\!{\scriptsize\begin{array}{*{5}{c@{\,}}c}0&0&0&0&0&0\\2&1&0&0&0&0\\0&0&1&2&0&0\end{array}}\!\!\Bigg]$. $\mbox{diag}(\mathbf{A},...,\mathbf{X})=\mbox{diag}^{0}(\mathbf{A},...,\mathbf{X})$, and $\mbox{diag}[n](\mathbf{A})=\mbox{diag}(\underbrace{\mathbf{A},...,\mathbf{A}}_{n\mbox{ \scriptsize times}})$.} The symbol ``$\cong$" denotes the isomorphism relation \cite{Misc:Springer_Plucker}.

\section{System Model and Problem Formulation}
\label{sec:model}
\mynote{Channel Model}

Consider a MIMO interference network consisting of $K$ Tx-Rx pairs, with Tx $k$ sending $d_k$ independent data streams to Rx $k$. Tx $k$ is equipped with $M_k$ antennas and Rx $k$ has $N_k$ antennas. The received signal $\mathbf{y}_{k}\in \mathbb{C}^{d_k}$ at Rx $k$ is given by:
\begin{eqnarray}
\mathbf{y}_{k}=\mathbf{U}^H_{k}\left(\mathbf{H}_{kk}
\mathbf{V}_{k}\mathbf{x}_{k} + \sum_{j=1,\neq k}^{K}
\mathbf{H}_{kj} \mathbf{V}_{j}\mathbf{x}_{j}+\mathbf{z}
\right)\label{eqn:signal_1}
\end{eqnarray}
where $\mathbf{H}_{kj}\in \mathbb{C}^{N_k\times M_j}$ is the channel state matrix from Tx $j$ to Rx $k$, whose entries are independent random variables drawn from continuous distributions. $\mathbf{x}_k \in \mathbb{C}^{d_k}$  is the encoded
information symbol for Rx $k$, $\mathbf{U}_{k}\in
\mathbb{C}^{N_k \times d_k}$  is the decorrelator of Rx
$k$, and $\mathbf{V}_{j}\in \mathbb{C}^{M_j \times
d_j}$  is the transmit precoding matrix at Tx
$j$. $\mathbf{z}\in \mathbb{C}^{N_k\times 1}$ is the
white Gaussian noise with zero mean and unit variance.

\mynote{Problem Formulation}

Following the previous works on IA for $K$-pairs MIMO interference networks \cite{JafarDf_c,JafarDf,Misc:IAfeasible_Tse,Conf:IAfeasible_Luo,Jul:IAfeasible_Luo}, in this work, we focus on the feasibility issue of the following problem:
\begin{Prob}[IA on MIMO Interference Networks]\label{pro:pIA_c}
For a MIMO interference network with configuration $\chi=\{(M_1,N_1,d_1),(M_2,N_2,d_2),...,(M_K,N_K,d_K)\}$,
design transceivers $\{\mathbf{U}_k\in\mathbb{C}^{N_k\times d_k}, \mathbf{V}_j\in\mathbb{C}^{M_j\times d_j}\}$, $k,j\in\{1,...,K\}$ that satisfy the following constraints:
\begin{eqnarray}
\mbox{rank}\left(\mathbf{U}^H_k\mathbf{H}_{kk}\mathbf{V}_k\right) &\!\!\!=\!\!\!& d_k\label{eqn:drank_c}, \qquad\forall k,\\
\mathbf{U}^H_k\mathbf{H}_{kj}\mathbf{V}_j &\!\!\!=\!\!\!& \mathbf{0},\qquad\hspace{1.5mm}\forall k\neq j.\label{eqn:czero_c}
\end{eqnarray}
\end{Prob}

\section{Feasibility Conditions}
\label{sec:pIA}
\mynote{Highlight the outline of this section,}
In this section, we will first list the main results and pair them with the contributions. Then we prove these results in the second subsection. Readers can refer to \cite{Tech:Ruan} for a summary of the main theoretical approaches prior to this work, and a  brief introduction to the concept of algebraic independence.
\subsection{Main Results}\label{sec:mainresult}
%\mynote{This subsection follows ``theorem (states the result)}
%\mynote{+ remark (explain the contribution)" structure.}
\subsubsection{Theorems Applicable to General Configurations}
The following two theorems summarize the main result on the necessary side and the sufficient side, respectively.
\begin{Thm}[Necessary Conditions of IA Feasibility]\label{thm:feasible}
If Problem~\ref{pro:pIA_c} has solution, then the network configuration $\chi$  must satisfy the following inequalities:
\begin{eqnarray}
&&\!\!\!\!\min\{M_j,N_j\} \ge d_j,\,{\forall j\in\{1,...,K\}} \label{eqn:f1}
\\&&\!\!\!\!\max\{\sum_{j:(\cdot,j)\in\atop \mathcal{J}_{\mathrm{sub}}}\!\!M_j,\sum_{k:(k,\cdot)\in\atop \mathcal{J}_{\mathrm{sub}}}\!\!N_k\} \ge \!\!\sum_{j:\;(\cdot,j) {\scriptsize\mbox{ or }} (j,\cdot)\in\atop \mathcal{J}_{\mathrm{sub}}}\!\!d_j, \label{eqn:f2}
\\&&\!\!\!\!\!\!\sum_{j:(\cdot,j)\in\atop \mathcal{J}_{\mathrm{sub}}}\!\!\!d_j(M_j-d_j)
+\!\!\!\!\!\sum_{k:(k,\cdot)\in\atop \mathcal{J}_{\mathrm{sub}}}\!\!\!d_k(N_k-d_k)\ge\!\!\sum_{(k,j)\in\atop\mathcal{J}_{\mathrm{sub}}}\!
d_j d_k,\label{eqn:f3}
\end{eqnarray}
$\forall \mathcal{J}_{\mathrm{sub}}\subseteq \mathcal{J}$, where $\mathcal{J}=\{(k,j):\; k,j\in\{1,...,K\},k\neq j\}$, $(\cdot,j)$ (or $(k,\cdot)$) $\in\mathcal{J}_{\mathrm{sub}}$ denote that there exists $k$ (or $j$) $\in\{1,...,K\}$ such that $(k,j)$ $\in \mathcal{J}_{\mathrm{sub}}$. {~\hfill\IEEEQED}
\end{Thm}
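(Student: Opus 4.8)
The plan is to establish each of the three inequalities \eqref{eqn:f1}, \eqref{eqn:f2}, \eqref{eqn:f3} separately by a dimension‑counting argument on the variety cut out by the IA constraints \eqref{eqn:drank_c}–\eqref{eqn:czero_c}. First I would observe that the decoder/precoder matrices are only meaningful up to right multiplication by an invertible $d_k\times d_k$ matrix, so without loss of generality $\mathbf{V}_j$ and $\mathbf{U}_k$ live on Grassmannians (or, after fixing a coordinate chart, each carries $d_j(M_j-d_j)$, resp.\ $d_k(N_k-d_k)$, free complex parameters). Each interference‑nulling equation \eqref{eqn:czero_c} contributes $d_jd_k$ scalar polynomial constraints. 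Inequality \eqref{eqn:f1} is immediate: a rank‑$d_k$ factorization $\mathbf{U}_k^H\mathbf{H}_{kk}\mathbf{V}_k$ forces $d_k\le\min\{M_k,N_k\}$ since $\mathbf{U}_k$ and $\mathbf{V}_k$ themselves have rank at most $d_k$ and the product cannot exceed the rank of either factor or of $\mathbf{H}_{kk}$.

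For \eqref{eqn:f2} the idea is a pigeonhole/counting argument restricted to the subnetwork indexed by $\mathcal{J}_{\mathrm{sub}}$: consider the set of Tx indices $j$ with $(\cdot,j)\in\mathcal{J}_{\mathrm{sub}}$ and Rx indices $k$ with $(k,\cdot)\in\mathcal{J}_{\mathrm{sub}}$. Stack the precoding columns at the relevant Tx side and argue that if $\sum d_j$ exceeds both $\sum M_j$ and $\sum N_k$, then either on the transmit side the precoders cannot be linearly independent after passing through the channels, or on the receive side the received signal subspaces cannot be simultaneously separated — essentially, the total signal dimension cannot be squeezed through the available antenna dimensions of the subnetwork. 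I would phrase this via the observation that the useful signal of each stream must survive at its own receiver while being nulled at all interfering receivers within the subnetwork, which is impossible once the dimension budget is violated.

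Inequality \eqref{eqn:f3} is the genuinely algebraic‑geometric one, and I expect it to be the main obstacle. The approach is the classical one (as in \cite{JafarDf}, \cite{Misc:IAfeasible_Tse}): form the algebraic variety $\mathcal{V}$ defined by the equations \eqref{eqn:czero_c} for $(k,j)\in\mathcal{J}_{\mathrm{sub}}$ inside the product of Grassmannians, whose ambient dimension is $\sum_{(\cdot,j)}d_j(M_j-d_j)+\sum_{(k,\cdot)}d_k(N_k-d_k)$. If a solution exists for generic channels, the projection of $\mathcal{V}$ onto the channel‑coefficient space must be dominant, which forces $\dim\mathcal{V}\ge$ (number of channel parameters) only if one is careful — instead the right statement is that the fiber over a generic channel is nonempty, so by the fiber‑dimension theorem the variety cannot be overdetermined: the number of independent constraints $\sum_{(k,j)\in\mathcal{J}_{\mathrm{sub}}}d_jd_k$ cannot exceed the number of variables. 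The delicate point, and where the paper's new machinery enters, is showing the $d_jd_k$ equations per pair are \emph{jointly} independent (not merely that each is nonconstant) for generic $\mathbf{H}_{kj}$; for the necessary direction one can lean on the fact that if they were dependent the solution set would still be constrained, but a fully rigorous treatment needs the linear‑independence‑of‑first‑order‑terms framework announced in the introduction.

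Finally I would assemble the three bounds: run the above for every $\mathcal{J}_{\mathrm{sub}}\subseteq\mathcal{J}$, noting that taking $\mathcal{J}_{\mathrm{sub}}$ to be a single pair recovers the familiar pairwise bound and that the subset quantifier is exactly what is needed to cover "bottleneck" sub‑configurations. The write‑up would present \eqref{eqn:f1} in one line, \eqref{eqn:f2} via the pigeonhole argument on stacked subspaces, and devote the bulk of the proof to \eqref{eqn:f3}, deferring the joint‑independence claim to the lemma(s) developed later in Section~\ref{sec:proof}.
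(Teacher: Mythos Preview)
Your allocation of effort is inverted relative to the paper. Conditions \eqref{eqn:f1} and \eqref{eqn:f3} are \emph{not} new here: the paper explicitly cites \cite{Misc:IAfeasible_Tse,Conf:IAfeasible_Luo,Jul:IAfeasible_Luo} for their necessity and does not reprove them. In particular, the linear-independence-of-first-order-terms machinery you invoke for \eqref{eqn:f3} is developed for the \emph{sufficient} direction (Theorem~\ref{thm:feasible_s}), not for the necessary one; you should not defer any part of the necessary proof to those lemmas. So your long discussion of \eqref{eqn:f3}, fiber dimension, and joint independence is beside the point for this theorem.

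The actual content of the paper's proof is \eqref{eqn:f2}, and your sketch for it is not yet a proof. ``Pigeonhole on stacked subspaces'' and the either/or dichotomy you describe do not isolate the mechanism. The paper's argument is concrete: assume without loss of generality that $\sum_{k:(k,\cdot)\in\mathcal{J}_{\mathrm{sub}}^\dag} N_k$ realizes the max; split the indices appearing in $\mathcal{J}_{\mathrm{sub}}^\dag$ into the set $\mathcal{R}$ of those that appear as an Rx index and the set $\mathcal{T}$ of those that appear \emph{only} as a Tx index (so $\mathcal{T}$ and $\mathcal{R}$ partition the right-hand side of \eqref{eqn:f2}); form the block-diagonal $\mathbf{V}^*_{\mathcal{T}}$, $\mathbf{U}^*_{\mathcal{R}}$ and the stacked channel $\mathbf{H}_{\mathcal{J}_{\mathrm{sub}}^\dag}$. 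The IA constraints give $\mathbf{U}^{*H}_{\mathcal{R}}\mathbf{H}_{\mathcal{J}_{\mathrm{sub}}^\dag}\mathbf{V}^*_{\mathcal{T}}=\mathbf{0}$. Because the max is on the Rx side, $\mathbf{H}_{\mathcal{J}_{\mathrm{sub}}^\dag}$ is (almost surely) injective, so $\mathbf{H}_{\mathcal{J}_{\mathrm{sub}}^\dag}\mathbf{V}^*_{\mathcal{T}}$ has dimension $\sum_{j\in\mathcal{T}}d_j$ and lies orthogonally to $\mbox{span}(\mathbf{U}^*_{\mathcal{R}})$ inside a $\sum_{k\in\mathcal{R}}N_k$-dimensional space; adding dimensions yields \eqref{eqn:f2}. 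The two ingredients you are missing are (i) the $\mathcal{T}/\mathcal{R}$ partition (which is what makes the right-hand side come out as a disjoint sum) and (ii) the use of the ``max on the Rx side'' hypothesis to guarantee $\mathcal{N}(\mathbf{H}_{\mathcal{J}_{\mathrm{sub}}^\dag})=\{\mathbf{0}\}$, without which the image of $\mathbf{V}^*_{\mathcal{T}}$ could collapse.
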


\begin{Remark}[Tighter Necessary Conditions] \eqref{eqn:f2} is the newly proposed necessary condition. If the cardinality of set $\mathcal{J}_{\mathrm{sub}}$ is restricted to be $1$, we have that \eqref{eqn:f2} is reduced to
\begin{eqnarray}
\max\{M_j,N_k\}\ge d_k+d_j,\;\forall j\neq k, \label{eqn:f2_old}
\end{eqnarray}
which is one of the necessary inequalities given in the prior works \cite{Misc:IAfeasible_Tse,Conf:IAfeasible_Luo,Jul:IAfeasible_Luo}. Note that the necessary conditions given in Thm.~\ref{thm:feasible} are strictly tighter than those given in \cite{Misc:IAfeasible_Tse,Conf:IAfeasible_Luo,Jul:IAfeasible_Luo}.~\hfill\IEEEQED
\end{Remark}

{\begin{Thm}[Sufficient Condition of IA Feasibility]\label{thm:feasible_s}
If the matrix described in Fig.~\ref{fig_Hall} (denote this matrix as $\mathbf{H}_{\mathrm{all}}$) is full row rank, Problem~\ref{pro:pIA_c} has solutions almost surely.
\begin{figure*}[t] \centering
\includegraphics[scale=0.75]{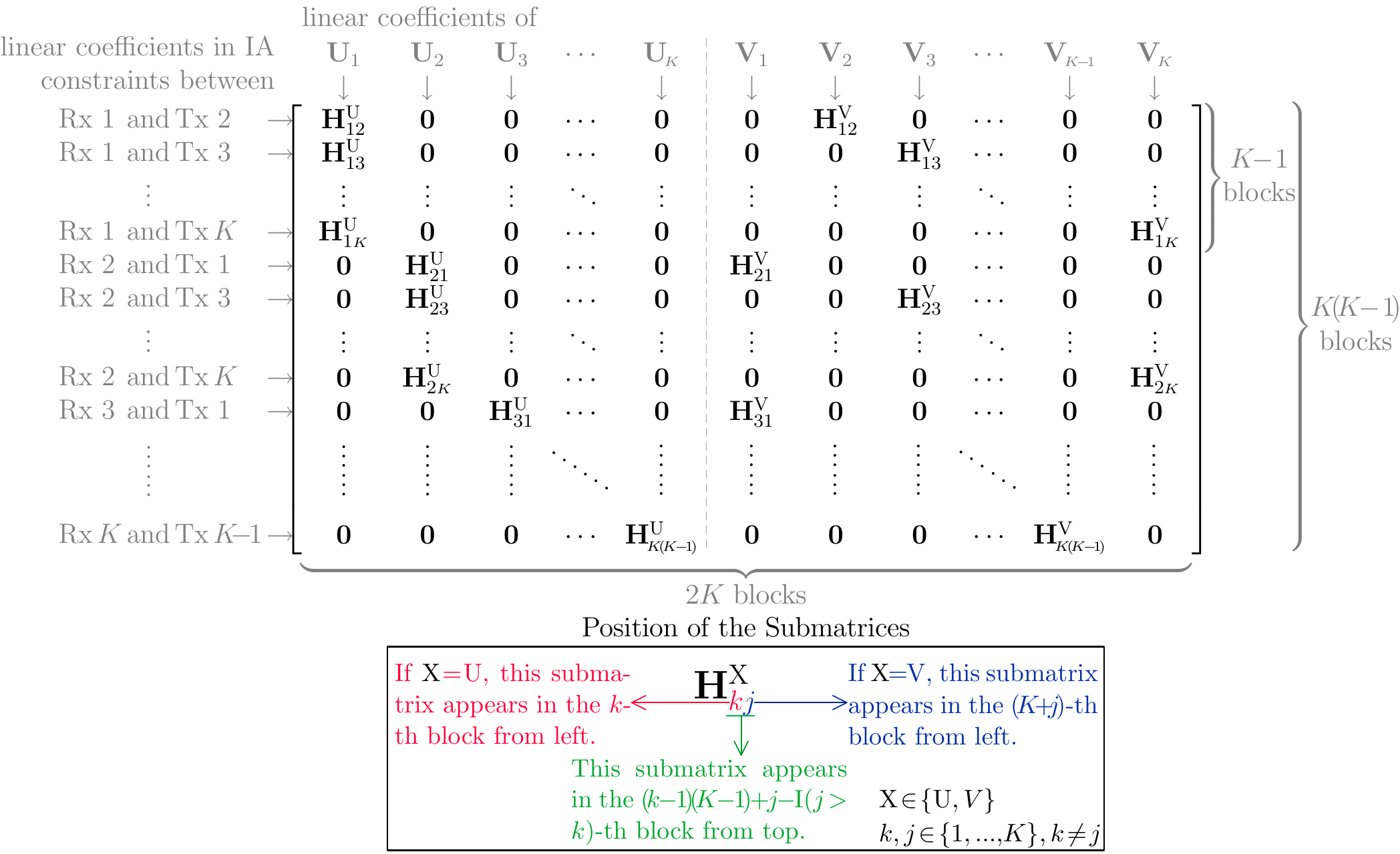}
\caption {The matrix scattered by the coefficient vectors of the linear terms in the polynomial form of IA constraints, i.e. \eqref{eqn:czero_poly2}.}
\label{fig_Hall}
\end{figure*}

The submatrices $\mathbf{H}^{\mathrm{U}}_{kj}\in\mathbb{C}^{(d_kd_j)\!\times\!(d_k\!(M_k\!-\!d_k))}$, $\mathbf{H}^{\mathrm{V}}_{kj}\in\mathbb{C}^{(d_kd_j)\!\times\!(d_j\!(N_j\!-\!d_j))}$ in Fig.~\ref{fig_Hall} are defined by:
\begin{eqnarray}
\mathbf{H}^{\mathrm{U}}_{kj}\!\!\!\!\!\!&=&\!\!\!\!\!\!
\mbox{diag}[d_k]\!\!\!\left(\!\!\!\!\begin{array}{*{3}{c@{\,}}c}
h_{kj}\!(\!d_k\!\!+\!\!1,\!1\!),&h_{kj}\!(\!d_k\!\!+\!\!2,\!1\!),
&\!\cdots\!,&h_{kj}\!(\!N_k,\!1\!)\\
h_{kj}\!(\!d_k\!\!+\!\!1,\!2\!),&h_{kj}\!(\!d_k\!\!+\!\!2,\!2\!),
&\!\cdots\!,&h_{kj}\!(\!N_k,\!2\!)\\
\vdots&\vdots&\ddots&\vdots\\
h_{kj}\!(\!d_k\!\!+\!\!1,\!d_j\!),&h_{kj}\!(\!d_k\!\!+\!\!2,\!d_j\!),
&\!\cdots\!,&h_{kj}\!(\!N_k,\!d_j\!)
\end{array}\!\!\!\!\right)\label{eqn:hu}\\
\mathbf{H}^{\mathrm{V}}_{kj}\!\!\!\!\!\!&=&\!\!\!\!\!\!\!\!\left[\!\!\!\!
\begin{array}{*{4}{c@{\,}}c}
\mbox{diag}[d_j]\!\big(&\!\!
h_{kj}\!(\!1,\!d_j\!\!+\!\!1\!),&h_{kj}\!(\!1,\!d_j\!\!+\!\!2\!),&\!\cdots\!,&h_{kj}\!(\!1,\!M_j\!)\big)\\
\mbox{diag}[d_j]\!\big(&\!\!
h_{kj}\!(\!2,\!d_j\!\!+\!\!1\!),&h_{kj}\!(\!2,\!d_j\!\!+\!\!2\!),&\!\cdots\!,&h_{kj}\!(\!2,\!M_j\!)\big)\\
\multicolumn{5}{c}{\cdots\cdots}\\
\mbox{diag}[d_j]\!\big(&\!\!
h_{kj}\!(\!d_k,\!d_j\!\!+\!\!1\!),&h_{kj}\!(\!d_k,\!d_j\!\!+\!\!2\!),&\!\cdots\!,&h_{kj}\!(\!d_k,\!M_j\!)\big)
\end{array}
\!\!\!\!\!\right]\label{eqn:hv}
\end{eqnarray}
where $h_{kj}(p,q)$ denotes the element in the $p$-th row and $q$-th column of $\mathbf{H}_{kj}$, $k\neq j, k,j\in\{1,...,K\}$.{~\hfill\IEEEQED}
\end{Thm}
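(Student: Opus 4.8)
The plan is to translate the IA conditions \eqref{eqn:drank_c}--\eqref{eqn:czero_c} into a system of polynomial equations in the "free" coordinates of the transceivers, and then invoke the paper's new framework that reduces solvability of a polynomial system to the linear independence (here, full row rank) of the coefficient vectors of its first-order terms. First I would fix a generic point for the transceivers: since \eqref{eqn:f1} is implied by the hypothesis (the block structure of $\mathbf{H}_{\mathrm{all}}$ forces $M_k \ge d_k$ and $N_j \ge d_j$), I can write each $\mathbf{V}_j$ in the form $\bigl[\begin{smallmatrix}\mathbf{I}_{d_j}\\ \widetilde{\mathbf{V}}_j\end{smallmatrix}\bigr]$ and each $\mathbf{U}_k$ as $\bigl[\begin{smallmatrix}\mathbf{I}_{d_k}\\ \widetilde{\mathbf{U}}_k\end{smallmatrix}\bigr]$ (up to a generic change of basis that does not affect feasibility almost surely). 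The free variables are then the entries of $\widetilde{\mathbf{U}}_k\in\mathbb{C}^{(N_k-d_k)\times d_k}$ and $\widetilde{\mathbf{V}}_j\in\mathbb{C}^{(M_j-d_j)\times d_j}$, totalling $\sum_k d_k(N_k-d_k)+\sum_j d_j(M_j-d_j)$ unknowns, and \eqref{eqn:czero_c} becomes a set of $\sum_{k\neq j} d_k d_j$ polynomial equations $\mathbf{U}_k^H\mathbf{H}_{kj}\mathbf{V}_j = \mathbf{0}$; I will call this polynomial form \eqref{eqn:czero_poly2}. The rank condition \eqref{eqn:drank_c} is an open condition that holds automatically at a generic solution of \eqref{eqn:czero_poly2} because $\mathbf{H}_{kk}$ is independent of the other channels.

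The second step is to expand \eqref{eqn:czero_poly2} around the origin $\widetilde{\mathbf{U}}=\mathbf{0},\widetilde{\mathbf{V}}=\mathbf{0}$ (which is \emph{not} itself a solution, but a convenient linearization point). Writing $\mathbf{U}_k^H\mathbf{H}_{kj}\mathbf{V}_j$ in block form with the partition into the first $d$ rows/columns and the remaining ones, the constant term is $\mathbf{H}_{kj}$ restricted to its top-left $d_k\times d_j$ block, the first-order (linear) terms are exactly $\widetilde{\mathbf{U}}_k^H$ times the top-right $d_k\times(M_j-d_j)$ block plus the bottom-left $(N_k-d_k)\times d_j$ block times $\widetilde{\mathbf{V}}_j$, and the only higher-order term is the bilinear $\widetilde{\mathbf{U}}_k^H(\text{bottom-right block})\widetilde{\mathbf{V}}_j$. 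Vectorizing each such matrix equation produces precisely the row of blocks $[\,\cdots\ \mathbf{H}^{\mathrm{U}}_{kj}\ \cdots\ \mathbf{H}^{\mathrm{V}}_{kj}\ \cdots\,]$ in Fig.~\ref{fig_Hall}: the Kronecker-product structure $\mbox{diag}[d_k](\cdot)$ in \eqref{eqn:hu} comes from $\mathrm{vec}(\widetilde{\mathbf{U}}_k^H\mathbf{A}) = (\mathbf{A}^T\otimes\mathbf{I}_{d_k})\,\mathrm{vec}(\widetilde{\mathbf{U}}_k^H)$ and likewise for \eqref{eqn:hv}. Hence $\mathbf{H}_{\mathrm{all}}$ is by construction the Jacobian of the map \eqref{eqn:czero_poly2} at the origin, i.e. the matrix of coefficient vectors of its linear terms.

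The third and decisive step is to apply the main technical result of Section~\ref{sec:proof}: if the linear-term coefficient matrix $\mathbf{H}_{\mathrm{all}}$ of a generic polynomial system has full row rank, then the system has a solution almost surely. Intuitively this is an implicit-function / dominant-map argument — full row rank of the Jacobian at a generic point makes the polynomial map a submersion onto a neighborhood of its image, so the fibre over $\mathbf{0}$ (shifted appropriately by the generic constant terms, which are themselves generic because the $\mathbf{H}_{kj}$ are independent continuous) is nonempty for almost every realization of the channels — but the subtlety the paper's framework must handle is that we linearize at a point that is not a zero of the system, so one needs the genericity of the constant terms to "move" the target rather than the source. Granting that result, full row rank of $\mathbf{H}_{\mathrm{all}}$ yields an almost-sure solution of \eqref{eqn:czero_poly2}, and the generic rank condition \eqref{eqn:drank_c} then holds at that solution, completing the proof.

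I expect the main obstacle to be \emph{not} the bookkeeping that identifies $\mathbf{H}_{\mathrm{all}}$ with the Jacobian — that is routine vectorization — but rather invoking the Section~\ref{sec:proof} framework with the correct hypotheses: one must verify that the system \eqref{eqn:czero_poly2} is "generic enough" in the sense required (coefficients drawn from continuous distributions, no hidden algebraic dependencies among the linear terms beyond those encoded in the rank of $\mathbf{H}_{\mathrm{all}}$), and that the almost-sure statement survives the restriction from the ambient polynomial ring to the specific parametrization chosen. Handling the degenerate cases where some $M_k=d_k$ or $N_j=d_j$ (so the corresponding block of free variables is empty and the relevant $\mathbf{H}^{\mathrm{U}}_{kj}$ or $\mathbf{H}^{\mathrm{V}}_{kj}$ has zero width) also needs a brief separate remark, since then the corresponding constant block of $\mathbf{H}_{kj}$ must already vanish for $\mathbf{H}_{\mathrm{all}}$ to have full row rank — consistent with feasibility.
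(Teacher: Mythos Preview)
Your proposal is correct and follows the paper's proof essentially verbatim: parametrize via the identity-on-top form \eqref{eqn:tildeuv}, expand the constraints into \eqref{eqn:czero_poly2} so that $\mathbf{H}_{\mathrm{all}}$ is exactly the matrix of linear-term coefficients, then invoke the chain Lem.~\ref{lem:independent1}--\ref{lem:nonempty} (linear independence $\Rightarrow$ algebraic independence, preserved under the random constant shift coming from the top-left $d_k\times d_j$ block of $\mathbf{H}_{kj}$, $\Rightarrow$ nonempty vanishing set by Nullstellensatz) together with Lem.~\ref{lem:transform} for the rank condition \eqref{eqn:drank_c}. One small correction to your closing remark: the constant block of $\mathbf{H}_{kj}$ never enters $\mathbf{H}_{\mathrm{all}}$ (which records only first-order coefficients), so in the degenerate case $M_j=d_j$ or $N_k=d_k$ full row rank simply means the surviving nonzero-width blocks carry those rows---no constant needs to vanish, and indeed Lem.~\ref{lem:independent2} is precisely what absorbs the generic constants regardless of which variable blocks are empty.
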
}

\begin{Remark}[Interpretation of the Sufficient Condition] The row vectors of $\mathbf{H}_{\mathrm{all}}$ are the coefficients of the linear terms of polynomials in the IA constraint \eqref{eqn:czero_c}. Please refer to \eqref{eqn:czero_poly2}, \eqref{eqn:v} for details. Hence, Thm.~\ref{thm:feasible_s} claims that the linear independence of these coefficient vectors is sufficient for the IA problem to be feasible. This fact is a direct consequence of the mathematical tool we developed in algebraic geometry, i.e. Lem.~\ref{lem:independent1}--\ref{lem:nonempty}. Please refer to Fig.~\ref{fig_flow} for an intuitive illustration of this mathematical tool.~\hfill\IEEEQED
\end{Remark}

\begin{figure*}[t] \centering
\includegraphics[scale=0.65]{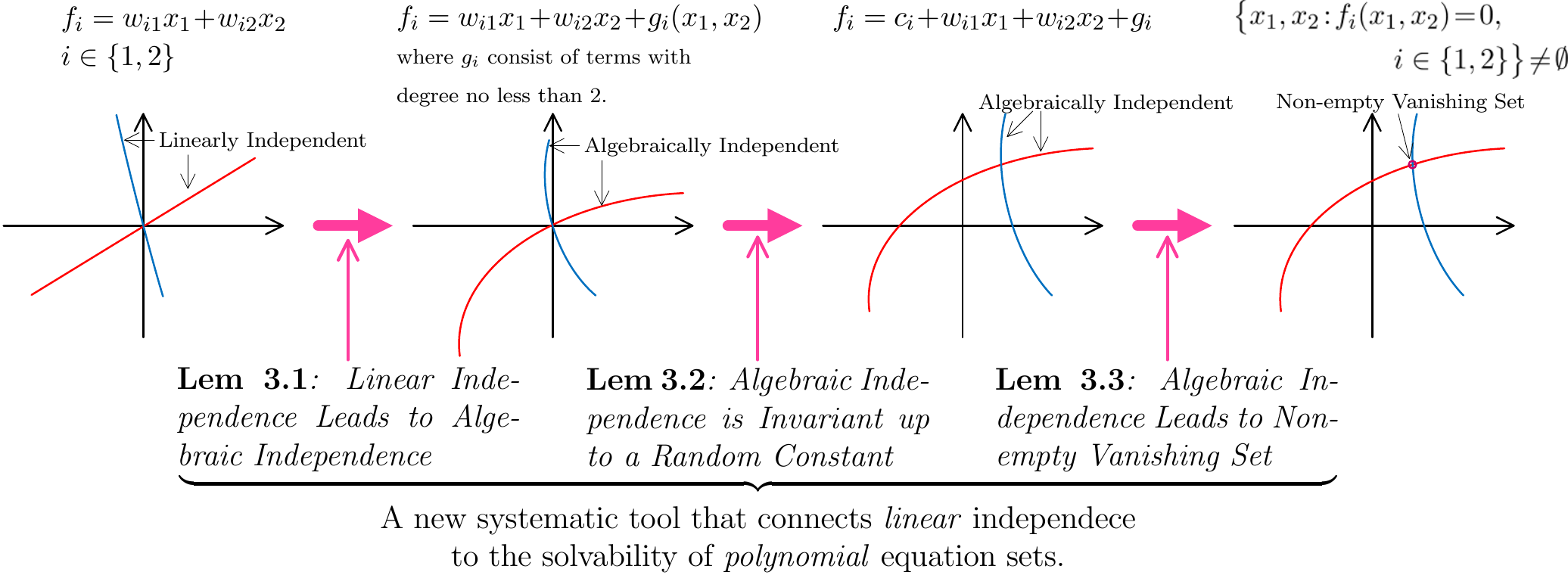}
\caption {Illustration of the new systematic tool that links \emph{linear} independence to the solvability of \emph{polynomial} equation sets.}
\label{fig_flow}
\end{figure*}

\begin{Remark}[Contributions of Thm.~\ref{thm:feasible_s}] In literature,  sufficient conditions of IA feasibility are limited to special network configurations. Thm.~\ref{thm:feasible_s} proposes a sufficient condition which applies to MIMO interference networks with general configuration.~\hfill\IEEEQED
\end{Remark}

{
Following Thm.~\ref{thm:feasible_s}, we have two corollaries that depict the relation between network configuration $\chi$ and IA feasibility.

\begin{Cor}[Configuration Dominates IA Feasibility]\label{cor:confdomin}
Under a network configuration $\chi$, $\mathbf{H}_{\mathrm{all}}$ is either always rank deficient or full row rank almost surely. Hence, if $\mathbf{H}_{\mathrm{all}}$ is full row rank under one realization of channel state $\{\mathbf{H}_{kj}\}$, Problem~\ref{pro:pIA_c} has solution almost surely in this network.{~\hfill\IEEEQED}
\end{Cor}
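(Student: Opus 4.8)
\emph{Proof proposal.}
The plan is to treat the entries of $\mathbf{H}_{\mathrm{all}}$ as polynomial functions of the channel realization. By \eqref{eqn:hu}--\eqref{eqn:hv} and the layout in Fig.~\ref{fig_Hall}, every entry of $\mathbf{H}_{\mathrm{all}}$ is either $0$ or a single coefficient $h_{kj}(p,q)$; hence, collecting all random channel entries into a point $\mathbf{h}\in\mathbb{C}^{D}$ for the appropriate $D$, each $t\times t$ minor of $\mathbf{H}_{\mathrm{all}}$ is a polynomial in $\mathbf{h}$, and $\mathrm{rank}(\mathbf{H}_{\mathrm{all}})\ge t$ holds iff at least one such minor is non-zero. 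This reduces the claim to the standard fact that the rank of a polynomial matrix is generically constant.

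First I would let $r$ denote the maximum of $\mathrm{rank}(\mathbf{H}_{\mathrm{all}})$ over all $\mathbf{h}\in\mathbb{C}^{D}$; since the rank takes finitely many values, this maximum is attained, say at $\mathbf{h}^{\star}$, and satisfies $r\le R$, where $R$ is the number of rows of $\mathbf{H}_{\mathrm{all}}$. Choosing an $r\times r$ submatrix that is non-singular at $\mathbf{h}^{\star}$ and letting $p(\mathbf{h})$ be its determinant, we have $p(\mathbf{h}^{\star})\neq 0$, so $p$ is not the zero polynomial, and its vanishing set $\mathcal{V}(p)\subsetneq\mathbb{C}^{D}$ is a proper algebraic subvariety, hence Lebesgue-null. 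Because the channel entries are independent and drawn from continuous distributions, the law of $\mathbf{h}$ is absolutely continuous, so $\Pr\{\mathbf{h}\in\mathcal{V}(p)\}=0$. Off $\mathcal{V}(p)$ we have $\mathrm{rank}(\mathbf{H}_{\mathrm{all}})\ge r$, and $\mathrm{rank}(\mathbf{H}_{\mathrm{all}})\le r$ everywhere by definition of $r$; therefore $\mathrm{rank}(\mathbf{H}_{\mathrm{all}})=r$ almost surely.

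The dichotomy now follows: if $r=R$ then $\mathbf{H}_{\mathrm{all}}$ is full row rank almost surely, while if $r<R$ then $\mathrm{rank}(\mathbf{H}_{\mathrm{all}})\le r<R$ for \emph{every} realization, i.e.\ $\mathbf{H}_{\mathrm{all}}$ is always rank deficient. For the last assertion, if $\mathbf{H}_{\mathrm{all}}$ is full row rank under some realization, that realization certifies $r\ge R$, hence $r=R$, so by the dichotomy $\mathbf{H}_{\mathrm{all}}$ is full row rank almost surely; Thm.~\ref{thm:feasible_s} then yields that Problem~\ref{pro:pIA_c} is solvable almost surely.

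I do not expect a serious obstacle here; the one point worth stating carefully is that the zero locus of a non-trivial polynomial over $\mathbb{C}^{D}$ is Lebesgue-null — a proper complex subvariety has real codimension at least two — and hence a null event for the continuously distributed channel. One should also remark that a coefficient $h_{kj}(p,q)$ occurring in several positions of $\mathbf{H}_{\mathrm{all}}$ causes no difficulty, since the minors remain polynomials in $\mathbf{h}$, and that the chosen $r\times r$ minor depends only on the configuration $\chi$, not on the realization, which is precisely what makes ``configuration dominates feasibility'' meaningful.
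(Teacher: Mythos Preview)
Your proposal is correct and follows essentially the same route as the paper: both arguments observe that the minors of $\mathbf{H}_{\mathrm{all}}$ are polynomials in the channel entries whose form is fixed by $\chi$, and then invoke the fact that a non-zero polynomial vanishes on a Lebesgue-null set (the paper isolates this as Lem.~\ref{lem:nonzero}, proved by induction on the number of variables, whereas you cite the codimension of a proper subvariety). The only cosmetic difference is that the paper works directly with the $C\times C$ minors and the full-row-rank dichotomy, while you first establish that the rank equals its generic value $r$ almost surely and then specialize to $r=R$ versus $r<R$; the content is the same.
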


Following Cor.~\ref{cor:confdomin}, we define the notation of ``IA feasible network", if Problem~\ref{pro:pIA_c} has solution almost surely in this network.

\begin{Remark}[Numerical and Analytical Contributions] Cor.~\ref{cor:confdomin} highlights the fact that the network configuration $\chi$, rather than the specific channel state $\{\mathbf{H}_{kj}\}$ dominates the IA feasibility. This phenomenon is useful in both numerical test and theoretical analysis: In practice, to test the IA feasibility of a specific network, we only need to randomly generate \emph{one} channel state and check if $\mathbf{H}_{\mathrm{all}}$ is full rank. Similarly, to prove that a certain category of network is IA feasible, we can try to construct some specific channel state that makes $\mathbf{H}_{\mathrm{all}}$ full row rank for all the networks in this category. In fact, we will exploit this property in the proof of Cor.~\ref{cor:sym}.~\hfill \IEEEQED
\end{Remark}
\mynote{The new result that we get this time.}
\begin{Cor}[Scalability of IA Feasibility]\label{cor:scaleable} If a network with configuration $\chi=\{(M_1,N_1,d_1),...,(M_K,N_K,d_K)\}$ is IA feasible, then scaling it by a factor, i.e. $\chi^c=\{(cM_1,cN_1,cd_1),...,(cM_K,cN_K,cd_K)\}$, $c\in\mathbb{N}$ preserves its IA feasibility.{~\hfill\IEEEQED}
\end{Cor}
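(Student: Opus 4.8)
\emph{Proof idea.} The plan is to reduce the statement to a single rank computation via Cor.~\ref{cor:confdomin} and Thm.~\ref{thm:feasible_s}. On one hand, since $\chi$ is IA feasible, the ``always rank deficient'' branch of Cor.~\ref{cor:confdomin} is excluded for $\chi$, so I may fix one channel realization $\{\mathbf H_{kj}\}$ of $\chi$ under which $\mathbf{H}_{\mathrm{all}}$ is full row rank; by Thm.~\ref{thm:feasible_s} this realization also admits a transceiver solution $\{\mathbf U_k,\mathbf V_j\}$ of Problem~\ref{pro:pIA_c}. On the other hand, by Cor.~\ref{cor:confdomin} and Thm.~\ref{thm:feasible_s}, to prove $\chi^c$ IA feasible it suffices to exhibit \emph{one} realization $\{\tilde{\mathbf H}_{kj}\}$ of $\chi^c$ whose matrix $\mathbf{H}_{\mathrm{all}}$ (write $\mathbf{H}_{\mathrm{all}}^c$) is full row rank --- and a structured, non-generic realization is allowed here, since Cor.~\ref{cor:confdomin} then promotes full rank of a single realization to full rank almost surely.

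The realization I would use is $\tilde{\mathbf H}_{kj}:=\mathbf H_{kj}\otimes\mathbf I_c\in\mathbb{C}^{cN_k\times cM_j}$, i.e.\ each entry $h_{kj}(p,q)$ replaced by the block $h_{kj}(p,q)\mathbf I_c$; its dimensions match $\chi^c$. Why this preserves feasibility is transparent at the transceiver level: taking $\tilde{\mathbf U}_k:=\mathbf U_k\otimes\mathbf I_c$ and $\tilde{\mathbf V}_j:=\mathbf V_j\otimes\mathbf I_c$ one gets $\tilde{\mathbf U}_k^H\tilde{\mathbf H}_{kj}\tilde{\mathbf V}_j=(\mathbf U_k^H\mathbf H_{kj}\mathbf V_j)\otimes\mathbf I_c$, which is $\mathbf 0$ for $k\neq j$ and has rank $c\,d_k$ for $k=j$, so \eqref{eqn:drank_c}--\eqref{eqn:czero_c} hold for $\chi^c$. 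To actually feed this into Cor.~\ref{cor:confdomin} I would establish the sharper fact that $\mathbf{H}_{\mathrm{all}}^c$ evaluated at $\{\tilde{\mathbf H}_{kj}\}$ is, after a permutation of its rows and a permutation of its columns, the block-diagonal matrix $\mbox{diag}[c^2](\mathbf{H}_{\mathrm{all}})$, where $\mathbf{H}_{\mathrm{all}}$ is the base matrix of $\chi$ at $\{\mathbf H_{kj}\}$. This follows by inspecting \eqref{eqn:hu}--\eqref{eqn:hv}: the entries of $\tilde{\mathbf H}_{kj}$ feeding those formulas are $\tilde h_{kj}(p,q)=h_{kj}(\lceil p/c\rceil,\lceil q/c\rceil)$ when $p$ and $q$ occupy the same position inside their length-$c$ blocks and $0$ otherwise, so regrouping the $cd_k$ and $cd_j$ index ranges so as to isolate that within-block coordinate carries each submatrix $\mathbf H^{\mathrm U}_{kj}$ (resp.\ $\mathbf H^{\mathrm V}_{kj}$) of $\mathbf{H}_{\mathrm{all}}^c$ into $\mathbf I_{c^2}\otimes\mathbf H^{\mathrm U}_{kj}$ (resp.\ $\mathbf I_{c^2}\otimes\mathbf H^{\mathrm V}_{kj}$) of the base configuration; crucially, one pair of permutations achieves this for all blocks at once, since each block's location in $\mathbf{H}_{\mathrm{all}}$ is dictated only by the pair $(k,j)$. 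Granting this, each base block $\mathbf H^{\mathrm U}_{kj},\mathbf H^{\mathrm V}_{kj}$ carries $d_kd_j$ rows, so $\mathbf{H}_{\mathrm{all}}^c$ has exactly $c^2$ times as many rows as $\mathbf{H}_{\mathrm{all}}$, while $\mbox{rank}\big(\mbox{diag}[c^2](\mathbf{H}_{\mathrm{all}})\big)=c^2\,\mbox{rank}(\mathbf{H}_{\mathrm{all}})$; hence $\mathbf{H}_{\mathrm{all}}^c$ is full row rank whenever $\mathbf{H}_{\mathrm{all}}$ is, and the first paragraph completes the proof.

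I expect the one genuinely technical step to be the index bookkeeping just invoked: writing down the single row permutation and single column permutation that simultaneously bring every $\mathbf H^{\mathrm U}_{kj}$ and $\mathbf H^{\mathrm V}_{kj}$ block into the form $\mathbf I_{c^2}\otimes(\cdot)$. This demands care with the ordering conventions in \eqref{eqn:hu}--\eqref{eqn:hv} and with which rows and columns of $\tilde{\mathbf H}_{kj}$ are selected once $(M_k,N_k,d_k)\mapsto(cM_k,cN_k,cd_k)$; in particular the identity factor must be placed on the right ($\mathbf H_{kj}\otimes\mathbf I_c$, not $\mathbf I_c\otimes\mathbf H_{kj}$) so that the first $cd_k$ rows and first $cd_j$ columns of $\tilde{\mathbf H}_{kj}$ sit over the first $d_k$ rows and first $d_j$ columns of $\mathbf H_{kj}$, matching the split at $(d_k,d_j)$ hard-wired into \eqref{eqn:hu}--\eqref{eqn:hv}. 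The remaining ingredients --- the mixed-product rule $(\mathbf A\otimes\mathbf B)(\mathbf C\otimes\mathbf D)=(\mathbf{AC})\otimes(\mathbf{BD})$, $\mbox{rank}(\mathbf A\otimes\mathbf B)=\mbox{rank}(\mathbf A)\,\mbox{rank}(\mathbf B)$, and the row count --- are routine.
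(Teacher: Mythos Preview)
Your argument is correct and lands on the same structural fact the paper uses: for the scaled network, $\mathbf{H}_{\mathrm{all}}^c$ decomposes (after row/column permutations) into $c^2$ copies of the base $\mathbf{H}_{\mathrm{all}}$ sitting on disjoint rows and columns. The paper reaches this by keeping the $\chi^c$ channels generic, partitioning each $\mathbf{H}^{\mathrm U}_{kj}(s)$, $\mathbf{H}^{\mathrm V}_{kj}(s)$ into $c^2$ blocks, splitting $\mathbf{H}_{\mathrm{all}}^c=\mathbf{H}^{\mathrm D}_{\mathrm{all}}+\mathbf{H}^{\tilde{\mathrm D}}_{\mathrm{all}}$, and using independence of the off-diagonal blocks to reduce to $\mathbf{H}^{\mathrm D}_{\mathrm{all}}$, which then exhibits the $c^2$ independent copies. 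You instead specialize at once to the deterministic realization $\tilde{\mathbf H}_{kj}=\mathbf H_{kj}\otimes\mathbf I_c$, which simultaneously kills the off-diagonal blocks and makes all $c^2$ diagonal copies identical; Cor.~\ref{cor:confdomin} then lifts full row rank at this single point to full row rank almost surely. The two routes are the same idea with different specializations; your Kronecker packaging is a bit cleaner and spares the separate independence argument, while the paper's version keeps the $c^2$ copies independent and random, which is more in the spirit of the genericity arguments used elsewhere.

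One caveat you should be aware of (and which the paper's own proof shares): you infer from ``$\chi$ is IA feasible'' that the ``always rank deficient'' branch of Cor.~\ref{cor:confdomin} is excluded, i.e.\ that $\mathbf{H}_{\mathrm{all}}$ for $\chi$ is full row rank almost surely. Strictly, ``IA feasible'' is defined as ``Problem~\ref{pro:pIA_c} has a solution almost surely,'' and Thm.~\ref{thm:feasible_s} only gives the implication in one direction. The paper's proof makes the same tacit reading (its line ``since these submatrices are full rank almost surely'' presumes exactly this), so in context the intended hypothesis is that the sufficient condition of Thm.~\ref{thm:feasible_s} holds for $\chi$; with that reading your argument is complete. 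Your side remark about the transceiver-level construction $\tilde{\mathbf U}_k=\mathbf U_k\otimes\mathbf I_c$, $\tilde{\mathbf V}_j=\mathbf V_j\otimes\mathbf I_c$ is correct but, as you note, by itself only yields feasibility at the single Kronecker realization; it is the $\mathbf{H}_{\mathrm{all}}^c$ computation together with Cor.~\ref{cor:confdomin} that promotes this to almost-sure feasibility.
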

}

\subsubsection{Corollaries Applicable to Special Configurations}
In the following analysis, we show that the necessary conditions in Thm.~\ref{thm:feasible} and the sufficient condition in Thm.~\ref{thm:feasible_s} {match} in some network configurations. The conditions align in wider range of configurations than the existing results in \cite{JafarDf,Misc:IAfeasible_Tse,Conf:IAfeasible_Luo,Jul:IAfeasible_Luo}.

\begin{Cor}[Symmetric Case]\label{cor:sym}When the network configuration $\chi$ is symmetric, i.e. $d_k\!=\!d$, $M_k\!=\!M$, $N_k\!=\!N$, $\forall k\!\in\!\{1,...,\!K\}$, and $\min\{M,N\}\ge 2d$, Problem~\ref{pro:pIA_c} has solution almost surely if and only if inequality \eqref{eqn:f3_sym} is true, where
{\begin{eqnarray}M+N-(K+1)d\ge0. \label{eqn:f3_sym}
\end{eqnarray}\vspace{-10pt}}

~\hfill\IEEEQED
\end{Cor}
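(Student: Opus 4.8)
The plan is to prove Corollary~\ref{cor:sym} by combining the necessary conditions of Thm.~\ref{thm:feasible} with the sufficient condition of Thm.~\ref{thm:feasible_s}, specialized to the symmetric configuration $\chi=\{(M,N,d)^{\times K}\}$ with $\min\{M,N\}\ge 2d$.

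\medskip

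\noindent\textbf{Necessity of \eqref{eqn:f3_sym}.} First I would apply Thm.~\ref{thm:feasible} with the full interference set $\mathcal{J}_{\mathrm{sub}}=\mathcal{J}$, so that every index $j\in\{1,\dots,K\}$ appears both as a ``$(\cdot,j)$'' and as a ``$(j,\cdot)$'' entry. Then \eqref{eqn:f3} reads $Kd(M-d)+Kd(N-d)\ge K(K-1)d^2$, i.e. $M-d+N-d\ge (K-1)d$, which rearranges exactly to $M+N-(K+1)d\ge 0$. Hence \eqref{eqn:f3_sym} is necessary; note this uses only \eqref{eqn:f3} and not \eqref{eqn:f1}--\eqref{eqn:f2}, which are automatically implied in this regime (e.g. \eqref{eqn:f1} holds since $\min\{M,N\}\ge 2d\ge d$).

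\medskip

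\noindent\textbf{Sufficiency of \eqref{eqn:f3_sym}.} By Thm.~\ref{thm:feasible_s} it suffices to show that, when $M+N-(K+1)d\ge0$ and $\min\{M,N\}\ge 2d$, the matrix $\mathbf{H}_{\mathrm{all}}$ is full row rank for \emph{some} choice of channel realizations $\{\mathbf{H}_{kj}\}$ — by Cor.~\ref{cor:confdomin}, one such realization certifies full rank almost surely. Here I would exploit the ``construct a convenient channel'' strategy highlighted in the Remark after Cor.~\ref{cor:confdomin}. The matrix $\mathbf{H}_{\mathrm{all}}$ has $\sum_{k\neq j}d_kd_j=K(K-1)d^2$ rows and $\sum_k d_k(M_k-d_k)+\sum_k d_k(N_k-d_k)=Kd(M+N-2d)$ columns; full row rank is dimensionally possible precisely because $Kd(M+N-2d)\ge K(K-1)d^2 \Leftrightarrow$ \eqref{eqn:f3_sym}. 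The plan is to pick structured (e.g. generic Vandermonde-like, or sparse ``staircase'') entries for each $\mathbf{H}_{kj}$ so that the block structure of $\mathbf{H}^{\mathrm{U}}_{kj}$, $\mathbf{H}^{\mathrm{V}}_{kj}$ in \eqref{eqn:hu}--\eqref{eqn:hv} decouples, and then show by a counting/Hall-type or direct Gaussian-elimination argument that the rows are linearly independent. A natural route is induction on $K$: assume the claim for $K-1$ pairs, add the $K$-th pair, and verify that the newly appended rows (those indexed by $(K,j)$ and $(k,K)$) remain independent of the earlier ones given the extra column budget the inequality guarantees; the slack $M+N-2d-(K-1)d\ge 0$ is exactly the per-stream column surplus that makes this step go through.

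\medskip

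\noindent The main obstacle I anticipate is the sufficiency direction, specifically producing a channel realization for which $\mathbf{H}_{\mathrm{all}}$ is provably full row rank \emph{uniformly} over all $K$, $M$, $N$, $d$ satisfying \eqref{eqn:f3_sym} and $\min\{M,N\}\ge2d$. The $\mathrm{diag}[\cdot]$ / $\mathrm{diag}^{n}(\cdot)$ block pattern of $\mathbf{H}^{\mathrm{U}}_{kj}$ and $\mathbf{H}^{\mathrm{V}}_{kj}$ means that the column supports of different row-blocks overlap in a constrained way, so a naive genericity argument does not immediately apply; one must track precisely which columns (the $d_k(M_k-d_k)$ ``$\mathbf{V}$-side'' and $d_j(N_j-d_j)$ ``$\mathbf{U}$-side'' coordinates) are available to absorb each block of $d_kd_j$ rows. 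I expect the cleanest resolution is to reduce the full-rank question to a bipartite matching / deficiency-form Hall condition on the block incidence structure, show that \eqref{eqn:f3_sym} together with $\min\{M,N\}\ge2d$ implies Hall's condition for every sub-collection of row-blocks (this is where the quantifier over $\mathcal{J}_{\mathrm{sub}}$ in Thm.~\ref{thm:feasible} gets mirrored on the sufficiency side), and then invoke the generic-rank principle that a matching-feasible pattern with independent random nonzero entries has full rank almost surely — closing the loop with Cor.~\ref{cor:confdomin}.
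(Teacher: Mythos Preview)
Your necessity argument is correct and matches the paper exactly: take $\mathcal{J}_{\mathrm{sub}}=\mathcal{J}$ in \eqref{eqn:f3} and simplify.

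For sufficiency, however, there is a genuine gap in your plan. You correctly flag that the $\mbox{diag}[\cdot]$ structure in \eqref{eqn:hu}--\eqref{eqn:hv} prevents a naive genericity argument, but you then propose to close the loop by ``invok[ing] the generic-rank principle that a matching-feasible pattern with independent random nonzero entries has full rank almost surely.'' The problem is that the nonzero entries of $\mathbf{H}_{\mathrm{all}}$ are \emph{not} independent: each scalar $h_{kj}(p,d_j+m)$ appears in $d_j$ distinct rows of $\mathbf{H}^{\mathrm{V}}_{kj}$, and dually $h_{kj}(d_k+n,q)$ appears in $d_k$ rows of $\mathbf{H}^{\mathrm{U}}_{kj}$. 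So even if Hall's condition holds on the block incidence structure, the resulting square submatrix picked out by a matching can have algebraically dependent rows, and the Edmonds/Schwartz--Zippel style ``generic full rank from support pattern'' argument does not apply. This is precisely why the paper's Lem.~\ref{lem:suf1} (which \emph{is} a Hall-type constraint-allocation argument, and is what you are effectively re-discovering) needs the extra symmetry requirement \eqref{eqn:cv3} on the allocation; that requirement is what forces the repeated entries to land in a single block so that independence is recovered. The paper uses that route only for the \emph{divisible} case (Cor.~\ref{cor:div}), where \eqref{eqn:cv3} can be satisfied; in the general symmetric case with $\gcd(d,M)>1$ not assumed, it cannot.

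What the paper actually does for Cor.~\ref{cor:sym} is quite different from either your induction-on-$K$ sketch or the Hall approach. It first reduces to the boundary $M+N=(K+1)d$, then uses the scalability result Cor.~\ref{cor:scaleable} to assume $\gcd(d,M)=1$, and then \emph{explicitly constructs} the $\{\mathbf{H}^{\mathrm{U}}_{kj}\}$ blocks via a cyclic, modular-arithmetic placement (the functions $P(k,j)$, $R(k,j)$ in \eqref{eqn:pos}--\eqref{eqn:rownumber}). Two auxiliary lemmas do the heavy lifting: Lem.~\ref{lem:vdomin} reduces full-rankness of $\mathbf{H}_{\mathrm{all}}$ to linear independence of certain projected spaces $\mathcal{H}^{\mathrm{V}}_k$ (this is where $N\ge 2d$ is used), and Lem.~\ref{lem:structure} is a tailored block-matrix lemma (conditions S1--S5) certifying that the resulting structured matrix is full rank. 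Your proposal does not contain the key ideas behind either lemma, nor the explicit cyclic construction, and the induction on $K$ you suggest would still have to confront the repeated-entry obstruction at each step.
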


\begin{Remark}[Backward Compatible to \cite{Misc:IAfeasible_Tse}] If we further assume that $M=N$ and $K\ge 3$, the feasibility conditions in Cor.~\ref{cor:sym} is reduced to $2N-(K+1)d\ge0$, which is consistent with the IA feasibility conditions given in \cite{Misc:IAfeasible_Tse}.
\end{Remark}

\begin{Cor}[``Divisible" Case]\label{cor:div}When the network configuration $\chi$ satisfies 1) $d_k=d$, $\forall k$, and 2) $d|N_k$, $\forall k$ or $d|M_k$, $\forall k$, Problem~\ref{pro:pIA_c} has solution almost surely if and only if inequality \eqref{eqn:f3_divisor} is satisfied, where\vspace{-5pt}
\begin{eqnarray}
\!\!\!\sum_{j:(\cdot,j)\in\mathcal{J}_{\mathrm{sub}}}\!\!\!\!\!\!\!\!(M_j-d)
+\!\!\!\!\!\!\!\sum_{k:(k,\cdot)\in\mathcal{J}_{\mathrm{sub}}}\!\!\!\!\!\!\!\!(N_k-d)\ge d|\mathcal{J}_{\mathrm{sub}}|,\;\forall\mathcal{J}_{\mathrm{sub}}\subseteq \mathcal{J}.\label{eqn:f3_divisor}
\end{eqnarray}\vspace{-10pt}

~\hfill\IEEEQED
\end{Cor}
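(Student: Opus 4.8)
\emph{Necessity.} The two implications are handled separately, and this one is a routine specialization of Theorem~\ref{thm:feasible}. Suppose Problem~\ref{pro:pIA_c} is solvable; then \eqref{eqn:f3} holds for every $\mathcal{J}_{\mathrm{sub}}\subseteq\mathcal{J}$. Putting $d_k=d$ for all $k$ turns the left side of \eqref{eqn:f3} into $\sum_{j:(\cdot,j)\in\mathcal{J}_{\mathrm{sub}}}d(M_j-d)+\sum_{k:(k,\cdot)\in\mathcal{J}_{\mathrm{sub}}}d(N_k-d)$ and the right side into $|\mathcal{J}_{\mathrm{sub}}|\,d^{2}$; dividing through by $d$ gives exactly \eqref{eqn:f3_divisor}. (We also use below the trivial bounds $\min\{M_k,N_k\}\ge d$, which are part of \eqref{eqn:f1}.)

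\emph{Sufficiency --- reduction.} Assume \eqref{eqn:f3_divisor}. Transposing every channel matrix, $\mathbf{H}_{kj}\mapsto\mathbf{H}_{jk}^{T}$, interchanges the roles of transmitters and receivers: it swaps $M\leftrightarrow N$, reverses every link, clearly preserves the solvability of Problem~\ref{pro:pIA_c}, and leaves the family of inequalities \eqref{eqn:f3_divisor} invariant. Hence we may assume WLOG that $d\mid N_k$ for all $k$, so that $N_k-d$ is a multiple of $d$. By Corollary~\ref{cor:confdomin} it suffices to produce one realization $\{\mathbf{H}_{kj}\}$ for which $\mathbf{H}_{\mathrm{all}}$ has full row rank, and I would prove that this holds for generic channels. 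Recall from Fig.~\ref{fig_Hall} and \eqref{eqn:hu}--\eqref{eqn:hv} that the block row of $\mathbf{H}_{\mathrm{all}}$ indexed by a link $(k,j)$ carries the $d^{2}$ scalar equations of $\mathbf{U}^{H}_k\mathbf{H}_{kj}\mathbf{V}_j=\mathbf{0}$, indexed by $(p,q)\in\{1,\dots,d\}^{2}$, and that equation $(p,q)$ is supported only on the $p$-th column group of $\mathbf{U}_k$ (of width $N_k-d$), through the generic row $(h_{kj}(d{+}1,q),\dots,h_{kj}(N_k,q))$, and on the $q$-th column group of $\mathbf{V}_j$ (of width $M_j-d$), through $(h_{kj}(p,d{+}1),\dots,h_{kj}(p,M_j))$. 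The idea is to assign each of these $d^{2}K(K-1)$ equations to one of its two column groups without ever exceeding a group's width; genericity then lets us pick a distinct column for every equation so that the resulting square submatrix of $\mathbf{H}_{\mathrm{all}}$ is nonsingular, which gives full row rank.

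\emph{Sufficiency --- the assignment.} I would obtain the assignment from an integral network flow: a source feeds $d$ units to each link node $(k,j)$, which may forward to either of the two meta-nodes $\mathbf{U}_k$ and $\mathbf{V}_j$, and these drain to a sink along edges of capacity $N_k-d$ and $M_j-d$ respectively. By max-flow/min-cut, the binding cuts are indexed by subsets $\mathcal{J}_{\mathrm{sub}}\subseteq\mathcal{J}$, and a flow of total value $d|\mathcal{J}|$ exists iff $\sum_{k:(k,\cdot)\in\mathcal{J}_{\mathrm{sub}}}(N_k-d)+\sum_{j:(\cdot,j)\in\mathcal{J}_{\mathrm{sub}}}(M_j-d)\ge d|\mathcal{J}_{\mathrm{sub}}|$ for every $\mathcal{J}_{\mathrm{sub}}$, i.e.\ iff \eqref{eqn:f3_divisor}. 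Integrality yields integers $s_{kj}\in\{0,\dots,d\}$, the flow through $\mathbf{U}_k$; scaling back up by $d$, link $(k,j)$ must send $d\,s_{kj}$ of its $d^{2}$ equations to $\mathbf{U}_k$ and the other $d(d-s_{kj})$ to $\mathbf{V}_j$. Since $d\,s_{kj}$ is a multiple of $d$, I can take the set $S_{kj}\subseteq\{1,\dots,d\}^{2}$ of the former to be balanced --- exactly $s_{kj}$ entries in each row $p$ and each column $q$ (a cyclic staircase pattern does this). Then the $p$-th column group of $\mathbf{U}_k$ receives $\sum_{j\neq k}s_{kj}\le N_k-d$ equations for \emph{every} $p$, and the $q$-th column group of $\mathbf{V}_j$ receives $\sum_{k\neq j}(d-s_{kj})\le M_j-d$ for every $q$, so every group stays within capacity and a distinct column can indeed be chosen for every equation.

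\emph{Main obstacle.} The delicate point is the per-group bookkeeping just used. The flow argument controls only the \emph{totals} routed into each meta-node, whereas the block-diagonal $\mbox{diag}[d](\cdot)$ form of $\mathbf{H}^{\mathrm{U}}_{kj}$ and $\mathbf{H}^{\mathrm{V}}_{kj}$ forces each of the $d$ column sub-groups of $\mathbf{U}_k$ (resp.\ of $\mathbf{V}_j$) to respect its own width $N_k-d$ (resp.\ $M_j-d$) separately. The divisibility hypothesis $d\mid N_k$ is exactly what bridges this gap: it makes $N_k-d$ a multiple of $d$, which is what lets the max flow be realized at multiples of $d$ and each link's load be spread evenly over its $d$ sub-groups. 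The remaining step --- certifying that the column selection above really yields a non-vanishing maximal minor of $\mathbf{H}_{\mathrm{all}}$, in spite of the entry repetition introduced by the $\mbox{diag}[d](\cdot)$ blocks --- is the other part that needs care; I would handle it by a sparse specialization of the channel matrices tailored to $\{S_{kj}\}$, under which the selected submatrix, after permuting rows and columns, decouples into independent generic blocks.
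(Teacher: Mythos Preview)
Your necessity argument and the max-flow/min-cut reduction are correct and amount to the paper's Lemma~\ref{lem:nes1} recast in network-flow language---indeed cleaner than the paper's pressure-transfer-tree phrasing. The transpose reduction to $d\mid N_k$ is also what the paper does. But the sufficiency argument has a genuine gap at precisely the step you flag as the main obstacle, and you have misidentified where the divisibility hypothesis is actually doing work.

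Your doubly-balanced sets $S_{kj}$ do \emph{not} satisfy the symmetry condition~\eqref{eqn:cv3} that drives Lemma~\ref{lem:suf1}, except in the trivial cases $s_{kj}\in\{0,d\}$. And without~\eqref{eqn:cv3} the sparse specialization you sketch cannot decouple the selected submatrix. Concretely: if for some link $(k,j)$ and some $q$ you have $(p_1,q)\in S_{kj}$ but $(p_2,q)\notin S_{kj}$, then the entries $h_{kj}(d{+}n,q)$ must be nonzero (so that the $\mathbf U_k$-side row for $(p_1,q)$ is usable) and simultaneously zero (to kill the $\mathbf U_k$-side support of the $\mathbf V_j$-assigned row $(p_2,q)$)---a contradiction. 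The $\mbox{diag}[d](\cdot)$ repetition means exactly that the $\mathbf U$-side and $\mathbf V$-side blocks of $\mathbf H_{\mathrm{all}}$ cannot be made independent unless the allocation is constant in one of the indices $p$ or $q$, which is precisely~\eqref{eqn:cv3}. This is why the determinant argument in Appendix~\ref{pf_lem_cor:sym} needs~\eqref{eqn:cv3} to conclude that $\mathbf H^{22}_{\mathrm{all}}$ is independent of the other three blocks.

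Divisibility therefore enters not at the flow stage---your integral flow exists whenever~\eqref{eqn:f3_divisor} holds, with no divisibility needed---but at the stage of upgrading the allocation to one satisfying~\eqref{eqn:cv3}. Notice that if your argument went through as written it would prove that properness alone (with $d_k\equiv d$) implies IA feasibility, which is false. The paper's route is to rerun the allocation algorithm (Alg.~\ref{alg:c2}) under the extra constraint that all moves are symmetric in $q$, so that each link is allocated in full-row chunks of size $d$; the per-stream loads at every $\tilde{\mathbf u}_{kp}$ are then multiples of $d$, and the bound $\sum_{j,q}c^{\mathrm r}_{kjpq}\le N_k-d$ remains satisfiable throughout precisely because $d\mid(N_k-d)$. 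That $q$-symmetric allocation is what feeds into Lemma~\ref{lem:suf1}. To repair your proof you would need either to force $s_{kj}\in\{0,d\}$ in the flow (which requires a divisibility argument of the same flavor) or to find a genuinely different certificate for the nonvanishing minor that tolerates mixed rows---and the latter is not the routine step you suggest.
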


\begin{Remark}[Backward Compatible to  \cite{JafarDf}, \cite{Conf:IAfeasible_Luo}, and \cite{Jul:IAfeasible_Luo}] If we further assume that $d=1$, then $d|M_k$, $d|N_k$ for all $k\in\{1,...,K\}$. In this case, Cor.~\ref{cor:div} corresponds to that in \cite{JafarDf}. Similarly, if we require both $N_k$ and $M_k$ are divisible by $d$, Cor.~\ref{cor:div} is reduced to the feasibility conditions given by \cite{Conf:IAfeasible_Luo,Jul:IAfeasible_Luo}.~\hfill\IEEEQED
\end{Remark}

\subsection{Proof of the Feasibility Conditions}
\label{sec:proof}
\mynote{Prove Thm.~\ref{thm:feasible} by analyzing the dimension of spaces.}
\subsubsection{Proof of Theorem~\ref{thm:feasible}}
Note that the necessity of \eqref{eqn:f1} and \eqref{eqn:f3} is proved in \cite{Misc:IAfeasible_Tse,Conf:IAfeasible_Luo,Jul:IAfeasible_Luo}{.}  We need to prove the necessity of \eqref{eqn:f2}.

Suppose Problem~\ref{pro:pIA_c} is feasible. {Without loss of generality, assume for a certain set $\mathcal{J}^\dag_{\mathrm{sub}}\subseteq \mathcal{J}$,
\begin{eqnarray}\sum_{k:(k,\cdot)\in \mathcal{J}^\dag_{\mathrm{sub}}}\!\!\!\!N_k=\max\bigg\{\sum_{j:(\cdot,j)\in \mathcal{J}^\dag_{\mathrm{sub}}}\!\!\!\!M_j,\sum_{k:(k,\cdot)\in\mathcal{J}^\dag_{\mathrm{sub}}}
\!\!\!\!N_k\bigg\}.\label{eqn:antenna}
\end{eqnarray}
Then for $\mathcal{J}^\dag_{\mathrm{sub}}$,} \eqref{eqn:f2} can be rewritten as:
\begin{eqnarray}\sum_{k:(k,\cdot)\in \mathcal{J}^\dag_{\mathrm{sub}}}\!\!\!\!N_k \ge\!\!\!\! \sum_{j:\;(\cdot,j) {\scriptsize\mbox{ or }} (j,\cdot)\in \mathcal{J}^\dag_{\mathrm{sub}}}\!\!\!\!\!\!\!d_j\label{eqn:ftarget}\end{eqnarray}

We will prove that if Problem~\ref{pro:pIA_c} has a solution, \eqref{eqn:ftarget} must be true. Denote $\mathcal{T}$ as the set of the indices which appears in $\mathcal{J}^\dag_{\mathrm{sub}}$ as Tx index but not Rx index, i.e. $\mathcal{T}\triangleq\{j_1,...,j_m\}=\{j: {\exists k\,\mbox{s.t.}}\,(k,j)\in \mathcal{J}^\dag_{\mathrm{sub}} \mbox{ and } (j,k)\not\in\mathcal{J}^\dag_{\mathrm{sub}}\}$, and denote $\mathcal{R}$ as {the set of indices} which appears in $\mathcal{J}^\dag_{\mathrm{sub}}$ as Rx index, i.e. $\mathcal{R}\triangleq\{k_1,...,k_n\}=\{k:{\exists j\,\mbox{s.t.}}\,(k,j)\in \mathcal{J}^\dag_{\mathrm{sub}}\}$. Denote $\{\mathbf{U}^*_k,\mathbf{V}^*_k\}$ as one of the solution. Construct three matrices:
\begin{eqnarray*}
\mathbf{V}^*_{\mathcal{T}}\!\!\!\!\!&=&\!\!\!\!\!\!\left[\!\!\!\begin{array}{c@{,}c@{,}c@{,}c}
\mathbf{V}^*_{j_1}&\mathbf{0}&\cdots&\mathbf{0}\\
\mathbf{0}&\mathbf{V}^*_{j_2}&\cdots&\mathbf{0}\\
&\multicolumn{2}{c}{\cdots\cdots}&\\
\mathbf{0}&\mathbf{0}&\cdots&\mathbf{V}^*_{j_n}\end{array}\!\!\!\right]\!\!,\,
\mathbf{U}^*_{\mathcal{R}}\!=\!\left[\!\!\!\begin{array}{c@{,}c@{,}c@{,}c}
\mathbf{U}^*_{k_1}&\mathbf{0}&\cdots&\mathbf{0}\\
\mathbf{0}&\mathbf{U}^*_{k_2}&
\cdots&\mathbf{0}\\
&\multicolumn{2}{c}{\cdots\cdots}&\\
\mathbf{0}&\mathbf{0}&\cdots&\mathbf{U}^*_{k_m}\end{array}\!\!\!\right]\!\!,
\\
\mathbf{H}_{\mathcal{J}_{\mathrm{sub}}}\!\!\!\!\!&=&\!\!\!\!\!\!\left[\!\!\!\begin{array}{c@{,}c@{,}c@{,}c}
\mathbf{H}_{k_1j_1}&\mathbf{H}_{k_1j_2}&\cdots&\mathbf{H}_{k_1j_n}\\
\mathbf{H}_{k_2j_1}&\mathbf{H}_{k_2j_2}&\cdots&\mathbf{H}_{k_2j_n}\\
&\multicolumn{2}{c}{\cdots\cdots}&\\
\mathbf{H}_{k_mj_1}&\mathbf{H}_{k_mj_2}&\cdots&\mathbf{H}_{k_mj_n}\end{array}\!\!\!\right]\!\!.
\end{eqnarray*}

Then from \eqref{eqn:drank_c} and \eqref{eqn:czero_c}, we have that:
\begin{eqnarray}
&\!\!\!\!&\!\!\!\!\!\!\!\!\!\!\!\!\dim\left(\mbox{span}(\mathbf{V}^*_{\mathcal{T}})\right)=\!\sum_{j\in\mathcal{T}}\!d_j
,\;\dim\left(\mbox{span}(\mathbf{U}^{*}_{\mathcal{R}})\right)=\!\sum_{k\in\mathcal{R}}\!d_k,\label{eqn:spandim}\\
&\!\!\!\!&\!\!\!\!\!\!\!\!\!\!\!\! \mathbf{U}^{*H}_{\mathcal{R}}\mathbf{H}_{\mathcal{J}^\dag_{\mathrm{sub}}}\mathbf{V}^*_{\mathcal{T}}=\mathbf{0}\label{eqn:null}.
\end{eqnarray}

From \eqref{eqn:antenna}, $\sum_{j\in\mathcal{T}}M_j\le \sum_{j:(\cdot,j)\in \mathcal{J}^\dag_{\mathrm{sub}}}M_j\le\sum_{k\in\mathcal{R}}N_k$, which means in $\mathcal{J}^\dag_{\mathrm{sub}}$, the number of rows is no more than the number of columns. Further note that the elements of $\mathbf{H}_{\mathcal{J}^\dag_{\mathrm{sub}}}$ are independent random variables, we have that $\mathcal{N}(\mathbf{H}_{\mathcal{J}^\dag_{\mathrm{sub}}})=\{\mathbf{0}\}$ almost surely. Therefore
\begin{eqnarray}
\dim\left(\mbox{span}(\mathbf{H}_{\mathcal{J}^\dag_{\mathrm{sub}}}
\mathbf{V}^*_{\mathcal{T}})\right)=\dim\left(\mbox{span}(\mathbf{V}^*_{\mathcal{T}})\right)
=\sum_{j\in\mathcal{T}}d_j\label{eqn:spandim2}
\end{eqnarray}
almost surely. From \eqref{eqn:null}, $\mbox{span}(\mathbf{H}_{\mathcal{J}^\dag_{\mathrm{sub}}}
\mathbf{V}^*_{\mathcal{T}})\perp \mbox{span}(\mathbf{U}^{*}_{\mathcal{R}})$, hence we have:
\begin{eqnarray}
\nonumber \sum_{k\in\mathcal{R}}N_k \!\!\!\!&\ge&\!\!\!\! \dim\left(\mbox{span}(\mathbf{H}_{\mathcal{J}^\dag_{\mathrm{sub}}}
\mathbf{V}^*_{\mathcal{T}})+\mbox{span}(\mathbf{U}^{*}_{\mathcal{R}})\right)
\\\nonumber \!\!\!\!&=&\!\!\!\!\dim\left(\mbox{span}(\mathbf{H}_{\mathcal{J}^\dag_{\mathrm{sub}}}
\mathbf{V}^*_{\mathcal{T}})\right)+\dim\left(\mbox{span}(\mathbf{U}^{*}_{\mathcal{R}})\right)
\\ \!\!\!\!&=&\!\!\!\!\sum_{j\in\mathcal{T}}d_j+\sum_{k\in\mathcal{R}}d_k=\!\!\!\! \sum_{j:\;(\cdot,j) {\scriptsize\mbox{ or }} (j,\cdot)\in \mathcal{J}^\dag_{\mathrm{sub}}}\!\!\!\!\!\!\!d_j\label{eqn:ftarget_2}
\end{eqnarray}

From \eqref{eqn:ftarget_2}, \eqref{eqn:ftarget} is true. This completes the proof.

\mynote{Prove Thm.~\ref{thm:feasible_s} by adopting the new tool developed}
\mynote{from algebraic geometry}
\subsubsection{Proof of Theorem~\ref{thm:feasible_s}}

\mysubnote{Emphasize the importance of the new tool.}

The IA feasibility issue is challenging as there is no systematic tool to address the solvability issue of high-degree polynomial equation sets. In the following analysis, we first elaborate three lemmas. As illustrated in Fig.~\ref{fig_flow}, these lemmas construct a new systematic tool that links \emph{linear} independence to the solvability of \emph{polynomial} equation sets. The newly developed tool is not only the key steps to handle the IA feasibility issue in this work, but also a good candidate of handling the solvability issue of sets of polynomial equations in general.

\mysubnote{Elaborate the new tool.}

\begin{Lem}\emph{(Linear Independence Leads to Algebraic Independence)} \label{lem:independent1} Suppose $\mathcal{K}$ is an algebraically closed field. Consider { $L$ polynomials $f_i\in \mathcal{K}[x_1,x_2,...x_{S}]$, $i\in\{1,...,L\}$} which are given by: $f_i=\sum_{j=1}^{{S}}h_{ij}x_j + g_i$, where $g_i$ are polynomials consisting of terms with degree no less than $2$. If the coefficient vectors $\mathbf{h}_i=[h_{i1},h_{i2},...,h_{i{S}}]$ are linearly independent, then polynomials $\{f_i\}$ are algebraically independent.
\end{Lem}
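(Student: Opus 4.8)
The plan is to prove the contrapositive in spirit, but more directly: I would establish the equivalence between algebraic independence of $\{f_i\}$ over $\mathcal{K}$ and the statement that the differentials $df_i$ are linearly independent at a generic point of the variety they cut out, and then reduce the computation of these differentials at a carefully chosen point to the linear independence of the $\mathbf{h}_i$. Concretely, recall the standard Jacobian criterion: polynomials $f_1,\dots,f_L\in\mathcal{K}[x_1,\dots,x_S]$ are algebraically independent over $\mathcal{K}$ if and only if the Jacobian matrix $J(x)=\bigl[\partial f_i/\partial x_j\bigr]_{i,j}$ has rank $L$ as a matrix over the function field $\mathcal{K}(x_1,\dots,x_S)$. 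Since $\mathcal{K}$ is algebraically closed (hence in particular we may take it of characteristic zero, or at least invoke the separable-generation version of the criterion), it suffices to exhibit a single point $x^\star\in\mathcal{K}^S$ at which $J(x^\star)$ has rank $L$, because the rank over the function field is at least the rank at any specialization.

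The key steps, in order, would be: (1) state and cite the Jacobian criterion for algebraic independence over an algebraically closed field; (2) observe that $\partial f_i/\partial x_j = h_{ij} + \partial g_i/\partial x_j$, where every monomial of $g_i$ has degree $\ge 2$, so every monomial of $\partial g_i/\partial x_j$ has degree $\ge 1$ and hence vanishes at the origin; (3) evaluate at $x^\star = 0$, obtaining $J(0) = [h_{ij}] = [\mathbf{h}_1;\dots;\mathbf{h}_L]$, the matrix whose rows are exactly the coefficient vectors $\mathbf{h}_i$; (4) since by hypothesis these rows are linearly independent, $\mathrm{rank}(J(0)) = L$, so $\mathrm{rank}_{\mathcal{K}(x)}(J) \ge L$, and trivially $\le L$, giving rank exactly $L$; (5) conclude by the criterion that $f_1,\dots,f_L$ are algebraically independent. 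This is short enough that the "plan" is essentially the proof, but the one genuinely delicate point deserves emphasis.

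The main obstacle is step (1): the precise form of the Jacobian criterion depends on the characteristic of $\mathcal{K}$. In characteristic zero the criterion "algebraically independent $\iff$ Jacobian has full rank over the function field" is clean and classical; in positive characteristic one needs the hypothesis of separability (the field extension $\mathcal{K}(f_1,\dots,f_L)\subseteq\mathcal{K}(x_1,\dots,x_S)$ separably generated, or equivalently one must use a version of the criterion phrased in terms of $p$-independence). For the intended application the polynomials $f_i$ have coefficients that are generic channel entries and $\mathcal{K}$ can be taken to be $\mathbb{C}$ or an algebraically closed field of characteristic zero, so I would simply note that we work over a characteristic-zero algebraically closed field (or cite the appropriate general statement, e.g.\ from a standard reference on algebraic independence and derivations), and the rest goes through verbatim. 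A secondary, purely expository, point is making explicit that "rank over $\mathcal{K}(x)$ $\ge$ rank at a $\mathcal{K}$-point" — this is immediate because a nonvanishing $L\times L$ minor is a polynomial in $x$ that is nonzero at $x^\star=0$, hence nonzero as a polynomial, hence nonzero in $\mathcal{K}(x)$.
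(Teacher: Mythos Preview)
Your proposal is correct and takes a genuinely different route from the paper. The paper argues directly by contradiction: assuming a nonzero polynomial relation $p(f_1,\dots,f_L)=0$, it decomposes $p$ into homogeneous pieces $p_0+p_1+\cdots+p_D$ and shows, by isolating the lowest-degree terms of $p(f_1,\dots,f_L)$, that $p_1$ must vanish identically (because the linear change of variables $y_i=\sum_j h_{ij}x_j$ is bijective when the $\mathbf{h}_i$ are independent), and then inducts on degree to kill every $p_d$. Your approach instead invokes the Jacobian criterion and evaluates the Jacobian at the origin, where it equals $[\,h_{ij}\,]$ exactly; full row rank at one point forces full rank over $\mathcal{K}(x)$, and you are done.

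Your argument is shorter and more conceptual, at the cost of citing an external criterion; the paper's is fully self-contained and elementary. One small correction: ``algebraically closed'' does not let you ``take characteristic zero,'' but in fact your worry is unnecessary. The direction you need---Jacobian of rank $L$ implies algebraic independence---holds over any perfect field (hence any algebraically closed field) in any characteristic: if $P(f_1,\dots,f_L)=0$ with $P$ of minimal degree, differentiating gives $(\partial P/\partial y_i)(f)$ in the left kernel of $J$, hence zero; in characteristic $p$ one then writes $P=Q^p$ (perfectness) and contradicts minimality. So step~(1) is clean as stated, and for the paper's application over $\mathbb{C}$ the issue is moot anyway.
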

\begin{proof}
Please refer to Appendix\ref{pf_lem:independent1}\hspace{-4mm}{\color{white}\scriptsize$\blacklozenge$}\hspace{2.1mm} for the proof.
\end{proof}

\begin{Lem}\emph{(Algebraic Independence is Invariant up to a Random Constant)} \label{lem:independent2}Suppose $\mathcal{K}$ is an algebraically closed field. Polynomials $f_i\in \mathcal{K}[x_1,x_2,...x_{S}]$, $i\in\{1,...,L\}$ are algebraically independent, and $c_i$ are independent random variables drawn from continuous distribution in $\mathcal{K}$. Then $g_i = c_i+f_i$ are algebraically independent almost surely.
\end{Lem}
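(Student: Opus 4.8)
\textbf{Proof proposal for Lemma~\ref{lem:independent2}.}

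The plan is to reduce the statement to the basic fact that a collection of polynomials is algebraically dependent if and only if its Jacobian matrix (with respect to the variables $x_1,\dots,x_S$) has rank strictly less than $L$ at every point---equivalently, over the field of rational functions, the Jacobian does not have full row rank. This is the standard Jacobian criterion for algebraic independence, valid over a field of characteristic zero (or more generally a perfect field), and since $\mathcal{K}$ is algebraically closed we are in this setting. The key observation is that adding the constants $c_i$ does \emph{not} change the Jacobian: $\partial g_i/\partial x_j = \partial (c_i + f_i)/\partial x_j = \partial f_i/\partial x_j$ for all $i,j$. Hence $g_i = c_i + f_i$ has exactly the same Jacobian matrix as $f_i$, and by hypothesis the $f_i$ are algebraically independent, so that Jacobian has full row rank $L$ as a matrix over $\mathcal{K}(x_1,\dots,x_S)$. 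Therefore the same is true for the $g_i$, and the Jacobian criterion gives that $\{g_i\}$ are algebraically independent---in fact \emph{deterministically}, for every choice of the $c_i$, not merely almost surely.

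Concretely, the steps I would carry out are: (i) recall/cite the Jacobian criterion --- over a characteristic-zero field, polynomials $p_1,\dots,p_L\in\mathcal{K}[x_1,\dots,x_S]$ are algebraically independent iff the Jacobian $J = \left[\partial p_i/\partial x_j\right]_{i\in\{1,\dots,L\},\,j\in\{1,\dots,S\}}$ has rank $L$ over $\mathcal{K}(x_1,\dots,x_S)$; (ii) apply this direction to the hypothesis: algebraic independence of $\{f_i\}$ yields $\mathrm{rank}\, J_f = L$; (iii) compute $\partial g_i/\partial x_j = \partial f_i/\partial x_j$ since $c_i$ is a constant, so $J_g = J_f$; (iv) conclude $\mathrm{rank}\, J_g = L$, hence $\{g_i\}$ are algebraically independent by the converse direction of the criterion. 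The ``almost surely'' in the statement is then automatic (indeed it holds for all realizations), which is consistent with how the lemma will be used downstream in combination with Lemma~\ref{lem:independent1}.

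There is essentially one subtlety worth flagging as the ``hard part'': one must be careful about what ring the $c_i$ live in relative to the variables. The clean reading --- and the one that makes the Jacobian argument go through verbatim --- is that $c_i\in\mathcal{K}$ are scalars (constants with respect to $x_1,\dots,x_S$), so that $g_i\in\mathcal{K}[x_1,\dots,x_S]$ and differentiation in $x_j$ annihilates $c_i$. If instead one wished to treat the $c_i$ as fresh indeterminates, the statement would require passing to the field $\mathcal{K}(c_1,\dots,c_L)$ and noting that algebraic independence over a subfield is preserved under such a field extension (a transcendence-degree argument); but the constant-shift interpretation is the intended one and avoids this. A secondary point is merely to confirm that ``drawn from a continuous distribution in $\mathcal{K}$'' is compatible with $c_i$ being deterministic constants for the purposes of the Jacobian identity --- it is, since the conclusion does not actually depend on the distribution; the randomness phrasing is inherited from the paper's convention and the ``almost surely'' is vacuously (in fact surely) satisfied.
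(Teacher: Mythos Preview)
Your proof via the Jacobian criterion is correct (in characteristic zero, which is the operative setting here since the paper's $\mathcal{K}$ is implicitly $\mathbb{C}$ --- note that ``continuous distribution'' already forces this), and it is genuinely different from the paper's argument. The paper does not invoke the Jacobian criterion at all; instead it argues directly from the definition by contradiction: assuming a nonzero polynomial relation $p(g_1,f_2,\dots,f_L)=0$, it expands $g_1=f_1+c_1$ to obtain a polynomial $p^\dag$ in $f_1,\dots,f_L$ whose coefficients depend on $c_1$, and then uses the randomness and independence of $c_1$ to conclude that $p^\dag$ is a nonzero polynomial almost surely, contradicting the algebraic independence of the $f_i$. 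It then iterates over the remaining indices.

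What each approach buys: your Jacobian argument is shorter, conceptually cleaner, and --- as you correctly point out --- yields the conclusion \emph{deterministically} for every choice of constants, not merely almost surely; the ``almost surely'' in the lemma statement is an artifact of the paper's proof technique rather than a genuine limitation. The paper's approach, on the other hand, is more self-contained (it does not import the Jacobian criterion as a black box) and works directly over any algebraically closed field without a characteristic hypothesis. One small correction to your parenthetical: the Jacobian criterion is \emph{not} valid over arbitrary perfect fields --- $x^p$ in characteristic $p$ is algebraically independent but has zero derivative --- so you should restrict your citation to characteristic zero, which is all you need here.
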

\proof
Please refer to Appendix\ref{pf_lem:independent2}\hspace{-4mm}{\color{white}\scriptsize$\blacklozenge$}\hspace{2.1mm} for the proof.
\endproof
\begin{Lem}\emph{(Algebraic Independence Leads to Non-empty Vanishing Set)}\label{lem:nonempty} Suppose $\mathcal{K}$ is an algebraically closed field. If polynomials $f_i\in \mathcal{K}[x_1,...,x_{S}]$, $i\in\{1,...,L\}$ are algebraically independent, then the vanishing set of these polynomials, i.e. $\mathcal{V}(f_1,...,f_L)=\{(x_1,...,x_{S}):f_i=0,  i\in\{1,...,L\}\}$ is non-empty.
\end{Lem}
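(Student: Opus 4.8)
The plan is to prove the contrapositive at the level of ideals via the weak Nullstellensatz. Over the algebraically closed field $\mathcal{K}$, the set $\mathcal{V}(f_1,\ldots,f_L)$ is empty if and only if $1\in\langle f_1,\ldots,f_L\rangle$, i.e. there exist $a_i$ with $\sum_i a_i f_i=1$; so it suffices to show that algebraic independence forces the ideal $\langle f_1,\ldots,f_L\rangle$ to be proper. I would restate this geometrically: the tuple $F=(f_1,\ldots,f_L)$ defines a polynomial map $F:\mathcal{K}^S\to\mathcal{K}^L$ with $\mathcal{V}(f_1,\ldots,f_L)=F^{-1}(\mathbf{0})$, so the goal becomes showing that the origin of the target lies in the image of $F$. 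Two immediate consequences of the hypothesis set the stage: no $f_i$ can be constant (a constant would satisfy $y_i-c=0$ and destroy independence), and since the transcendence degree of $\mathcal{K}(x_1,\ldots,x_S)$ over $\mathcal{K}$ is $S$, algebraic independence of the $L$ functions $\{f_i\}$ forces $L\le S$.

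Next I would pass to projective space to exploit the bound $L\le S$. Homogenizing each nonconstant $f_i$ to a form $\tilde f_i$ of positive degree in $\mathbb{P}^{S}$ (with homogenizing coordinate $x_0$), I would invoke the \emph{projective dimension theorem}: in $\mathbb{P}^{S}$ any $L\le S$ hypersurfaces have a non-empty common zero locus, of dimension at least $S-L\ge 0$. This guarantees the existence of \emph{some} projective point annihilating all the $\tilde f_i$, and (via upper semicontinuity of fibre dimension for the dominant map attached to the injective comorphism $y_i\mapsto f_i$, which is what algebraic independence supplies through Lem.~\ref{lem:independent1}) it would pin down the dimension of the eventual solution set once non-emptiness in the affine chart is established.

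The hard part — and the step I expect to be the genuine obstacle — is to guarantee that at least one of the projective common zeros produced above lies in the affine chart $x_0\neq 0$, rather than having the whole projective intersection collapse onto the hyperplane at infinity $\{x_0=0\}$. A projective zero with $x_0\neq 0$ dehomogenizes to an honest point of $\mathcal{V}(f_1,\ldots,f_L)$ and finishes the argument, whereas a common zero at infinity contributes nothing affine; the location of these points is governed by the top-degree (leading) forms of the $f_i$, a piece of data that algebraic independence does not obviously control. This is exactly where I would expect the proof to need its real input: one must certify that the leading forms do not already have a common projective zero that traps every solution at infinity. In the paper's pipeline this control is meant to enter through Lem.~\ref{lem:independent2}, whose random constant terms translate the target $\mathbf{0}$ to a point that sits inside the dense image of $F$; absent such structure, bridging the gap from ``some projective zero'' to ``an affine zero'' is the crux of the lemma, and it is the one step I would not expect to follow from algebraic independence alone.
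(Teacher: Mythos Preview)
Your route via projective intersection differs from the paper's, which argues in two lines: algebraic independence gives a ring isomorphism $\mathcal{K}[f_1,\ldots,f_L]\cong\mathcal{K}[y_1,\ldots,y_L]$, hence $\langle f_1,\ldots,f_L\rangle$ is proper because $\langle y_1,\ldots,y_L\rangle$ is, and the weak Nullstellensatz finishes. So both arguments funnel through ``the ideal is proper,'' but the paper never passes to projective space.

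Your closing suspicion, however, is exactly right, and it points to a genuine gap --- in the \emph{lemma itself}, not merely in your outline. Take $S=L=2$ over $\mathbb{C}$ with $f_1=x_1$ and $f_2=x_1x_2-1$. These are algebraically independent (the map $(x_1,x_2)\mapsto(f_1,f_2)$ hits every point with nonzero first coordinate, hence is dominant), yet $x_2f_1-f_2=1$, so $\mathcal{V}(f_1,f_2)=\emptyset$. In your picture the unique projective common zero sits on the hyperplane at infinity; in the paper's picture, $\langle f_1,f_2\rangle$ is proper as an ideal of the subring $\mathcal{K}[f_1,f_2]$ but is the unit ideal in $\mathcal{K}[x_1,x_2]$ --- the paper's proof silently identifies these two different ideals. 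Algebraic independence alone does not force an affine zero, and the lemma as stated is false.

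You have also correctly located where the pipeline is rescued. Dominance of $F=(f_1,\ldots,f_L)$ implies that the image contains a nonempty Zariski-open subset of $\mathcal{K}^L$, so $F^{-1}(c)\neq\emptyset$ for \emph{generic} $c\in\mathcal{K}^L$. In the application the constant terms $h_{kj}(p,q)$ in \eqref{eqn:czero_poly2} are independent continuous random variables, so ``generic'' holds almost surely; that is the statement the paper actually needs, and it does go through. What does not go through is Lem.~\ref{lem:nonempty} in the generality in which it is stated.
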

\proof Pleaser refer to Appendix\ref{pf_lem:nonempty}\hspace{-4mm}{\color{white}\scriptsize$\blacklozenge$}\hspace{2.1mm} for the proof.
\endproof

\mysubnote{Main flow of the proof}

 In the following analysis, we prove Thm.~\ref{thm:feasible_s} by applying the new tool developed above. First we transfer the IA problem (Problem~\ref{pro:pIA_c}) into another equivalent form.
\begin{Lem}[Problem Transformation]\label{lem:transform} Problem~\ref{pro:pIA_c} is equivalent to Problem~\ref{pro:pIA2} (defined below) almost surely.
\begin{Prob}[Transformed IA Processing]\label{pro:pIA2} Find $\{\mathbf{U}_k,\mathbf{V}_k\}$ such that
$\mbox{rank}(\mathbf{U}_k)=\mbox{rank}(\mathbf{V}_k)$ $=d_k,$ $\forall k$ and satisfy \eqref{eqn:czero_c}.\end{Prob}
\end{Lem}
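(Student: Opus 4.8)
The plan is to show that Problem~\ref{pro:pIA_c} and Problem~\ref{pro:pIA2} have the same feasibility status almost surely, by arguing both directions of implication. The forward direction is immediate: any solution $\{\mathbf{U}_k^*,\mathbf{V}_k^*\}$ of Problem~\ref{pro:pIA_c} satisfies \eqref{eqn:czero_c} by definition, and \eqref{eqn:drank_c} forces $\mathrm{rank}(\mathbf{U}_k^*)=\mathrm{rank}(\mathbf{V}_k^*)=d_k$ for every $k$, since $\mathrm{rank}(\mathbf{U}_k^{*H}\mathbf{H}_{kk}\mathbf{V}_k^*)=d_k$ cannot exceed $\min\{\mathrm{rank}(\mathbf{U}_k^*),\mathrm{rank}(\mathbf{V}_k^*)\}$ and each factor already has at most $d_k$ columns. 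Hence the same pair solves Problem~\ref{pro:pIA2}.

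The substantive direction is the converse. Suppose $\{\mathbf{U}_k,\mathbf{V}_k\}$ solves Problem~\ref{pro:pIA2}, so \eqref{eqn:czero_c} holds and each $\mathbf{U}_k,\mathbf{V}_k$ has full column rank $d_k$. I would argue that \eqref{eqn:drank_c} then holds automatically, almost surely, because the direct-link channels $\mathbf{H}_{kk}$ are independent of the constraints that pin down $\mathbf{U}_k,\mathbf{V}_k$: the equations in \eqref{eqn:czero_c} involve only the cross channels $\mathbf{H}_{kj}$, $k\neq j$. Fix any realization of the cross channels for which a solution of Problem~\ref{pro:pIA2} exists, and fix such a solution $\{\mathbf{U}_k,\mathbf{V}_k\}$; this choice is made independently of $\{\mathbf{H}_{kk}\}$. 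Since $\mathbf{U}_k$ and $\mathbf{V}_k$ have rank $d_k$, the map $\mathbf{H}_{kk}\mapsto\det(\mathbf{U}_k^H\mathbf{H}_{kk}\mathbf{V}_k)$ is a nonzero polynomial in the entries of $\mathbf{H}_{kk}$ (it is nonzero because one can exhibit at least one $\mathbf{H}_{kk}$, e.g. built from a basis completion of the column spaces of $\mathbf{U}_k$ and $\mathbf{V}_k$, making $\mathbf{U}_k^H\mathbf{H}_{kk}\mathbf{V}_k$ the identity). A nonzero polynomial vanishes only on a set of measure zero, so with the entries of $\mathbf{H}_{kk}$ drawn from continuous distributions we get $\mathrm{rank}(\mathbf{U}_k^H\mathbf{H}_{kk}\mathbf{V}_k)=d_k$ almost surely, for every $k$ simultaneously (a finite union of measure-zero sets). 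Thus the same transceivers solve Problem~\ref{pro:pIA_c}.

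One subtlety to handle carefully — and the step I expect to be the main obstacle — is the interchange of quantifiers needed to make the ``almost surely'' rigorous: the solution of Problem~\ref{pro:pIA2} depends on the cross channels $\{\mathbf{H}_{kj}\}_{k\neq j}$, and a priori one might worry about a measurable selection of $\{\mathbf{U}_k,\mathbf{V}_k\}$ as a function of those channels, or about uncountably many candidate solutions. I would address this by conditioning on $\{\mathbf{H}_{kj}\}_{k\neq j}$, noting that feasibility of Problem~\ref{pro:pIA2} is a property of the cross channels alone, and that on the event where it is feasible it suffices to verify the direct-link rank condition for one fixed solution; the exceptional null set in the $\mathbf{H}_{kk}$ variables then has measure zero conditionally, and integrating over the cross channels keeps it measure zero. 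This is the standard genericity argument used throughout the IA literature, so I would state it concisely and refer to the continuity of the channel distributions. With that, the equivalence holds almost surely, which is exactly the claim.
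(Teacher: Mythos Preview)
Your proposal is correct and follows essentially the same approach as the paper: both directions rest on the observation that the direct-link matrices $\{\mathbf{H}_{kk}\}$ are independent of the cross-link matrices $\{\mathbf{H}_{kj}\}_{k\neq j}$ appearing in \eqref{eqn:czero_c}, so a full-rank solution of Problem~\ref{pro:pIA2} yields \eqref{eqn:drank_c} almost surely by genericity. The paper's proof is a two-sentence sketch invoking full rank of $\mathbf{H}_{kk}$ and independence; your version is more explicit about the polynomial non-vanishing and the conditioning/quantifier subtlety, which is a welcome elaboration but not a different argument.
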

\proof
Please refer to Appendix\ref{pf_lem:transform}\hspace{-4mm}{\color{white}\scriptsize$\blacklozenge$}\hspace{2.1mm} for the proof.
\endproof

In Problem~\ref{pro:pIA2}, to ensure that $\mbox{rank}(\mathbf{U}_k)=\mbox{rank}(\mathbf{V}_k)=d_k$, {it is sufficient to assume
that the first $d_k\times d_k$ submatrix of $\mathbf{U}_{k}$, $\mathbf{V}_{k}$, denoted by $\mathbf{U}^{(1)}_k$, $\mathbf{V}^{(1)}_k$, are invertible.} Then we can define $\tilde{\mathbf{U}}_{k}\in\mathbb{C}^{(N_k-d_k)\times d_k}$, $\tilde{\mathbf{V}}_{j}\in\mathbb{C}^{(M_j-d_j)\times d_j}$ as follows:  \begin{eqnarray}\left[\!\!\!\begin{array}{c}\mathbf{I}_{d_k\times d_k}\\ \tilde{\mathbf{U}}_{k}\end{array}\!\!\!\right]
\!=\!\mathbf{U}_k\Big(\mathbf{U}^{(1)}_k\Big)^{\!-1}\!\!\!,\; \left[\!\!\!\begin{array}{c}\mathbf{I}_{d_j\times d_j}\\ \tilde{\mathbf{V}}_{j}\end{array}\!\!\!\right]\!=\!\mathbf{V}_j\Big(\mathbf{V}^{(1)}_j\Big)^{\!-1}\!\!\!.
\label{eqn:tildeuv}
\end{eqnarray} Then \eqref{eqn:czero_c} is transformed into the following form:\footnote{\label{foot:kjpq}Here $k$, $j$, $p$, and $q$ represent the index of Rx, Tx, data stream at Rx side, and data stream at Tx side, respectively. We intensively use this subscript sequence in this paper, e.g. $\mathbf{h}_{kjpq}$, $c^{\mathrm{t}}_{kjpq}$, and $c^{\mathrm{r}}_{kjpq}$.
}
\begin{eqnarray}
\nonumber \!\!f_{kjpq}\!\!\!\!\!\!&\triangleq&\!\!\!\! \! h_{kj\!}(p,\!q)\!+\!\!\!\!\!\sum_{n=1}^{N_k\!-\!d_k}\!\!\!h_{kj\!}(d_k\!+\!n,\!q)
\tilde{u}^H_{k}\!(n,\!p)
\\\nonumber\!\!\!\!\!\!&&\!\!\!\!\!\!+
\sum_{m=1}^{M_j\!-\!d_j}\!\!h_{kj\!}(p,\!d_j\!+\!m)\tilde{v}_{j}\!(m,\!q)
\\\nonumber \!\!\!\!\!\!&&\!\!\!\!\!\!+\sum_{n=1}^{N_k\!-\!d_k}\!\!\sum_{m=1}^{M_j\!\mbox{-}d_j}\!\!h_{kj\!}(d_k\!+\!n,
\!d_j\!+\!m)
\tilde{u}^H_{k}\!(n,\!p)\tilde{v}_{j}\!(m,\!q)
\\\nonumber\!\!\!\!\!\! &=&\!\!\!\!\! \mathbf{h}_{kjpq}\mathbf{v}\!+\!\!\!\!\!\sum_{n=1}^{N_k\!-\!d_k}\!\sum_{m=1}^{M_j\!-\!d_j}
\!\!\!h_{kj\!}(d_k\!+\!n,\!d_j\!+\!m)
\tilde{u}^H_{k}\!(n,\!p)\tilde{v}_{j}\!(m,\!q)\
\\\!\!\!\!\!\! &=&\!\!\!\!\! 0 \label{eqn:czero_poly2}
\end{eqnarray}
where $h_{kj}(p,q)$, $\tilde{u}_{k}(p,q)$, and $\tilde{v}_{j}(p,q)$ are the elements in the $p$-th row and $q$-th column of $\mathbf{H}_{kj}$, $\tilde{\mathbf{U}}_{k}$ and $\tilde{\mathbf{V}}_{j}$, respectively,
\begin{eqnarray}
\nonumber\!\!\mathbf{v}\!\!\!\!\!\!&=&\!\!\!\!\!\!\big[\tilde{u}^H_{1}\!(1,\!1),\tilde{u}^H_{1}\!(2,\!1),...,
\tilde{u}^H_{1}\!(N_{\!1}\!-\!d_1,\!1),\tilde{u}^H_{1}\!(1,\!2),...,
\\\nonumber\!\!&&\!\!\!\!\tilde{u}^H_{1}\!(N_{\!1}\!-\!d_{\!1},\!d_1),
\tilde{u}^H_{2}\!(1,\!1),...,
\tilde{u}^H_{K}\!(N_{\!K}\!-\!d_{\!K},\!d_{\!K}),
\\\nonumber\!\!&&\!\!\!\!\tilde{v}_{1}\!(1,\!1)\:\,,\tilde{v}_{1}\!(2,\!1)\:\,,...,
\tilde{v}_{1}\!(M_{\!1}\!-\!d_{\!1},\!1)\:\,,\tilde{v}_{1}\!(1,\!2)\:\,,...,
\\\!\!&&\!\!\!\!\tilde{v}_{1}\!(M_{\!1}\!-\!d_{\!1},\!d_1)\,\,,
\tilde{v}_{2}\!(1,\!1)\,\,,...,
\tilde{v}_{K}\!(M_{\!K}\!-\!d_{\!K},\!d_{\!K})\!
\big]^T\label{eqn:v}
\end{eqnarray}
and $\{\mathbf{h}_{kjpq}\}$ is the $r$-th row of $\mathbf{H}_{\mathrm{all}}$ defined in Fig.~\ref{fig_Hall}, where $r(k,j,p,q)$ is given by:
\begin{eqnarray}
r(k,\!j,\!p,\!q)\!=\!\!\sum_{k^\dag=1}^{k-1}\!\!\sum_{j^\dag=1\atop\neq k^\dag}^K \!\!d_{k^\dag}d_{j^\dag}\!+\!\!\sum_{j^\dag=1\atop\neq k}^{j-1}\!\!d_{k}d_{j^\dag}\!+\!(p\!-\!1)d_j\!+\!q.\label{eqn:roworder}
\end{eqnarray}

 Substituting \eqref{eqn:czero_poly2} to Lem.~\ref{lem:independent1}{--}\ref{lem:nonempty}, we can prove that Problem~\ref{pro:pIA_c} has solution almost surely if $\mathbf{H}_{\mathrm{all}}$ defined in Fig.~\ref{fig_Hall} is full row rank.
\subsubsection{Proof of Corollary~\ref{cor:confdomin}}
Note that $\mathbf{H}_{\mathrm{all}}\in\mathbb{C}^{C\times V}$, where $C\!=\!\sum_{k=1}^{K}\!\sum_{j=1\atop\neq k}^K d_kd_j$, $ V\!=\!\sum_{k=1}^{K}d_k(M_k+N_k-2d_k)$. $\mathbf{H}_{\mathrm{all}}$ is full row rank if and only if at least one of its $C\!\times\!C$ sub-matrices has non-zero determinant. Therefore, the statement is proved if the following proposition holds:
\begin{Prop} Under a network configuration $\chi$, the determinant of a $C\!\times\!C$ sub-matrix of $\mathbf{H}_{\mathrm{all}}$ is either always zero or non-zero almost surely.\label{prop:det}
\end{Prop}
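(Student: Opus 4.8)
The plan is to observe that the determinant of any fixed $C\times C$ submatrix of $\mathbf{H}_{\mathrm{all}}$ is a \emph{polynomial} in the channel coefficients $\{h_{kj}(p,q)\}$, and then invoke the elementary fact that a nonzero polynomial vanishes only on a set of measure zero. Concretely, fix a choice of $C$ columns of $\mathbf{H}_{\mathrm{all}}$ and let $\mathbf{B}$ be the resulting $C\times C$ matrix; from the definitions \eqref{eqn:hu}, \eqref{eqn:hv} and Fig.~\ref{fig_Hall}, every entry of $\mathbf{H}_{\mathrm{all}}$ is either $0$ or a single coefficient $h_{kj}(p,q)$, so $\det(\mathbf{B})$ is a polynomial $P(\{h_{kj}(p,q)\})$ with integer coefficients. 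There are exactly two possibilities: either $P$ is identically the zero polynomial, in which case $\det(\mathbf{B})=0$ for \emph{every} realization of the channel; or $P$ is not the zero polynomial, in which case its vanishing set $\mathcal{V}(P)$ is a proper algebraic subvariety of the space of channel coefficients and hence has Lebesgue measure zero. Since the entries $\{h_{kj}(p,q)\}$ are drawn independently from continuous distributions, the event $\det(\mathbf{B})=0$ then has probability zero, i.e.\ $\det(\mathbf{B})\neq0$ almost surely. This dichotomy is exactly the claim of the proposition.

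First I would make precise the claim that each entry of $\mathbf{H}_{\mathrm{all}}$ is a coordinate function (up to zero entries): inspecting \eqref{eqn:hu} and \eqref{eqn:hv}, the blocks $\mathbf{H}^{\mathrm{U}}_{kj}$ and $\mathbf{H}^{\mathrm{V}}_{kj}$ are built by placing individual entries $h_{kj}(\cdot,\cdot)$ of the channel matrices along block diagonals, padded with zeros, and distinct blocks involve distinct channel matrices; the row ordering \eqref{eqn:roworder} and the column ordering implicit in \eqref{eqn:v} simply index where these blocks sit. Hence $\mathbf{H}_{\mathrm{all}}$, as a matrix-valued function of the tuple of all channel entries, has polynomial (indeed affine-linear) entries, so $\det(\mathbf{B})$ is a polynomial in those entries. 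Second, I would state and use the measure-zero lemma: a polynomial in finitely many real or complex variables that is not identically zero has a zero set of Lebesgue measure zero; since the channel entries have a joint density (being independent with continuous marginals), such a set is hit with probability $0$. Combining the two cases gives the result.

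The only subtlety — and the step I would be most careful about — is the logical structure of the dichotomy: the partition into ``always zero'' versus ``nonzero a.s.'' is a property of the polynomial $P$ and therefore of the configuration $\chi$ alone (which fixes the sizes $d_k, M_k, N_k$ and hence the combinatorial pattern of zeros and which $h_{kj}(p,q)$ appears where), not of the particular realization. In particular there is no third possibility such as ``zero on a positive-measure proper subset,'' precisely because a polynomial identity either holds everywhere or fails outside a null set; there is no intermediate regime. Once Proposition~\ref{prop:det} is established, the claim of Corollary~\ref{cor:confdomin} follows immediately: $\mathbf{H}_{\mathrm{all}}$ is full row rank iff some $C\times C$ submatrix has nonzero determinant, and applying the proposition to each of the finitely many such submatrices — together with the fact that a finite union of probability-zero events has probability zero — shows that $\mathbf{H}_{\mathrm{all}}$ is either rank-deficient for all channel realizations or full row rank almost surely, so that a single favorable realization certifies a.s.\ feasibility via Thm.~\ref{thm:feasible_s}.
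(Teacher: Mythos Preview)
Your proposal is correct and follows essentially the same approach as the paper: observe that the entries of $\mathbf{H}_{\mathrm{all}}$ are either $0$ or individual channel coefficients, so by the Leibniz formula the determinant of any $C\times C$ submatrix is a polynomial in $\{h_{kj}(p,q)\}$ whose coefficients depend only on $\chi$, and then invoke the fact that a nonzero polynomial in continuous random variables vanishes with probability zero. The paper packages the measure-zero step as a separate lemma (Lem.~\ref{lem:nonzero}) and notes explicitly that the polynomial has no constant term (hence is either the zero polynomial or non-constant), but this is the same argument.
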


To prove Prop.~\ref{prop:det}, we first have the following lemma:
\begin{Lem}\label{lem:nonzero}Suppose $x_1,...,x_{S}\in\mathbb{C}$ are independent random variables drawn from continuous distribution, $f$ is a non-constant polynomial $\in\mathbb{C}[x_1,...,x_{S}]$. {Then $f(x_1,...,x_S)\neq 0$ almost surely, i.e. the polynomial evaluated at $(x_1,...,x_S)$ is non zero with probability 1.}
\end{Lem}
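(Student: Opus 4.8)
The plan is to reduce the statement to the classical fact that a nonzero multivariate polynomial cannot vanish on a set of positive measure. First I would observe that the conclusion is, modulo the description of the hypothesis, purely about the zero set $\mathcal{V}(f)=\{(x_1,\dots,x_S)\in\mathbb{C}^S: f(x_1,\dots,x_S)=0\}$ of a fixed nonconstant polynomial $f\in\mathbb{C}[x_1,\dots,x_S]$: since the $x_i$ are drawn from continuous distributions and are independent, the law of $(x_1,\dots,x_S)$ on $\mathbb{C}^S\cong\mathbb{R}^{2S}$ is absolutely continuous with respect to Lebesgue measure. Hence it suffices to show that $\mathcal{V}(f)$ has Lebesgue measure zero in $\mathbb{R}^{2S}$, because then $\Pr[f(x_1,\dots,x_S)=0]=\Pr[(x_1,\dots,x_S)\in\mathcal{V}(f)]=0$.

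Next I would prove the measure-zero claim by induction on the number of variables $S$. For $S=1$, a nonconstant polynomial in one complex variable has finitely many roots, which form a measure-zero subset of $\mathbb{C}$. For the inductive step, write $f$ as a polynomial in $x_S$ with coefficients in $\mathbb{C}[x_1,\dots,x_{S-1}]$, i.e. $f=\sum_{t=0}^{T}a_t(x_1,\dots,x_{S-1})\,x_S^{t}$ with $a_T\not\equiv 0$ and $T\ge 1$, or else $f$ does not depend on $x_S$ at all and the claim follows immediately from the inductive hypothesis applied to $f\in\mathbb{C}[x_1,\dots,x_{S-1}]$ combined with Fubini. In the genuine case, let $E=\{(x_1,\dots,x_{S-1}): a_T(x_1,\dots,x_{S-1})=0\}$; by the inductive hypothesis $E$ has measure zero in $\mathbb{R}^{2(S-1)}$. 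For each fixed $(x_1,\dots,x_{S-1})\notin E$ the polynomial $x_S\mapsto f(x_1,\dots,x_{S-1},x_S)$ is nonconstant, hence has at most $T$ roots, so the slice of $\mathcal{V}(f)$ over such a point has one-dimensional (complex) measure zero. By Fubini's theorem applied to the indicator of $\mathcal{V}(f)$, the full measure of $\mathcal{V}(f)$ splits as the integral over $(x_1,\dots,x_{S-1})$ of the slice measures; the contribution from $E$ is zero because $E$ is null, and the contribution from the complement of $E$ is zero because each slice is null. Therefore $\mathcal{V}(f)$ is null, completing the induction.

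Alternatively, and perhaps more cleanly for the paper's algebraic-geometry flavor, one can invoke directly that $\mathcal{V}(f)$ is a proper algebraic subset of $\mathbb{C}^S$ (proper because $f$ is nonconstant), and a proper algebraic subset is contained in a finite union of hypersurfaces each of which is nowhere dense and Lebesgue-null; this is exactly the Schwartz--Zippel type statement in its measure-theoretic form. I would pick whichever is more consistent with the machinery already cited in the paper and state it as the core of the argument. The main obstacle, such as it is, is not any deep step but rather being careful about two bookkeeping points: (i) that ``drawn from a continuous distribution'' is being used here to mean the induced measure is absolutely continuous (so that null sets are genuinely avoided with probability one), and (ii) handling the degenerate sub-case where $f$ does not actually involve some of the variables, which is why the induction is phrased to fall back on fewer variables plus Fubini. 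Everything else is a routine application of induction and Fubini.
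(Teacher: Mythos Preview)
Your proposal is correct and follows essentially the same route as the paper: induction on the number of variables, with the base case handled by the finiteness of roots of a univariate polynomial (the paper invokes the Fundamental Theorem of Algebra) and the inductive step left as routine. Your version is simply more explicit than the paper's---you spell out the reduction to Lebesgue-nullity via absolute continuity and the Fubini slicing argument, whereas the paper states only the $S=1$ case and then writes ``the lemma can be proved by using mathematical induction w.r.t.\ $k$. We omit the details for conciseness.''
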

\proof When $k=1$, from the Fundamental Theorem of Algebra \cite{Misc:Springer_Plucker}, $f(x_1)=0$ only has finite number of solutions. On the other hand, $x_1$ is drawn from continuous distribution. Hence $f(x_1)\neq0$ almost surely.

For $k\ge2$, the lemma can be proved by using mathematical induction w.r.t. $k$. We omit the details for conciseness.
\endproof

From the Leibniz formula \cite[6.1.1]{Bok:Linear_Meyer}, the determinant of a $C\!\times\!C$ sub-matrix of $\mathbf{H}_{\mathrm{all}}$ can be written as a polynomial $f\in\mathbb{C}(h_{kj}(p,q))$ with no constant term, where $k\neq j\in\{1,...,K\}$, $p\in\{1,...,N_k\}$, $q\in\{1,...,M_j\}$. Further note that the coefficients of $f$ is determined by the configuration of the network $\chi$. Hence, under a {certain} $\chi$, $f$ is either a zero polynomial or a non-constant polynomial. In the latter case, by applying Lem.~\ref{lem:nonzero}, we have that $f\neq 0$ almost surely. This completes the proof.

\mynote{Proof of the new result}
{
\subsubsection{Proof of Corollary~\ref{cor:scaleable}}
As illustrated in Fig.~\ref{fig_scale}, from \eqref{eqn:hu} and \eqref{eqn:hv}, after the scaling, each $\mathbf{H}^{\mathrm{U}}_{kj}$ (or $\mathbf{H}^{\mathrm{V}}_{kj}$) is composed of repeating a submatrix with independent elements $cd_k$ (or $cd_j$) times. Denote the $s$-th time of appearance of this matrix as $\mathbf{H}^{\mathrm{U}}_{kj}(s)$ (or $\mathbf{H}^{\mathrm{V}}_{kj}(s)$). Moreover, we can evenly partition every $\mathbf{H}^{\mathrm{U}}_{kj}(s)$, $\mathbf{H}^{\mathrm{V}}_{kj}(s)$ into $c^2$ independent blocks. Denote the $l$-th diagonal block in $\mathbf{H}^{\mathrm{U}}_{kj}(s)$ (or $\mathbf{H}^{\mathrm{V}}_{kj}(s)$) as $\mathbf{H}^{\mathrm{U}}_{kj}(s,l)$ (or $\mathbf{H}^{\mathrm{V}}_{kj}(s,l)$), $l\in\{1,...,c\}$. Rewritten $\mathbf{H}_{\mathrm{all}}$ as a sum of two matrices, one consists of the diagonal blocks $\{\mathbf{H}^{\mathrm{U}}_{kj}(s,l)\}$, $\{\mathbf{H}^{\mathrm{V}}_{kj}(s,l)\}$  and the other contains the rest of the blocks. Denote the two matrices as $\mathbf{H}^{\mathrm{D}}_{\mathrm{all}}$, $\mathbf{H}^{\tilde{\mathrm{D}}}_{\mathrm{all}}$, respectively.
\begin{figure}[h] \centering
\includegraphics[scale=0.36]{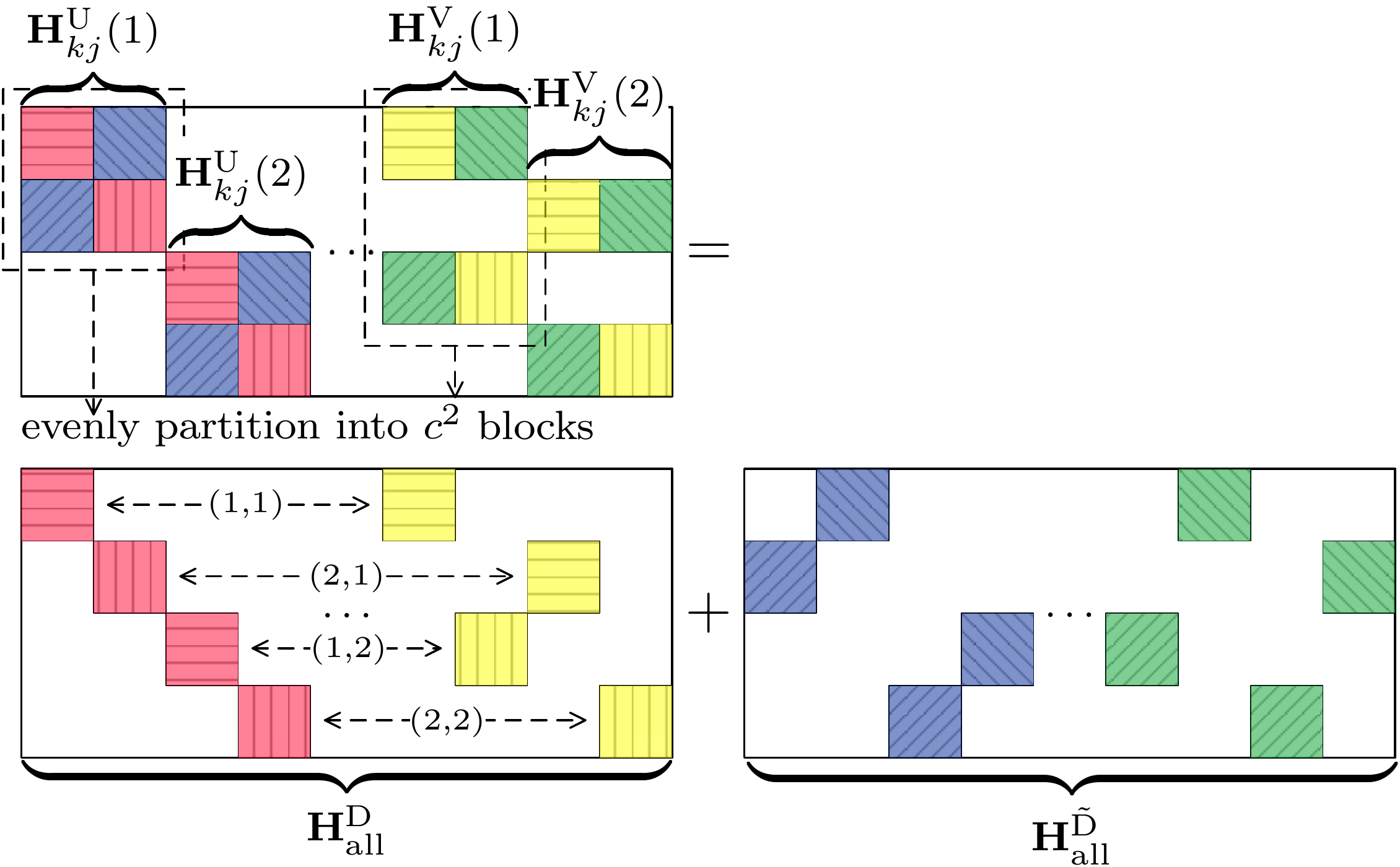}
\caption {Partition of $\mathbf{H}_{\mathrm{all}}$. In this figure, $d_k=d_j=1$, $c=2$.}
\label{fig_scale}
\end{figure}

Since $\mathbf{H}^{\mathrm{D}}_{\mathrm{all}}$, $\mathbf{H}^{\tilde{\mathrm{D}}}_{\mathrm{all}}$ are independent, it is sufficient to show that $\mathbf{H}^{\mathrm{D}}_{\mathrm{all}}$ is full row rank. As illustrated in Fig.~\ref{fig_scale}, by combining $\{\mathbf{H}^{\mathrm{U}}_{kj}(s,l):s\in\{(l'\!-\!1)d_k\!+\!1,(l'\!-\!1)d_k\!+\!2, ...,l'd_k\}\}$ with $\{\mathbf{H}^{\mathrm{V}}_{kj}(s',l'):s'\in\{(l\!-\!1)d_j\!+\!1,(l\!-\!1)d_j\!+\!2,...,ld_k\}\}$, $l,l'\in\{1,...,c\}$, we obtain $c^2$ combinations. By collecting the blocks with the same combination index $(l,l')$ in different $\mathbf{H}^{\mathrm{U}}_{kj}$ and $\mathbf{H}^{\mathrm{V}}_{kj}$, we obtain $c^2$ submatrices identical to the $\mathbf{H}_{\mathrm{all}}$ before scaling. Since these submatrices are full rank almost surely and are on different rows and columns of $\mathbf{H}^{\mathrm{D}}_{\mathrm{all}}$, $\mathbf{H}^{\mathrm{D}}_{\mathrm{all}}$ is full rank almost surely. This completes the proof.
}

\mynote{New proof to eliminate the bug.}
\subsubsection{Proof of Corollary~\ref{cor:sym}}
\mysubnote{Highlight the outline of the proof.}

 For notational convenience, we will use notation $(M\times N, d)^K$ to represent the configuration of a symmetric MIMO interference network, where the meaning of the letters are the as same those in Cor.~\ref{cor:sym}.

 The ``only if" side can be easily derived from \eqref{eqn:f3}. We adopt the following procedures to prove the ``if" side:
 \begin{itemize}
 \item[A.] Construct one special category of channel state $\{\mathbf{H}_{kj}\}$.
 \item[B.] Show that $\mathbf{H}_{\mathrm{all}}$ is full rank almost surely under the special category of channel state.
 \item[C.] From Cor.~\ref{cor:confdomin}, if Procedure B is completed, $\mathbf{H}_{\mathrm{all}}$ is full rank almost surely and hence we prove the corollary.
\end{itemize}

\mysubnote{Details of the proof.}

Now we start the detailed proof following the outline illustrated above. We first have two lemmas.
\begin{Lem}[Sufficient Condition for Full Rankness]\label{lem:vdomin} Denote $\mathcal{H}^{\mathrm{V}}_k=\mbox{span}(\{\mathbf{h}_{kjpq}:j\in\{1,...,K\}, j\neq k, p,q\in\{1,...,d\}\})\cap\mathcal{V}$, where $\mathcal{V}=\mbox{span}\bigg(\bigg[\!\!\begin{array}{c}\mathbf{0}_{D^{\mathrm{U}}\!\!
\times\! D^{\mathrm{V}}}\\ \mathbf{I}_{D^{\mathrm{V}}\!\!\times\! D^{\mathrm{V}}}\end{array}\!\!\bigg]\bigg)$, $D^{\mathrm{U}}=K\!(N\!-\!d)d$, $D^{\mathrm{V}}=K\!(M\!-\!d)d$. When $N\ge 2d$, $\mathbf{H}_{\mathrm{all}}$ is full row rank almost surely if the basis vectors of all $\mathcal{H}^{\mathrm{V}}_k$, $k\in\{1,...,K\}$ are linearly independent.
\end{Lem}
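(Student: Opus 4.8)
\textbf{Proof plan for Lemma~\ref{lem:vdomin}.}
The goal is to reduce the full-row-rankness of $\mathbf{H}_{\mathrm{all}}$ to a statement about $K$ spaces $\mathcal{H}^{\mathrm{V}}_k$ that are much easier to analyze. The plan is as follows. First I would observe, by looking at the block structure of $\mathbf{H}_{\mathrm{all}}$ in Fig.~\ref{fig_Hall}, that the rows of $\mathbf{H}_{\mathrm{all}}$ split naturally into $K$ groups according to the Rx index $k$: the rows $\{\mathbf{h}_{kjpq}:j\neq k,\,p,q\in\{1,\dots,d\}\}$ have their $\mathbf{H}^{\mathrm{U}}_{kj}$-entries confined to the $\mathbf{U}$-columns associated with Rx $k$ (a block of width $d(N-d)$), while the $\mathbf{H}^{\mathrm{V}}_{kj}$-entries spread across the $\mathbf{V}$-columns of all the transmitters. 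Thus full row rank of $\mathbf{H}_{\mathrm{all}}$ is equivalent to: no nontrivial linear combination of \emph{all} the rows vanishes. I would then argue that because each Rx-$k$ group ``owns'' a private block of $\mathbf{U}$-columns that no other group touches, a vanishing combination must vanish block-by-block on those private columns; this forces, within each group $k$, the coefficient vector restricted to the $\mathbf{U}$-part to be zero \emph{unless} the corresponding combination of rows already lies in the subspace $\mathcal{V}$ where the $\mathbf{U}$-coordinates are identically zero. This is exactly where the condition $N\ge 2d$ enters: it guarantees that the ``private'' $\mathbf{U}$-block within group $k$ (which is built, via \eqref{eqn:hu}, from a $d\times(N-d)$ slice of channel coefficients, tiled $d$ times) has more columns than the $d^2$ rows it must kill, so generically the only way a combination of the $d^2$ rows of group $k$ can have zero $\mathbf{U}$-part is to be a genuine element of $\mathcal{H}^{\mathrm{V}}_k=\mathrm{span}(\{\mathbf{h}_{kjpq}\})\cap\mathcal{V}$.

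Having made that reduction, the residual question is whether a combination lying in $\bigoplus_k \mathcal{H}^{\mathrm{V}}_k$ — i.e.\ a combination whose total $\mathbf{U}$-part is zero and which therefore lives entirely in the $\mathbf{V}$-coordinate space $\mathcal{V}$ — can be zero. Such a combination is, by definition, a linear combination of basis vectors drawn from $\mathcal{H}^{\mathrm{V}}_1,\dots,\mathcal{H}^{\mathrm{V}}_K$; if the union of all these basis vectors is linearly independent, the only such combination is the trivial one, and we conclude $\mathbf{H}_{\mathrm{all}}$ has full row rank. To make ``generically'' precise I would invoke Cor.~\ref{cor:confdomin} (equivalently Prop.~\ref{prop:det} / Lem.~\ref{lem:nonzero}): rank is a configuration property, so it suffices that \emph{some} channel realization achieves full rank, and the genericity claims about the private $\mathbf{U}$-blocks and about $\dim\mathcal{H}^{\mathrm{V}}_k$ hold almost surely by the same non-vanishing-of-a-polynomial argument. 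Put together: $N\ge 2d$ $\Rightarrow$ each group's $\mathbf{U}$-block is ``wide enough'' $\Rightarrow$ any rank-deficiency of $\mathbf{H}_{\mathrm{all}}$ descends to a linear dependence among the basis vectors of the $\mathcal{H}^{\mathrm{V}}_k$'s; contrapositively, linear independence of those basis vectors gives full row rank a.s.

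I expect the main obstacle to be making the block-elimination step rigorous — specifically, showing that when a combination of the $d^2$ rows in Rx-group $k$ has zero $\mathbf{U}$-component, that combination generically lies in $\mathcal{H}^{\mathrm{V}}_k$ rather than merely in some larger intersection. The subtlety is that $\mathbf{H}^{\mathrm{U}}_{kj}$ for different $j$ share the same $\mathrm{diag}[d_k]$ tiling pattern (they are block-diagonal replications of a single $d\times(N-d)$ channel slice), so the $d^2$ rows do not see $d^2$ independent directions in the private $\mathbf{U}$-block but only $d(N-d)$ of them arranged in a structured, block-diagonal way; I would need to check that, because $N-d\ge d$, each of the $d$ diagonal tiles individually has trivial left-null-space among its $d$ relevant rows, so the only left-combinations killing the whole $\mathbf{U}$-block are those that are "balanced" across the $K-1$ transmitters in a way captured precisely by membership in $\mathrm{span}(\{\mathbf{h}_{kjpq}\})\cap\mathcal{V}$. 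Once that structural bookkeeping is done, the rest is a direct-sum / linear-independence argument plus an appeal to Cor.~\ref{cor:confdomin} to upgrade ``for a generic realization'' to ``almost surely.''
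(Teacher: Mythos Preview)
Your overall strategy matches the paper's: use that the $\mathbf{U}$-columns split into $K$ disjoint ``private'' blocks indexed by $k$ (the paper records this as the orthogonality $\mathbf{h}^{\mathrm{U}}_{kjpq}\perp\mathbf{h}^{\mathrm{U}}_{k'j'p'q'}$ for $k\neq k'$), so that in any vanishing row-combination each Rx-$k$ partial sum lies in $\mathcal{H}^{\mathrm{V}}_k$, and the hypothesis then forces all these partial sums to zero. The gap is in what comes next---the step you yourself flag as the ``main obstacle.'' Once every partial sum is zero you still need all the coefficients $\alpha_{kjpq}$ to vanish, i.e.\ the rows \emph{within} each Rx-group $k$ must themselves be linearly independent. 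You try to extract this from the private $\mathbf{U}$-block via $N\ge 2d$, but the arithmetic does not close: group $k$ contains $(K-1)d^2$ rows (one for each $j\neq k$, $p$, $q$), not $d^2$, all sharing the same $d(N-d)$ private $\mathbf{U}$-columns; even after exploiting the $\mathrm{diag}[d]$ tiling, each of the $d$ tiles carries $(K-1)d$ rows against only $N-d$ columns. So $N\ge 2d$ alone cannot make the $\mathbf{U}$-block certify within-group independence once $K\ge 3$.

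The paper closes this gap with a separate sub-lemma that works on the $\mathbf{V}$-side instead: for fixed $k$, the submatrices $\mathbf{H}^{\mathrm{V}}_{kj}$ for different $j$ sit in \emph{disjoint} $\mathbf{V}$-column blocks (the Tx-$j$ columns), so within-group full row rank follows as soon as each individual $\mathbf{H}^{\mathrm{V}}_{kj}\in\mathbb{C}^{d^2\times d(M-d)}$ is full row rank, which holds a.s.\ when $M\ge 2d$. (The lemma is stated under $N\ge 2d$, but the appendix proof actually invokes $M-d\ge d$; in Cor.~\ref{cor:sym} both $M\ge 2d$ and $N\ge 2d$ are assumed, so this is harmless there.) With that sub-lemma in hand, your reduction and the final direct-sum argument go through exactly as you outline.
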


\proof Please refer to Appendix\ref{pf_lem:vdomin}\hspace{-3.8mm}{\color{white}\scriptsize$\blacklozenge$}\hspace{1.9mm} for the proof.
\endproof

\begin{Lem}[Full Rankness of Special Matrices]\label{lem:structure}
A matrix $\mathbf{H}_{\mbox{\scriptsize sub}}$ with the following structure is full rank almost surely.
\begin{itemize}
\item[S1.]$\mathbf{H}_{\mbox{\scriptsize sub}}$ is composed of $d\times d$ blocks, each block is composed of $K\times K$ sub-blocks, and each sub-block is aggregated by $M-d$ number of $1\times M-d$ vectors. Matrix  $\textcircled{\scriptsize 1}$ in Fig.~\ref{fig_symid_sub} illustrates an example with $d=2$, $K=2$, $M=4$.
\item[S2.]Denote the sub-blocks as $\mathbf{B}_{ss'kk'}$, $s,s'\in\{1,...,d\}$, $k,k'\in\{1,...,K\}$, where $s$, $s'$ denote the vertical and horizontal position of the block, and $k$, $k'$ denote the vertical and horizontal position of the sub-block within the block (e.g. $\mathbf{B}_{1211}$ in Fig.~\ref{fig_symid_sub}). All diagonal blocks are block-diagonal, i.e. $\mathbf{B}_{sskk'}=\mathbf{0}$, if $k\neq k'$. Denote the $k$-th diagonal sub-block in block $s$ as $\mathbf{B}^{\mathrm{D}}_{sk}$ (e.g. $\mathbf{B}^{\mathrm{D}}_{12}$ in Fig.~\ref{fig_symid_sub}).
\item[S3.]   The elements in every $\mathbf{B}^{\mathrm{D}}_{sk}$ are independent random variables drawn from continuous distribution.
\item[S4.] $\mathbf{B}^{\mathrm{D}}_{sk}$ is independent of all the diagonal sub-blocks with different sub-block index, i.e. $\{\mathbf{B}^{\mathrm{D}}_{s'k'}:k'\neq k\}$ and all the sub-blocks in the same columns and rows, i.e. $\{\mathbf{B}_{ss'kk'}, \mathbf{B}_{s'sk'k}:s'\neq s\,\mbox{or}\, k'\neq k\}$. The vectors in the off-diagonal blocks are either $\mathbf{0}$, or independent of \emph{all} diagonal sub-blocks, or repetition of a certain vector in the diagonal sub-blocks (positioned in different columns).
\item[S5.]Define diagonal sub-blocks $\mathbf{B}^{\mathrm{D}}_{sk}$ and $\mathbf{B}^{\mathrm{D}}_{s'k'}$ are \emph{associated}, if a certain vectors in sub-blocks $\mathbf{B}_{ss'kk'}$ or $\mathbf{B}_{s'sk'k}$, $s\neq s'$, $k\neq k'$ is a repetition of a certain vector in the diagonal sub-blocks. Each diagonal sub-block $\mathbf{B}^{\mathrm{D}}_{sk}$ is associated with at most one diagonal sub-block in the neighboring blocks with different sub-block index, i.e.  $\mathbf{B}^{\mathrm{D}}_{(\!s\!-\!1\!)k'}$ and $\mathbf{B}^{\mathrm{D}}_{(\!s\!+\!1\!)k''}$, for some $k',k''\neq k$. Note that when $s=1$ or $d$, each diagonal sub-block is associated with at most one sub-block.
\end{itemize}
\begin{figure}[h] \centering
\includegraphics[scale=0.34]{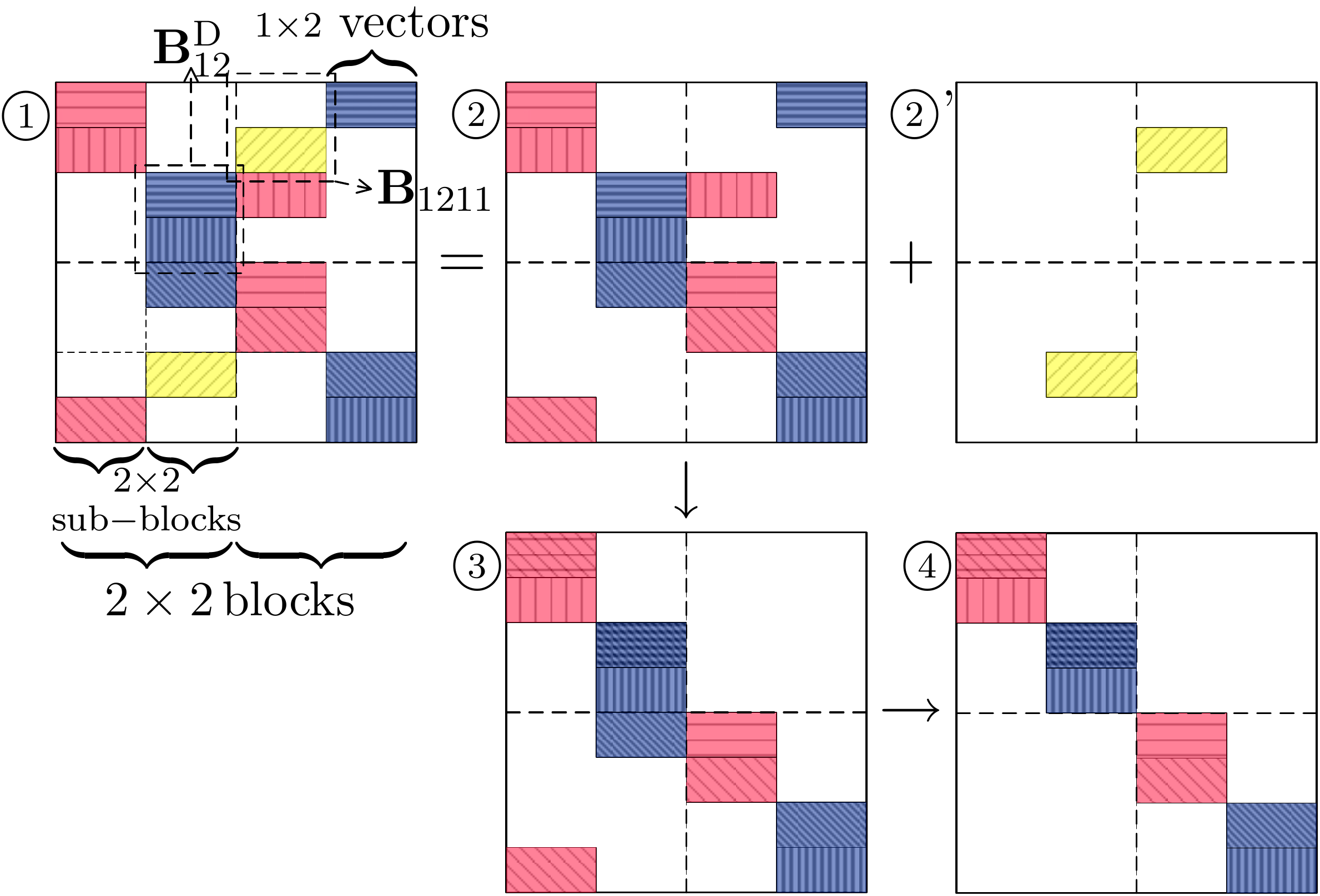}
\caption {Outline of the proof of Lem.~\ref{lem:structure}.}
\label{fig_symid_sub}
\end{figure}
\end{Lem}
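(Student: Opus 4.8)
\textbf{Proof proposal for Lemma~\ref{lem:structure}.}

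The plan is to prove full rankness by induction, using the block structure to peel off rows and columns while preserving the defining properties S1--S5. Since $\mathbf{H}_{\mbox{\scriptsize sub}}$ is square (the $d\times d$ blocks are each $K(M-d)\times K(M-d)$, so the whole matrix is $dK(M-d)\times dK(M-d)$), full rankness is equivalent to a nonvanishing determinant. By Lemma~\ref{lem:nonzero}, it suffices to exhibit \emph{one} assignment of the independent random entries --- equivalently, to show the determinant, viewed as a polynomial in the independent entries of the diagonal sub-blocks $\mathbf{B}^{\mathrm{D}}_{sk}$, is not identically zero. So the real task is a combinatorial/algebraic argument that a certain monomial survives with nonzero coefficient in the Leibniz expansion of $\det\mathbf{H}_{\mbox{\scriptsize sub}}$.

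First I would set up the induction on the number of ``associations'' described in S5 (or, alternatively, on $d$). The base case is when there are no associations at all: then, by S4, every off-diagonal block consists of zeros or of vectors independent of all diagonal sub-blocks, and a cofactor/Schur-complement expansion along the block-diagonal structure of the diagonal blocks (S2: $\mathbf{B}_{sskk'}=\mathbf{0}$ for $k\neq k'$) reduces the determinant to a product of determinants of the $\mathbf{B}^{\mathrm{D}}_{sk}$ times a factor involving only the genuinely independent off-diagonal entries; each such factor is a nonzero polynomial (its leading diagonal monomial cannot be cancelled), so by Lemma~\ref{lem:nonzero} the product is nonzero almost surely. For the inductive step, I would pick a diagonal sub-block $\mathbf{B}^{\mathrm{D}}_{sk}$ that is associated (via a repeated vector in $\mathbf{B}_{ss'kk'}$ or $\mathbf{B}_{s'sk'k}$) with exactly one neighboring diagonal sub-block $\mathbf{B}^{\mathrm{D}}_{s'k'}$; the key point from S5 is that this pairing is ``local'' --- each diagonal sub-block participates in at most one such link per direction, so the associations form disjoint chains. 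I would then argue that one can choose, for the rows/columns indexed by $\mathbf{B}^{\mathrm{D}}_{sk}$, a permutation contribution to the determinant that uses the genuinely independent entries of $\mathbf{B}^{\mathrm{D}}_{sk}$ (not the repeated vector), thereby ``decoupling'' this sub-block; deleting its rows and columns leaves a smaller matrix still satisfying S1--S5 but with one fewer association, to which the induction hypothesis applies. Concretely this means showing the coefficient of the monomial $\big(\prod \text{diagonal entries of }\mathbf{B}^{\mathrm{D}}_{sk}\big)\cdot(\text{monomial from the reduced matrix})$ in $\det\mathbf{H}_{\mbox{\scriptsize sub}}$ is nonzero, which follows because any permutation hitting that monomial must restrict to the identity-type permutation on the $\mathbf{B}^{\mathrm{D}}_{sk}$ block and on the reduced block independently.

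The hard part will be the bookkeeping in the inductive step: making precise that a surviving monomial of the full determinant factors as (monomial using only $\mathbf{B}^{\mathrm{D}}_{sk}$'s own independent entries) $\times$ (surviving monomial of the reduced determinant), \emph{and} that no other term of the Leibniz sum cancels it. This requires carefully tracking which entries are repetitions of which vectors (S4's trichotomy) and using S5's ``at most one association per neighbor'' to guarantee that the chain structure never forces a repeated vector into the chosen monomial. I would handle this by a support/bipartite-matching argument: build the bipartite graph of rows vs. columns with an edge whenever the entry is a nonzero polynomial in the \emph{independent} variables, show S1--S5 guarantee a perfect matching avoiding all ``repeated-vector'' positions (e.g. by orienting each association chain and always picking the independent diagonal entry), and conclude that the corresponding Leibniz term is a nonzero monomial of maximal independent degree, hence uncancellable. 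The worked example in Fig.~\ref{fig_symid_sub} (with $d=2$, $K=2$, $M=4$) should be used to illustrate the chain-decoupling step. Once this is done, Lemma~\ref{lem:nonzero} converts the nonvanishing polynomial into almost-sure full rankness, completing the proof.
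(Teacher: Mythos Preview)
Your approach is correct in spirit but takes a genuinely different route from the paper. The paper's proof is a direct Gaussian-elimination argument: first it writes $\mathbf{H}_{\mbox{\scriptsize sub}}$ as a sum of two matrices, one carrying the diagonal sub-blocks together with the off-diagonal sub-blocks that are \emph{repetitions} of diagonal vectors, the other carrying only the off-diagonal sub-blocks that are \emph{independent} of all diagonal sub-blocks. Since the second summand is independent of the first, it suffices (via Lemma~\ref{lem:nonzero}, specializing those independent entries to zero) to show the first summand is full rank. On that reduced matrix the paper then performs row operations block by block --- using the invertible $\mathbf{B}^{\mathrm{D}}_{sk}$ to clear the super-diagonal blocks $\mathbf{B}_{(s)(s+1)kk'}$ in the order $s=d-1,d-2,\ldots,1$ --- to reach block upper-triangular form, and then block-diagonal form. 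The association pattern S5 is exactly what guarantees these eliminations do not propagate: each diagonal sub-block is linked to at most one neighbor per direction, so clearing one super-diagonal block never contaminates another diagonal sub-block's full-rankness or the block-diagonal structure S2.

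By contrast, you argue combinatorially, via a surviving Leibniz monomial and a bipartite-matching construction that avoids repeated-vector positions, with induction on the number of associations. This is a legitimate alternative, and your use of S5 to organize associations into disjoint chains is the right structural observation. Two comments. First, your base case is phrased as a factorization (``product of $\det\mathbf{B}^{\mathrm{D}}_{sk}$ times a factor in the independent off-diagonal entries''), which is not literally how the determinant decomposes; the clean statement is the specialization one --- set the independent off-diagonal entries to zero and invoke S2 --- which is precisely the paper's first move. Second, the uncancellability step you flag as ``the hard part'' is indeed delicate: because repeated vectors make distinct permutations yield the \emph{same} monomial, a naive maximal-degree argument is not enough, and you would need to count signed multiplicities. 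The paper's row-reduction sidesteps this entirely, which is what it buys; your approach buys a more transparent link to Lemma~\ref{lem:nonzero} and would generalize more readily if the association graph were more complicated than chains.
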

\proof
Please refer to Appendix\ref{pf_lem:structure}\hspace{-3.7mm}{\color{white}\scriptsize$\blacklozenge$}\hspace{1.8mm} for the proof.
\endproof

Now we start the main procedures of the proof. We first narrow down the scope:
\begin{itemize}
\item[-]When $K=2$, the proof is straightforward.
\item[-]If the corollary is true in the boundary cases, i.e. $M+N=(K+1)d$, it is true in general.
\item[-]With Cor.~\ref{cor:scaleable}, it is sufficient to consider the case in which $\mbox{gcd}(d,N,M)=1$. In the boundary cases, since $d|(M\!+\!N)$, $\mbox{gcd}(d,N,M)=1\Rightarrow\mbox{gcd}(d,M)=1$.
\item[-]If $d=1$, the corollary is reduced to a special case of Cor.~\ref{cor:div}.
\end{itemize}
Hence, we focus on cases in which $K\ge 3$, $M+N=(K+1)d$, $\mbox{gcd}(d,M)=1$, and $d\ge2$. To improve readability of the proof, we adopt a $(7\times 8, 3)^4$ network as an example. From Fig.~\ref{fig_Hall}, matrix $\mathbf{H}_{\mathrm{all}}$ of the example network is given by the first matrix\footnote{Note that here the value of the submatrices $\mathbf{H}^{U}_{kj}$ are specified. We will explain how we construct this specification later.} in Fig.~\ref{fig_symid}.

\begin{figure*}[t] \centering
\includegraphics[scale=0.95]{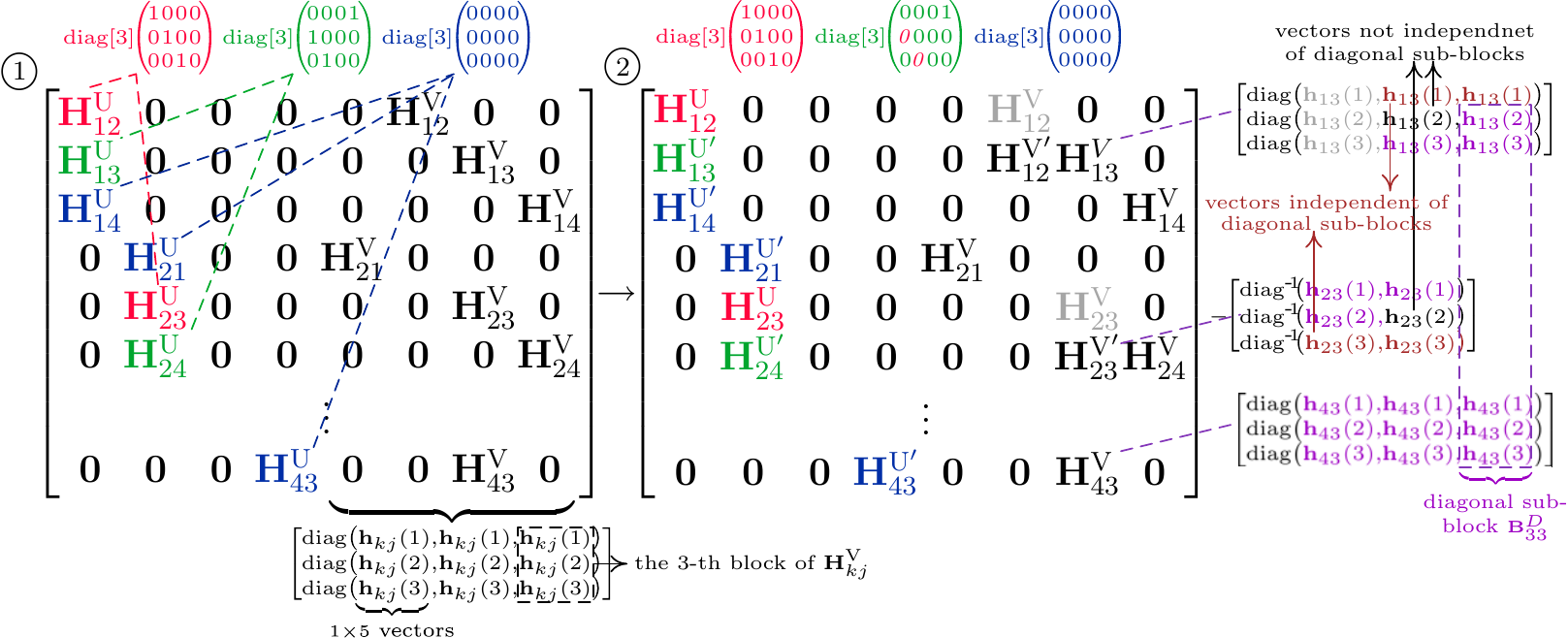}
\caption {Illustration of proving that $\mathbf{H}_{\mathrm{all}}$ is full rank in a $(7\times 8, 3)^4$ network.}
\label{fig_symid}
\end{figure*}

\mysubnote{Construct a specific channel state.}
\mysubnote{It take me a lot of effort to make all ends meet.}

\begin{itemize}
\item[A.] Specify $\{\mathbf{H}^{\mathrm{U}}_{kj}\}$ as in Fig.~\ref{fig_specifyhu}, in which
\begin{eqnarray}
\!\!\!\!\!\!\!P(k,j)\!\!\!\!\!&=&\!\!\!\!\!\mbox{mod}\big(d(\mbox{mod}(j\!-\!k,K)\!-\!1),
N\!-\!d\big)\label{eqn:pos}\\
\!\!\!\!\!\!\!R(k,j)\!\!\!\!\!&=&\!\!\!\!\!\!\left\{\!\!\!\!\begin{array}{ll}
d &\!\!\!\!\mbox{if mod}(j\!-\!k,K)\!\le\! \lfloor\!\frac{N}{d}\!\rfloor,\\
\mbox{mod}(N\!-\!1,d)&\!\!\!\!\mbox{if }\mbox{mod}(j\!-\!k,K)\!=\! \lfloor\!\frac{N}{d}\!\rfloor\!+\!1,\\
0 &\!\!\!\!\mbox{otherwise.}
\end{array}\right.\label{eqn:rownumber}
\end{eqnarray}
\begin{figure}[h] \centering
\includegraphics[scale=0.8]{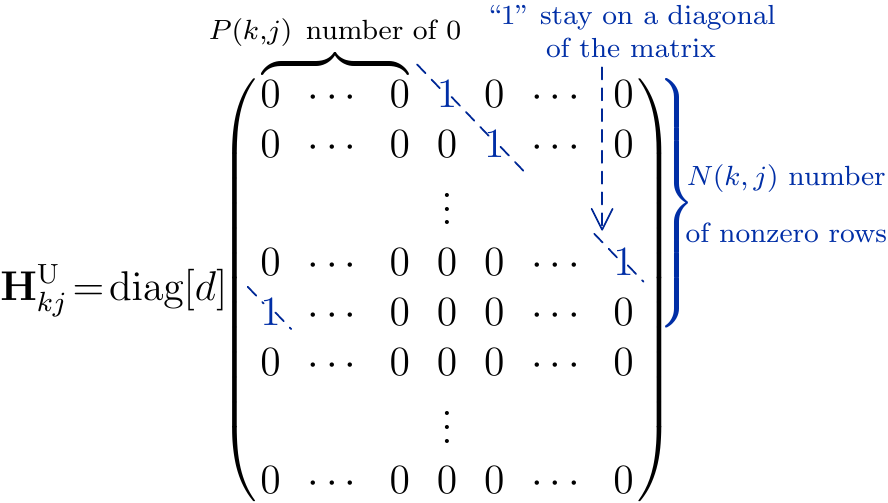}
\caption {Specify $\{\mathbf{H}^{\mathrm{U}}_{kj}\}$.}
\label{fig_specifyhu}
\end{figure}

Matrix  $\textcircled{\scriptsize 1}$ in Fig.~\ref{fig_symid} serves as an example of this specification. Both \eqref{eqn:pos}, \eqref{eqn:rownumber} are cyclic symmetrical w.r.t. user indices $k,j$, i.e. index pairs $(k,j)$ and $(\mbox{mod}(k+\delta),\mbox{mod}(j+\delta))$ lead to the same $P(k,j)$ and $R(k,j)$, $\forall \delta\in\mathbb{Z}$. This property will help us to exploit the symmetry of the network configuration in the proof.
\item[B.] From \eqref{eqn:hv}, each $\mathbf{H}^{\mathrm{V}}_{kj}$ consists of $d$ independent $1\times (M-d)$ vectors repeating for $d$ times. For notational convenience, denote these vectors as $\mathbf{h}_{kj}(1)\sim\mathbf{h}_{kj}(d)$ and denote their $s$-th time of appearance as the $s$-th block of $\mathbf{H}^{\mathrm{V}}_{kj}$. The small matrix below matrix $\textcircled{\scriptsize 1}$ in Fig.~\ref{fig_symid} has given such an example. As illustrated by matrix $\textcircled{\scriptsize 2}$ in Fig.~\ref{fig_symid}, under the specification in Fig.~\ref{fig_specifyhu}, we can adopt row operations to remove the ``1"s that reappear in the same columns. From Lem.~\ref{lem:vdomin}, it is sufficient to prove that the row vectors which are occupied by the $s$-th block of $\mathbf{H}^{\mathrm{V}}_{kj}$ are linearly independent, where $s$, $k$, and $j$ satisfy:
    \begin{eqnarray}
    s\!\!\!\!\!&\in&\!\!\!\!\!\!\left\{\!\!\!\!\begin{array}{ll}
    \emptyset &\!\!\!\!\mbox{if mod}(j\!-\!k,K)\!\le\! \lfloor\!\frac{N}{d}\!\rfloor\!-\!1,\\
    \{\mbox{mod}(N,d)\!+\!1,...,d\} &\!\!\!\!\mbox{if }\mbox{mod}(j\!-\!k,K)
    \!=\! \lfloor\!\frac{N}{d}\!\rfloor,\\
    \{1,...,d\} &\!\!\!\!\mbox{otherwise.}
    \end{array}\right.\label{eqn:nullnumber}
    \end{eqnarray}
    Also note that after the row operation, the $1\sim (d-1)$-th block of $\mathbf{H}^{\mathrm{V}}_{kj}$, $k=\mbox{mod}(j-2,K)+1$ is replicated, taken a minus sign and moved to other rows. Denote these new submatrices as $\{\mathbf{H}^{\mathrm{V}'}_{kj}\}$, $k=\mbox{mod}(j-2,K)+1$, $j\in\{1,...,K\}$. Now we can adopt Lem.~\ref{lem:structure} to prove the linear independence of the row vectors specified by \eqref{eqn:nullnumber}. Specifically, for every $j\in\{1,...,K\}$, select the following vectors:
    \begin{itemize}
    \item \emph{When} $k=\mbox{mod}(j-2,K)+1$\emph{:} $\mathbf{h}_{kj}(1)\sim\mathbf{h}_{kj}(d-s)$ in the $s$-th block of $\mathbf{H}^{\mathrm{V}'}_{kj}$, where $s\in\{1,...,d-1\}$.
    \item \emph{When} $k=\mbox{mod}(j-\lfloor\frac{N}{d}\rfloor-1,K)+1$\emph{:}
        $\mathbf{h}_{kj}(d^\dag)\sim\mathbf{h}_{kj}(d)$ in the $s$-th block of $\mathbf{H}^{\mathrm{V}'}_{kj}$, where $d^\dag=d+\mbox{mod}(N,d)+1-s$, $s\in\{\mbox{mod}(N,d)+1,...,d\}$.
    \item \emph{When} $k=\mbox{mod}(j-\lfloor\frac{N}{d}\rfloor-2,K)+1$\emph{:}
        $\mathbf{h}_{kj}(d^\dag)\sim\mathbf{h}_{kj}(d)$ in the $s$-th block of $\mathbf{H}^{\mathrm{V}'}_{kj}$, where $d^\dag=\max\big(1,\mbox{mod}(N,d)+1-s\big)$, $s\in\{1,...,d\}$.
    \item \emph{When} $k=\mbox{mod}(j-l,K)+1$\emph{:} $l\in\{\lfloor\frac{N}{d}\rfloor+3,...,K\}$: all vectors in $\mathbf{H}^{\mathrm{V}}_{kj}$.
    \end{itemize}
    Then as illustrated by the small matrices on the right side of Fig.~\ref{fig_symid}, by adopting this selection mechanism, we have chosen $Kd(M-d)$ vectors, which form $Kd$ number of $(M-d)\times(M-d)$ submatrices positioned on different rows and columns. Denote these submatrices as $\mathbf{B}_{sj}$, where $s\in\{1,...,d\}$ and $j\in\{1,...,K\}$ represent the block and user index, respectively. Map $\mathbf{B}_{sj}$ to $\mathbf{B}^{\mathrm{D}}_{s'j}$ in Lem.~\ref{lem:structure}, where $s$ and $s'$ satisfy: $s= \mbox{mod}\big(d+(d-s')\mbox{mod}(M,d)-1,d\big)+1$. As $\mbox{gcd}(M,d)=1$, this is a one to one mapping. Since the submatrices are positioned on different rows and columns, we can move them to the principle diagonal and verify that the structures required in Lem.~\ref{lem:structure} are satisfied. This completes the proof.
\end{itemize}

\subsubsection{Proof of Corollary~\ref{cor:div}}
We first prove some key lemmas and then turn to the main flow of the proof.
\begin{Lem}[Sufficient Conditions for IA Feasibility]\label{lem:suf1}  If there exists a set of binary variables $\{c^{\mathrm{t}}_{kjpq},c^{\mathrm{r}}_{kjpq}\in\{0,1\}\}$, $k,j\in\{1,...,K\}$, $k\neq j$, $p\in \{1,...,d_{k}\}$, $q\in \{1,...,d_{j}\}$ that satisfy the following constraints, Problem~\ref{pro:pIA_c} has solution almost surely.
\begin{eqnarray}
&&c^{\mathrm{t}}_{kjpq}+c^{\mathrm{r}}_{kjpq}=1,\label{eqn:cv1}
\\&&\sum_{j=1,\neq k}^K \sum_{q=1}^{d_j} c^{\mathrm{r}}_{kjpq}\le N_k-d_k,\; \forall k, \label{eqn:cv2r}
\\&&\sum_{k=1,\neq j}^K \sum_{p=1}^{d_k} c^{\mathrm{t}}_{kjpq}\le M_j-d_j,\; \forall j, \label{eqn:cv2t}
\\\nonumber && c^{\mathrm{t}}_{kj1q}=...=c^{\mathrm{t}}_{kjd_kq},\; \forall k,j,q, \mbox{ OR }
\\&&c^{\mathrm{t}}_{kjp1}=...=c^{\mathrm{t}}_{kjpd_j},\; \forall k,j,p. \label{eqn:cv3}
\end{eqnarray}
\end{Lem}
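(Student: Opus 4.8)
The plan is to deduce from \eqref{eqn:cv1}--\eqref{eqn:cv3} that the matrix $\mathbf{H}_{\mathrm{all}}$ of Fig.~\ref{fig_Hall} has full row rank almost surely, and then conclude by Thm.~\ref{thm:feasible_s}. By \eqref{eqn:cv3} I may assume, without loss of generality (the other alternative follows by interchanging the roles of transmitters and receivers, i.e.\ of the $\mathbf{H}^{\mathrm{U}}$ and $\mathbf{H}^{\mathrm{V}}$ blocks and of the stream indices $p$ and $q$), that $c^{\mathrm{t}}_{kjpq}$, and hence $c^{\mathrm{r}}_{kjpq}=1-c^{\mathrm{t}}_{kjpq}$, do not depend on $p$; write $\bar c^{\mathrm{r}}_{kjq},\bar c^{\mathrm{t}}_{kjq}$ for the common values. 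I index a row of $\mathbf{H}_{\mathrm{all}}$ by the quadruple $(k,j,p,q)$ of the polynomial $f_{kjpq}$ in \eqref{eqn:czero_poly2}, an $\mathbf{H}^{\mathrm{U}}$-column by $(k,p,n)$ (the variable $\tilde u_k(n,p)$) and an $\mathbf{H}^{\mathrm{V}}$-column by $(j,q,m)$ (the variable $\tilde v_j(m,q)$), and read off from \eqref{eqn:hu}--\eqref{eqn:hv} and \eqref{eqn:v} that the entry of $\mathbf{H}_{\mathrm{all}}$ in row $(k,j,p,q)$ equals $h_{kj}(d_k+n,q)$ in column $(k,p,n)$, equals $h_{kj}(p,d_j+m)$ in column $(j,q,m)$, and is $0$ in every other column.

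First I would build a system of distinct column representatives for the rows. Call $(k,j,p,q)$ a \emph{receive row} if $\bar c^{\mathrm{r}}_{kjq}=1$ and a \emph{transmit row} if $\bar c^{\mathrm{t}}_{kjq}=1$; by \eqref{eqn:cv1} these two classes partition the rows. For each $k$ put $S_k=\{(j,q):\bar c^{\mathrm{r}}_{kjq}=1\}$; \eqref{eqn:cv2r} gives $|S_k|\le N_k-d_k$, so fix an injection $\psi_k\colon S_k\hookrightarrow\{1,\dots,N_k-d_k\}$ and send each receive row $(k,j,p,q)$ to column $(k,p,\psi_k(j,q))$. For each $j$ and each $q$ put $T_{jq}=\{(k,p):k\neq j,\ \bar c^{\mathrm{t}}_{kjq}=1\}$ (note $(k,p)\in T_{jq}$ for every $p\le d_k$ once $\bar c^{\mathrm{t}}_{kjq}=1$, precisely because $c^{\mathrm{t}}$ is $p$-independent); \eqref{eqn:cv2t} gives $|T_{jq}|\le M_j-d_j$, so fix an injection $\eta_{jq}\colon T_{jq}\hookrightarrow\{1,\dots,M_j-d_j\}$ and send each transmit row $(k,j,p,q)$ to column $(j,q,\eta_{jq}(k,p))$. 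This assignment is injective, so it selects a $C\times C$ submatrix $\mathbf{H}^{\star}$ of $\mathbf{H}_{\mathrm{all}}$; it then suffices to prove that $\det\mathbf{H}^{\star}$ is a non-zero polynomial in the $h_{kj}(\cdot,\cdot)$, since it has no constant term and Lem.~\ref{lem:nonzero} will then give $\det\mathbf{H}^{\star}\neq0$ almost surely.

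Write $\mathbf{H}^{\star}=\left[\begin{smallmatrix}D_1&X\\Y&D_2\end{smallmatrix}\right]$ in the block form indexed by (receive rows / transmit rows) against (selected $\mathbf{H}^{\mathrm{U}}$-columns / selected $\mathbf{H}^{\mathrm{V}}$-columns); a counting check shows $D_1,D_2$ are square. From the explicit entries above, $D_1$ is block diagonal over $k$ with $k$-block $\mathbf{I}_{d_k}\otimes\mathbf{E}_k$, where $\mathbf{E}_k$ is the $|S_k|\times|S_k|$ matrix with $\big((j,q),(j',q')\big)$-entry $h_{kj}(d_k+\psi_k(j',q'),q)$ — the factor $\mathbf{I}_{d_k}$ appears exactly because $S_k,\psi_k$ are the same for all $p$; likewise $D_2$ is block diagonal over the pairs $(j,q)$ with $(j,q)$-block the $|T_{jq}|\times|T_{jq}|$ matrix $\mathbf{F}_{jq}$ having $\big((k,p),(k',p')\big)$-entry $h_{kj}(p,d_j+\eta_{jq}(k',p'))$. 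Since $\psi_k,\eta_{jq}$ are injective and channel matrices with different $(k,j)$ use disjoint variables, every entry of each $\mathbf{E}_k$ and each $\mathbf{F}_{jq}$ is a distinct channel coefficient, so $\det\mathbf{E}_k\not\equiv0$ and $\det\mathbf{F}_{jq}\not\equiv0$, hence $\det D_1=\prod_k(\det\mathbf{E}_k)^{d_k}$ and $\det D_2=\prod_{j,q}\det\mathbf{F}_{jq}$ are non-zero polynomials. Finally, the entries of $Y$ are the coefficients $h_{kj}(d_k+n,q)$ with $q$ a \emph{transmit} stream of $(k,j)$, whereas $D_1$ only involves such coefficients with $q$ a \emph{receive} stream; by \eqref{eqn:cv1} these two coefficient sets are disjoint, and both are disjoint from the $\mathbf{H}^{\mathrm{V}}$-type coefficients $h_{kj}(p,d_j+m)$ occurring in $X,D_2$. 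Specializing every coefficient of $Y$ to $0$ therefore leaves $D_1,D_2,X$ unchanged and makes $\mathbf{H}^{\star}$ block triangular, so $\det\mathbf{H}^{\star}$ specializes to $\pm(\det D_1)(\det D_2)$, a product of non-zero polynomials and hence non-zero; thus $\det\mathbf{H}^{\star}\not\equiv0$, $\mathbf{H}_{\mathrm{all}}$ has full row rank almost surely, and Thm.~\ref{thm:feasible_s} finishes the proof.

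The hard part is that the $\mbox{diag}[d_k]$ structure in \eqref{eqn:hu}--\eqref{eqn:hv} makes $\mathbf{H}_{\mathrm{all}}$ contain each channel coefficient with large multiplicity, so merely exhibiting a system of distinct column representatives — which \eqref{eqn:cv1}--\eqref{eqn:cv2t} already give — does not by itself certify that the selected $C\times C$ minor is non-zero. Condition \eqref{eqn:cv3} is exactly what pins the repetition pattern down to the Kronecker form $\mathbf{I}_{d_k}\otimes\mathbf{E}_k$ whose factor $\mathbf{E}_k$ has pairwise distinct entries, and the receive/transmit split from \eqref{eqn:cv1} is what lets the off-diagonal block $Y$ be zeroed out without disturbing $D_1$ or $D_2$; verifying that these two structural facts hold simultaneously — i.e.\ that the chosen $\psi_k,\eta_{jq}$ really do yield the claimed block-diagonal/triangular shape with all-distinct diagonal-block entries — is the delicate bookkeeping at the core of the argument.
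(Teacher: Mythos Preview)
Your proof is correct and follows essentially the same approach as the paper: select a square submatrix of $\mathbf{H}_{\mathrm{all}}$ via the allocation $\{c^{\mathrm{r}},c^{\mathrm{t}}\}$, partition it $2\times2$ according to receive/transmit rows against $\mathbf{H}^{\mathrm{U}}$/$\mathbf{H}^{\mathrm{V}}$ columns, show the diagonal blocks are block-diagonal with generic (hence nonsingular) blocks, and use \eqref{eqn:cv3} to argue that one off-diagonal block involves channel coefficients disjoint from the other three blocks, forcing the determinant to be a nonzero polynomial. The only cosmetic differences are that you work under the first alternative of \eqref{eqn:cv3} ($p$-independence) and consequently isolate the lower-left block $Y$, whereas the paper works under the second alternative ($q$-independence) and isolates $\mathbf{H}^{22}_{\mathrm{all}}$; and you conclude by specializing $Y\to0$, whereas the paper invokes the Leibniz expansion and independence directly. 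Your use of explicit injections $\psi_k,\eta_{jq}$ makes the column selection and the Kronecker structure $\mathbf{I}_{d_k}\otimes\mathbf{E}_k$ cleaner to read than the paper's column-removal description.
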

\proof Please refer to Appendix\ref{pf_lem_cor:sym}\hspace{-4mm}{\color{white}\scriptsize$\blacklozenge$}\hspace{2.1mm} for the proof.
\endproof
\mysubnote{Explain the physical meaning of the Lemma.}
\begin{Remark}[Interpretation of $c^{\mathrm{t}}_{kjpq},c^{\mathrm{r}}_{kjpq}$]
The binary variables $c^{\mathrm{t}}_{kjpq},c^{\mathrm{r}}_{kjpq}$ represent a \emph{constraint allocation} policy. An IA constraint $f_{kjpq}=0$ (defined in \eqref{eqn:czero_poly2}) can be assigned to transceivers with non-zero coefficients in $f_{kjpq}$, i.e. $\tilde{\mathbf{u}}_{kp}\in\mathbb{C}^{1\!\times\!(N_k\!-\!d_k)}$ or $\tilde{\mathbf{v}}_{jq}\in\mathbb{C}^{1\!\times\!(N_j\!-\!d_j)}$. Here $\tilde{\mathbf{u}}_{kp}$ $(\tilde{\mathbf{v}}_{jq})$ denotes the $p$ $(q)$-th column of $\tilde{\mathbf{U}}_{k}$ $(\tilde{\mathbf{V}}_{j})$. $c^{\mathrm{r}}_{kjpq}=1$ ($c^{\mathrm{t}}_{kjpq}=1$) means that the IA constraint $f_{kjpq}=0$ is assigned to the decorrelator (precoder) for the $p$ ($q$)-th stream at Rx $k$ (Tx $j$).~\hfill\IEEEQED
\end{Remark}
{
\begin{Remark}[Meaning of Constraints in Lem.~\ref{lem:suf1}]\label{rem:suf1}
\begin{itemize}
\item[]
\item {\eqref{eqn:cv1}:} Each IA constraint $f_{kjpq}=0$ is assigned once and only once.
\item {\eqref{eqn:cv2r}:} The total number of constraints assigned to the decorrelator of any stream, i.e. $\tilde{\mathbf{u}}_{kp}$ is no more than the length of this decorrelator, i.e. $\mbox{size}\big(\tilde{\mathbf{u}}_{kp}\big)=N_k-d_k$.
\item {\eqref{eqn:cv2t}:} The dual version of \eqref{eqn:cv2r}.
\item {\eqref{eqn:cv3}:} The constraint assignment policy $\{c^{\mathrm{r}}_{kjpq},c^{\mathrm{t}}_{kjpq}\}$ is symmetric w.r.t. Rx side stream index $p$ or Tx side stream index $q$.
~\hfill~\IEEEQED
\end{itemize}
\end{Remark}
}
The following lemma illustrate the relation between the \emph{sufficient} conditions proposed in Lem.~\ref{lem:suf1} and the \emph{necessary} conditions proposed in Thm.~\ref{thm:feasible}.

\begin{Lem}[Necessary Conditions of IA Feasibility]\label{lem:nes1}
A network configuration $\chi$ satisfies the necessary feasibility condition \eqref{eqn:f3}, if and only if there exists a set of binary variables $\{c^{\mathrm{t}}_{kjpq},c^{\mathrm{r}}_{kjpq}\in\{0,1\}\}$, $k,j\in\{1,...,K\}$, $k\neq j$, $p\in \{1,...,d_{k}\}$, $q\in \{1,...,d_{j}\}$ that satisfy \eqref{eqn:cv1}{--}\eqref{eqn:cv2t}.
\end{Lem}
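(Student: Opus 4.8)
The plan is to read the existence of a binary family $\{c^{\mathrm t}_{kjpq},c^{\mathrm r}_{kjpq}\}$ satisfying \eqref{eqn:cv1}--\eqref{eqn:cv2t} as the feasibility of an integral flow, and to match \eqref{eqn:f3} with the associated max-flow--min-cut (Hall-type) condition. I would build a capacitated directed network with a source $S$, a sink $T$, one \emph{constraint node} $w_{kjpq}$ for each IA constraint $f_{kjpq}$ (so $(k,j)\in\mathcal J$, $p\le d_k$, $q\le d_j$) with a unit-capacity edge $S\to w_{kjpq}$; one \emph{Rx-slot node} $u_{kp}$ with edge $u_{kp}\to T$ of capacity $N_k-d_k$; one \emph{Tx-slot node} $v_{jq}$ with edge $v_{jq}\to T$ of capacity $M_j-d_j$; and for each constraint the two uncapacitated edges $w_{kjpq}\to u_{kp}$ and $w_{kjpq}\to v_{jq}$. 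Let $C=\sum_{(k,j)\in\mathcal J}d_kd_j$ be the total number of constraints. All finite capacities are integers, so an integral max flow exists; a flow of value $C$ must saturate every edge out of $S$, and such a flow corresponds bijectively to a binary family with $c^{\mathrm r}_{kjpq}+c^{\mathrm t}_{kjpq}=1$ (unit flow through $w_{kjpq}$, split to one of its two out-edges) and with $\sum_{j,q}c^{\mathrm r}_{kjpq}\le N_k-d_k$, $\sum_{k,p}c^{\mathrm t}_{kjpq}\le M_j-d_j$ (capacities of the slot edges). Thus the lemma is equivalent to: the network carries a flow of value $C$ if and only if $\chi$ satisfies \eqref{eqn:f3}.

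For the easy direction (``binary family $\Rightarrow$ \eqref{eqn:f3}''), I would aggregate the constraints: for any $\mathcal J_{\mathrm{sub}}\subseteq\mathcal J$, sum \eqref{eqn:cv2r} over $p$ and over all $k$ with $(k,\cdot)\in\mathcal J_{\mathrm{sub}}$, keeping on the left only the terms with $(k,j)\in\mathcal J_{\mathrm{sub}}$; do the same with \eqref{eqn:cv2t}; add the two; and collapse $\sum_{(k,j)\in\mathcal J_{\mathrm{sub}}}\sum_{p,q}(c^{\mathrm r}_{kjpq}+c^{\mathrm t}_{kjpq})$ to $\sum_{(k,j)\in\mathcal J_{\mathrm{sub}}}d_kd_j$ using \eqref{eqn:cv1}. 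This yields \eqref{eqn:f3} verbatim. (Equivalently, a binary family is a routing of the full flow, so the min cut of the network is $\ge C$, which unfolds to the same inequalities.)

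For the converse (``\eqref{eqn:f3} $\Rightarrow$ flow of value $C$''), I would invoke max-flow--min-cut and show every finite cut has value $\ge C$. A finite cut is determined by the set $\mathcal W$ of constraint nodes kept on the source side, which forces every incident slot node to the source side too; its value is $(C-|\mathcal W|)+\sum_{(k,p)}(N_k-d_k)+\sum_{(j,q)}(M_j-d_j)$, the last two sums over slot nodes incident to $\mathcal W$. Taking $\mathcal W$ as large as possible for prescribed incident slot sets $\mathcal P$ (Rx slots) and $\mathcal Q$ (Tx slots), it suffices to prove
\[\sum_{(k,p)\in\mathcal P}(N_k-d_k)+\sum_{(j,q)\in\mathcal Q}(M_j-d_j)\ \ge\ \big|\{(k,j,p,q):(k,j)\in\mathcal J,\ (k,p)\in\mathcal P,\ (j,q)\in\mathcal Q\}\big|\]
for all $\mathcal P,\mathcal Q$. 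This is the one nontrivial point and the step I expect to be the main obstacle: a standard exchange argument should show the worst $(\mathcal P,\mathcal Q)$ is ``user-aligned''. For fixed $\mathcal Q$ and fixed Rx $k$, both sides are affine in $n_k:=|\{p:(k,p)\in\mathcal P\}|$, with slopes $N_k-d_k$ and $|\{(j,q)\in\mathcal Q:(k,j)\in\mathcal J\}|$, so the minimum of (left $-$ right) is attained with $n_k\in\{0,d_k\}$; as the choices for different $k$ (and, symmetrically, for the $\mathcal Q$-side) are independent and separable, the global minimizer can be taken of the form $\mathcal P=\{(k,p):k\in\mathcal R\}$, $\mathcal Q=\{(j,q):j\in\mathcal T\}$ for some Rx-set $\mathcal R$ and Tx-set $\mathcal T$. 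For such sets the right side equals $\sum_{(k,j)\in\mathcal J,\,k\in\mathcal R,\,j\in\mathcal T}d_kd_j$; putting $\mathcal J_{\mathrm{sub}}=\{(k,j)\in\mathcal J:k\in\mathcal R,\ j\in\mathcal T\}$ and discarding the nonnegative terms of $\mathcal R,\mathcal T$ that are not coordinates of any pair in $\mathcal J_{\mathrm{sub}}$ reduces the required inequality to \eqref{eqn:f3}. Hence the min cut equals $C$, an integral flow of value $C$ exists, and reading off the $c$'s from that flow completes the proof.
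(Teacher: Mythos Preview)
Your proof is correct and takes a cleaner route than the paper's. The paper proves the ``if'' direction constructively: it designs an explicit augmenting-path style procedure (Algorithm~\ref{alg:c}, the \emph{pressure transfer tree}) that starts from an arbitrary assignment satisfying \eqref{eqn:cv1} and repeatedly reroutes constraints until \eqref{eqn:cv2r}--\eqref{eqn:cv2t} hold; termination is argued by contradiction, showing that a stuck tree would violate the refined system~\eqref{eqn:f3_new}. You instead package the problem as an integral max-flow and invoke max-flow--min-cut directly. Structurally these are the same idea---the paper's algorithm is Ford--Fulkerson in disguise and its contradiction step is the min-cut certificate---but your formulation is shorter and makes the combinatorial content transparent.

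Your exchange argument is in fact a genuine addition. The paper asserts without proof that \eqref{eqn:f3} is equivalent to the finer family \eqref{eqn:f3_new} (indexed over arbitrary $\mathcal L_{\mathrm{sub}}\subseteq\mathcal L$), and then only uses \eqref{eqn:f3_new}. Your observation that the deficit $\sum_{(k,p)\in\mathcal P}(N_k-d_k)+\sum_{(j,q)\in\mathcal Q}(M_j-d_j)-|\{(k,j,p,q):(k,p)\in\mathcal P,(j,q)\in\mathcal Q\}|$ depends on $\mathcal P,\mathcal Q$ only through the counts $n_k,m_j$ and is separately affine in each, so its minimum over the box $\prod_k\{0,\dots,d_k\}\times\prod_j\{0,\dots,d_j\}$ is attained at a vertex, is exactly the missing justification for \eqref{eqn:f3}~$\Rightarrow$~\eqref{eqn:f3_new}. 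So the step you flagged as the main obstacle is fine, and actually fills a gap the paper leaves.

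One small caveat worth making explicit: when you ``discard the nonnegative terms of $\mathcal R,\mathcal T$ that are not coordinates of any pair in $\mathcal J_{\mathrm{sub}}$'', you are using $N_k\ge d_k$ and $M_j\ge d_j$. This is condition~\eqref{eqn:f1}, not a consequence of~\eqref{eqn:f3} alone, so strictly speaking the ``if'' direction needs~\eqref{eqn:f1} as a standing hypothesis (the same hidden assumption sits in the paper's Algorithm~\ref{alg:c}, where the capacities/pressures must be nonnegative for the argument to make sense).
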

\proof Please refer to Appendix\ref{pf_lem:nes1}\hspace{-4mm}{\color{white}\scriptsize$\blacklozenge$}\hspace{2.1mm} for the proof.
\endproof
\mysubnote{Describe the contribution of the lemmas.}
\begin{Remark}[Insight of Lem.~\ref{lem:suf1}, ~\ref{lem:nes1}]
Prior works studying the IA feasibility problem on MIMO interference networks have shown  that the \emph{properness}\footnote{This terminology is first defined in \cite{JafarDf}, which means the number of the free variables in transceiver design must be no less than the number of the IA constraints.} condition, i.e. \eqref{eqn:f3}, is the major factor that characterizes the IA feasibility conditions. However, \eqref{eqn:f3} contains $\mathcal{O}(2^{K^2})$ number of correlated inequalities. Such a complicated condition is hard to trace in both analysis and practice.

Lem.~\ref{lem:nes1} enables us to significantly simplify \eqref{eqn:f3}. By exploiting the idea of \emph{constraint allocation}, Lem.~\ref{lem:nes1} converts \eqref{eqn:f3} to \eqref{eqn:cv1}{--}\eqref{eqn:cv2t}, which consist of only $\mathcal{O}(K)$ number of constraints. Lem.~\ref{lem:suf1} shows that with an additional requirement \eqref{eqn:cv3} on the constraint allocation policy $\{c^{\mathrm{t}}_{kjpq},c^{\mathrm{r}}_{kjpq}\}$, the IA feasibility is guaranteed.
~\hfill~\IEEEQED
\end{Remark}

Now we turn to the main flow of the proof {for Cor.~\ref{cor:div}}. The ``only if" side is directly derived from \eqref{eqn:f3}. The ``if" side is completed by adopting Lem.~\ref{lem:suf1}. Please refer to $\mbox{Appendix\ref{pf_constructc}\hspace{-2.6mm}\color{white}\scriptsize$\blacklozenge$\hspace{1.5mm}}$ for the details of constructing $\{c^{t*}_{kjpq}, c^{r*}_{kjpq}\}$ which satisfy \eqref{eqn:cv1}{--}\eqref{eqn:cv3}.

\section{Summary and Future Work}
\label{sec:conclude}
\mynote{Summarize the contribution of this work.}

This work further consolidates the theoretical basis of IA. We have proved a sufficient condition of IA feasibility which applies to MIMO interference networks with general configurations {and discovered that IA feasibility is preserved when scaling the network.} Further analysis show that the sufficient condition and the necessary conditions coincide in a wide range of network configurations and provide some simple analytical conditions. These results unify and extend the pertaining theoretical works {in the literature} \cite{JafarDf,Misc:IAfeasible_Tse,Conf:IAfeasible_Luo,Jul:IAfeasible_Luo} and facilitate future analysis on MIMO interference networks.

\mynote{Point out the remaining issues.}

Despite the progress made in the prior works and this work, the issue of IA feasibility is yet not fully solved. In particular, there may be gaps between the necessary conditions in Thm.~\ref{thm:feasible} and the sufficient condition in Thm.~\ref{thm:feasible_s} and therefore the exact feasibility conditions of IA are still not determined in general. Merging the gap between the necessary and the sufficient side shall be the direction for future works.

\appendices
\section*{Appendices}
\addcontentsline{toc}{section}{Appendices}
\label{sec:appendix}
\subsection{Proof of Lemma~\ref{lem:independent1}}
\label{pf_lem:independent1}
When vectors $\{\mathbf{h}_i\}$ are linearly independent, we have that $L\le S$. Note that if the lemma holds when $L=S$, it must hold in general. We will use contradiction to prove the statement. Suppose $f_i$,  $i\in\{1,2,...,L\}$ are algebraically dependent. Then from the definition, there must exist a nonzero polynomial $p$, such that $p(f_1,f_2,...,f_S)= 0$. Without loss of generality, denote $p = p_0 +p_1 +...+p_D$, where $p_d$ contains all the $d$-th degree terms in $p$, $D\in\mathbb{N}$. Then we have:
\begin{eqnarray}
\nonumber p(f_1,f_2,...,f_S)=   p_0\!+\!p_1(\sum_{j=1}^{S}h_{1j}x_j,...,\sum_{j=1}^{S}h_{Sj}x_j)+
\\\Big(p_1(g_1,...,g_S)\!+\!\!\!\sum_{d=2}^{D}p_d(f_1,f_2,...,f_S)\Big)
\!=\! 0\label{eqn:pzero}
\end{eqnarray}

Note that all the terms in $(p_1(g_1,...,g_S)+\sum_{d=2}^{D}p_d(f_1,f_2,...,f_S))$ have degree no less than 2, from \eqref{eqn:pzero}, $p_1(\sum_{j=1}^{S}h_{1j}x_j,...,\sum_{j=1}^{S}h_{Sj}x_j)=0$, $\forall x_1,...,x_j\in\mathcal{K}$. Denote $y_i=\sum_{j=1}^{S}h_{ij}x_j$, we have that:
\begin{eqnarray}p_1(y_1,...,y_S)=0.\label{eqn:yzero}
\end{eqnarray}

 Note that the coefficient vectors $\mathbf{h}_i$ are linearly independent, $ [x_1,...,x_S]\rightarrow[y_1,...,y_S]$ is a bijective linear map. Therefore, $\{y_1,...,y_S\}\cong\{x_1,...,x_S\}=\mathcal{K}^S$. Hence, from \eqref{eqn:yzero}, we have that $\mathcal{V}(p_1)=\{y_1,...,y_S\}\cong\mathcal{K}^S$, which means $p_1$ must be a zero polynomial.

Similarly, when $p_1$ is a zero polynomial, by analyzing the coefficients of the second order terms, we have that $\mathcal{V}(p_2)\cong\mathcal{K}^S$ and therefore $p_2$ is also a zero polynomial. By using mathematical induction, we can show that $p_1, p_2,...p_D$ are zero polynomials and hence $p$ a zero polynomial, which is a contradiction with the assumption that $f_i$,  $i\in\{1,2,...,L\}$ are algebraically dependent. This completes the proof.

\subsection{Proof of Lemma~\ref{lem:independent2}}
\label{pf_lem:independent2}
We first prove that $g_1, f_2,...,f_L$ are algebraically independent. Then the Lemma can be proved by repeating the same trick $L$ times.
We will use contradiction to prove the statement. Suppose $g_1, f_2,...,f_L$ are algebraically dependent, i.e. {there exists} a non zero polynomial $p$ such that $p(g_1,f_2,...,f_L)=0$. {Without loss of generality}, denote $p(g_1,f_2,...,f_L)=\sum_{d=0}^{D}g^d_1p_d(f_2,...,f_L)$, $D\in\mathbb{N}$, where $p_d$ is a polynomial function of $f_2\sim f_L$. {Then we can define polynomial $p^\dag(f_1,f_2,...,f_L)$:
\begin{eqnarray}
&&\!\!\!\!\!\!\!\!\!\!\!\!\!\!\!\!\nonumber p^\dag(f_1,f_2,...,f_L)\triangleq p(f_1+c_1,f_2,...,f_L)=\\
&&\!\!\!\!\!\!\!\!\!\!\!\!\!\!\!\!\quad\sum_{d=0}^{D}f^d_1p_d+c_1   \sum_{d=0}^{D}\sum_{s=0}^{d-1}\Big( {s\atop d}\Big)c_1^{(d\mbox{-}1)}f^{(d\mbox{-}s)}_1p_{d}(f_2,...,f_L)
\end{eqnarray}
where $\Big( {s\atop d}\Big)$ denotes the number of  $s$-combination of a set with $d$ elements.} Note that $\sum_{d=1}^{D}f^d_1p_d=p(f_1,...,f_L)$ is nonzero, and $c_1$ is independent of the coefficients in $p_D$, we have that $p^\dag$ is nonzero almost surely. However, $p^\dag(f_1,f_2,...,f_L)=0$, which contradicts with the assumption that $f_1,...,f_L$ are algebraically independent. This completes the proof.

\subsection{Proof of Lemma~\ref{lem:nonempty}}
\label{pf_lem:nonempty}
Since $\{f_i\}$ are algebraically independent, from \cite[Thm.0.4, Lecture 2]{Misc:AdvAlgebra},  $\mathcal{K}[f_1,...,f_L]\cong \mathcal{K}[y_1,...,y_L]$, {where $y_1,...,y_L$ are variables in $\mathcal{K}$}. Hence, $\langle f_1,...,f_L \rangle \cong \langle y_1,...,y_L\rangle$, where $\langle z_1,...,z_L\rangle$ denotes the ideal generated by $z_1,...,z_L$. Note that ideal $\langle z_1,...,z_L\rangle$ is \emph{proper}, i.e. does not contain $1$, so is ideal $\langle f_1,...,f_L \rangle$. From Hilbert's Nullstellensatz Theorem \cite[Thm. 3.1, Chap. I]{Bok:ComA_Kunz}, $\mathcal{V}(f_1,...f_L)$ is non-empty.

\subsection{Proof of Lemma~\ref{lem:transform}}
\label{pf_lem:transform}
Firstly, it is easy to see that a solution of Problem~\ref{pro:pIA_c} is a solution of Problem~\ref{pro:pIA2}. Conversely, since the channel state of the direct links $\{\mathbf{H}_{kk}\}$ are full rank with probability 1 and are independent of that of cross links $\{\mathbf{H}_{kj}\},$ $k\neq j$, a solution of  Problem~\ref{pro:pIA2} is also a solution of Problem~\ref{pro:pIA_c} with probability 1.

\subsection{Proof of Lemma~\ref{lem:vdomin}}
\label{pf_lem:vdomin}
We first have two lemmas.
\begin{Lem2}\hspace{-9.2mm}{\color{white}\scriptsize$\blacklozenge$}\hspace{6.5mm} In $\mathbf{H}_{\mathrm{all}}$, the row vectors that are related to a same Rx are linearly independent almost surely, i.e. for every $k\in\{1,...,K\}$, vectors $\mathbf{h}_{kjpq}$, $j\in\{1,...,K\}, j\neq k, p,q\in\{1,...,d\}$, are linearly independent almost surely.\label{lem:samerxid}
\end{Lem2}
\proof From \eqref{eqn:hu}, and the fact $M-d\ge d$, we have that every submatrix $\mathbf{H}^{\mathrm{V}}_{kj}$ is full rank almost surely. Since for a given $k$, submatrices $\mathbf{H}^{\mathrm{V}}_{kj},$ $j\in\{1,...,K\}, j\neq k$ position on different rows and columns in $\mathbf{H}_{\mathrm{all}}$, the lemma is proved.
\endproof

\begin{Lem2}\hspace{-9.2mm}{\color{white}\scriptsize$\blacklozenge$}\hspace{6.5mm} As illustrated in Fig.~\ref{fig_symid}, denote $\mathbf{h}^{\mathrm{U}}_{kjpq}$ as the vector consists of the first $K\!(N\!-\!d)d$ elements of $\mathbf{h}_{kjpq}$. For all $k,j, k^\dag,j^\dag\in\{1,...,K\}$, $k\neq k^\dag$, $p,p^\dag\in\{1,...,d_k\}$, $q,q^\dag\in\{1,...,d_j\}$:
$\mathbf{h}^{\mathrm{U}}_{kjpq}\perp\mathbf{h}^{\mathrm{U}}_{k^\dag j^\dag p^\dag q^\dag}$.\label{lem:uorth}
\end{Lem2}
\proof Straight forward from the structure of $\mathbf{H}_{\mathrm{all}}$.
\endproof

We will prove the lemma by proving its converse-negative proposition. From Lem.\ref{lem:samerxid}\hspace{-6.2mm}{\color{white}\scriptsize$\blacklozenge$}\hspace{4mm}, if $\mathbf{H}_{\mathrm{all}}$ is not full row rank, there must exists a non-zero vector $\mathbf{h}$ and set $\mathcal{A}, \mathcal{B}\subset\{1,...,K\}$, $\mathcal{A}\cap\mathcal{B}=\emptyset$ such that
\begin{eqnarray}
\mathbf{h}\in\big(\cup_{k\in\mathcal{A}}\mathcal{H}_k\big)
\cap\big(\cup_{k\in\mathcal{B}}\mathcal{H}_k\big),\label{eqn:hintersect}
\end{eqnarray}
where $\mathcal{H}_k=\mbox{span}(\{\mathbf{h}_{kjpq}:j\in\{1,...,K\}, j\neq k, p,q\in\{1,...,d\}\})$. Furthermore, from Lem.\ref{lem:uorth}\hspace{-6.2mm}{\color{white}\scriptsize$\blacklozenge$}\hspace{4.5mm} and the fact that $\mathcal{A}\cap\mathcal{B}=\emptyset$, we have that the first $K\!(N\!-\!d)d$ elements of $\mathbf{h}$ must be 0. By combining this fact with \eqref{eqn:hintersect}, we have that:
\begin{eqnarray}
\mathbf{h}\in\big(\cup_{k\in\mathcal{A}}\mathcal{H}^{\mathrm{V}}_k\big)
\cap\big(\cup_{k\in\mathcal{B}}\mathcal{H}^{\mathrm{V}}_k\big),\label{eqn:hintersect}
\end{eqnarray}
which means the basis vectors of $\mathcal{H}^{\mathrm{V}}_k$, $k\in\mathcal{A}\cup\mathcal{B}$ are linearly dependent. This completes the proof.

\subsection{Proof of Lemma~\ref{lem:structure}}
\label{pf_lem:structure}
As illustrated by matrix $\textcircled{\scriptsize 2}$ and $\textcircled{\scriptsize 2}'$ in Fig.~\ref{fig_symid_sub}, we can separate $\mathbf{H}_{\mbox{\scriptsize sub}}$ into two matrices, one consists of the diagonal sub-blocks and sub-blocks that are not independent of the diagonal blocks and the other consists of the sub-blocks that are independent of the diagonal sub-blocks. It is sufficient to show that the first matrix is full rank almost surely. From S3, each diagonal block is full rank almost surely. Hence, as illustrated by matrix $\textcircled{\scriptsize 3}$ in Fig.~\ref{fig_symid_sub}, we can sequentially use row operation to make sub-blocks $\{\mathbf{B}_{(\!d\!-\!1\!)dkk'}\}$, $\{\mathbf{B}_{(\!d\!-\!2\!)(\!d\!-\!1\!)kk'}\}$,..., $\{\mathbf{B}_{12kk'}\}$ equal to $\mathbf{0}$ and make the matrix block upper-triangular. Noticing the association pattern S5, and the inter-block independence property S4, these operations preserves the block diagonal structure S2 and the full rankness of the diagonal sub-blocks. Then we can further adopt row operation to make the matrix block diagonal, e.g. matrix $\textcircled{\scriptsize 4}$ in Fig.~\ref{fig_symid_sub}. Since each sub-block is full rank almost surely, the entire matrix is full rank almost surely.

\subsection{Proof of Lemma~\ref{lem:suf1}}
\label{pf_lem_cor:sym}
\mysubnote{Give an example so that readers can ``visualize" the main idea of the proof.}
\subsubsection{Illustration of the Proof} We first illustrate the outline of
the proof via an example and give out the rigorous proof in the next subsection.

Consider a 3-pairs MIMO interference network with configuration $\chi=\{(2,2,1),(2,2,1),(4,2,2)\}$. The constraint allocation $\{c^{\mathrm{t}}_{kjpq},c^{\mathrm{r}}_{kjpq}\}$ is given by \eqref{eqn:callocation}.
\begin{eqnarray}
\left\{\!\!\!\!\begin{array}{*{5}{l@{,\,}}l}
c^{\mathrm{r}}_{1211}&c^{\mathrm{r}}_{2111}&c^{\mathrm{r}}_{3111}&\multicolumn{2}{l}{\!\!\!c^{\mathrm{r}}_{3221}=1}
\\c^{\mathrm{r}}_{1311}&c^{\mathrm{r}}_{1312}&c^{\mathrm{r}}_{2311}&c^{\mathrm{r}}_{2312}&c^{\mathrm{r}}_{3121}&c^{\mathrm{r}}_{3211}=0
\\c^{\mathrm{t}}_{1211}&c^{\mathrm{t}}_{2111}&c^{\mathrm{t}}_{3111}&\multicolumn{2}{l}{\!\!\!c^{\mathrm{t}}_{3221}=0}
\\c^{\mathrm{t}}_{1311}&c^{\mathrm{t}}_{1312}&c^{\mathrm{t}}_{2311}&c^{\mathrm{t}}_{2312}&c^{\mathrm{t}}_{3121}&c^{\mathrm{t}}_{3211}=1
\end{array}\right.\label{eqn:callocation}
\end{eqnarray}

From Thm.~\ref{thm:feasible_s}, to prove the lemma, we only need to show that
$\mathbf{H}_{\mathrm{all}}$ is full row rank almost surely.

\begin{figure} \centering
\includegraphics[scale=0.65]{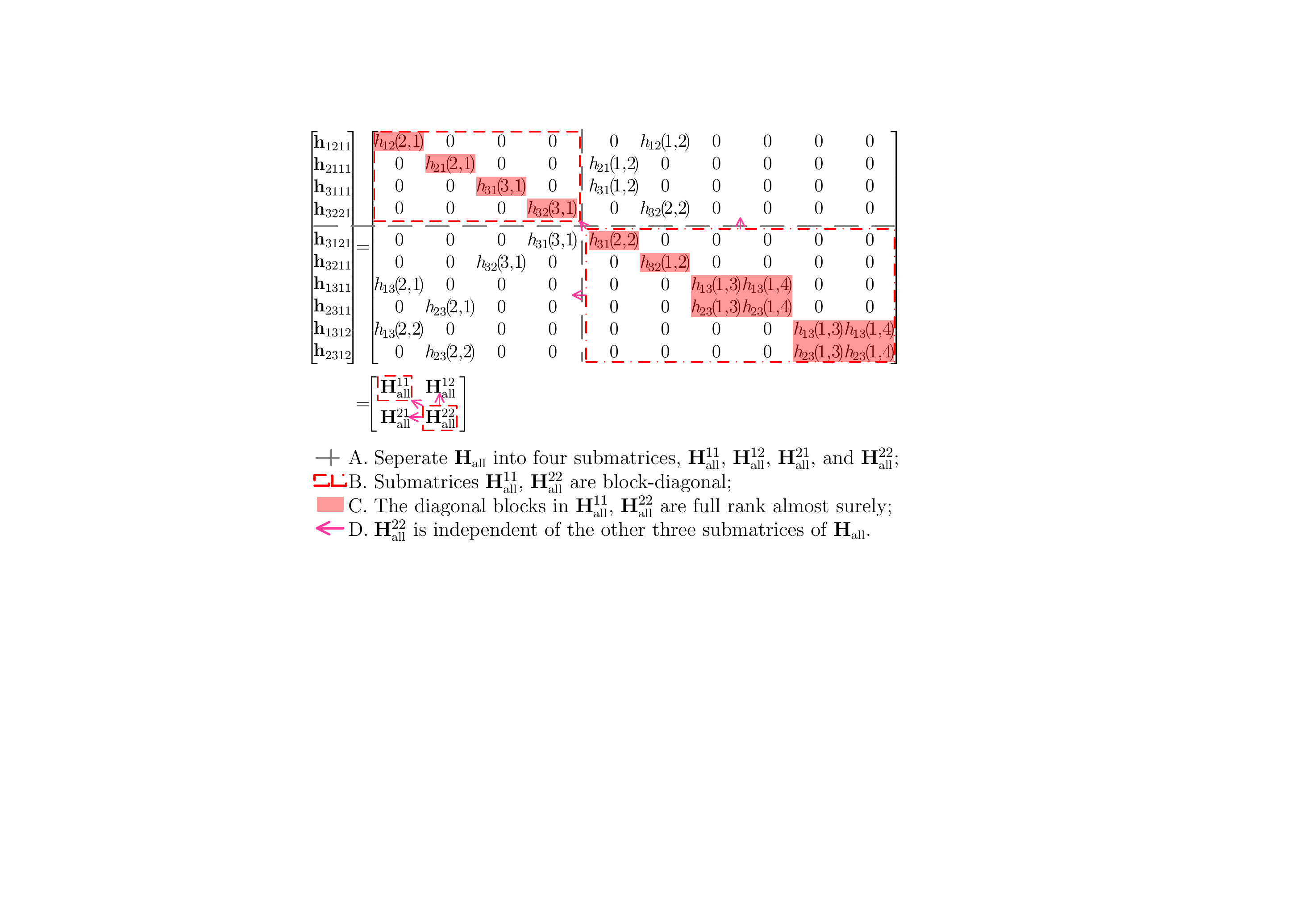}
\caption {Illustration of the matrix aggregated by vectors $\{\mathbf{h}_{kjpq}\}$ and its properties. Here $h_{kj}(p,q)$, $\tilde{u}_{k}(p,q)$, and $\tilde{v}_{j}(p,q)$ denote the element in the $p$-th row and $q$-th column of $\mathbf{H}_{kj}$, $\tilde{\mathbf{U}}_{k}$ and $\tilde{\mathbf{V}}_{j}$, respectively.}
\label{fig_egid}
\end{figure}

As illustrated by Fig.~\ref{fig_egid}, consider the matrix $\mathbf{H}_{\mathrm{all}}$ (we have rearranged the order of the rows for clearer illustration). In this network, we have 10 polynomials in \eqref{eqn:czero_poly2} and 10 variables in $\tilde{\mathbf{U}}_{k},\tilde{\mathbf{V}}_{j}$. Hence $\mathbf{H}_{\mathrm{all}}$ is a $10\times 10$ {matrix}. We need to prove that $\mathbf{H}_{\mathrm{all}}$ is nonsingular, i.e. $\det\left(\mathbf{H}_{\mathrm{all}}\right)\neq 0$ almost surely.

The major properties of $\mathbf{H}_{\mathrm{all}}$ that lead to its nonsingularity are labeled in Fig.~\ref{fig_egid}. We first carefully arrange the order of vectors $\{\mathbf{h}_{kjpq}\}$. In particular, index sequences $(k,j,p,q)$ that satisfy $c^{\mathrm{r}}_{kjpq}=1$ or $c^{\mathrm{t}}_{kjpq}=1$  are placed in the upper and lower part of $\mathbf{H}_{\mathrm{all}}$, respectively. We partition $\mathbf{H}_{\mathrm{all}}$ by rows according to  whether $c^{\mathrm{r}}_{kjpq}=1$ or $c^{\mathrm{t}}_{kjpq}=1$, and by columns according to whether the column is occupied by $\{\mathbf{H}^{\mathrm{U}}_{kj}\}$ or $\{\mathbf{H}^{\mathrm{V}}_{kj}\}$. Then as illustrated by Label A in Fig.~\ref{fig_egid}, $\mathbf{H}_{\mathrm{all}}$ is partitioned into four submatrices. $\mathbf{H}^{11}_{\mathrm{all}}$ and $\mathbf{H}^{22}_{\mathrm{all}}$ are block-diagonal as we have reordered the rows in $\mathbf{H}_{\mathrm{all}}$. As highlighted by Label C, all the diagonal blocks of $\mathbf{H}^{11}_{\mathrm{all}}$, $\mathbf{H}^{22}_{\mathrm{all}}$ are full rank almost surely. Thus we have $\mathbf{H}^{11}_{\mathrm{all}}$, $\mathbf{H}^{22}_{\mathrm{all}}$ are nonsingular almost surely, i.e. $\det{\mathbf{H}^{11}_{\mathrm{all}}}\neq 0,\; \det{\mathbf{H}^{22}_{\mathrm{all}}}\neq 0$, almost surely.

From condition \eqref{eqn:cv3}, if an element $h_{kj}(p,q)$ appears in $\mathbf{H}^{22}_{\mathrm{all}}$, it will not appear in other sub-matrices of $\mathbf{H}_{\mathrm{all}}$. Hence, as illustrated by Label D in Fig.~\ref{fig_egid}, $\mathbf{H}^{22}_{\mathrm{all}}$ is independent of the other three submatrices. Then from the Leibniz formula, we have that $\det(\mathbf{H}_{\mathrm{all}})\neq 0$ holds almost surely.

\mysubnote{Formal Proof.}
\subsubsection{Extending to General Configurations}
We will show that properties of $\mathbf{H}_{\mathrm{all}}$ illustrated by Labels A{--}D in Fig.~\ref{fig_egid} hold for generic configurations.

Denote $\{c^{t*}_{kjpq}, c^{r*}_{kjpq}\}$ as a set binary variables that satisfy \eqref{eqn:cv1}{--}\eqref{eqn:cv3}. {Without loss of generality}, suppose condition \eqref{eqn:cv3} holds for $\{c^{t*}_{kjpq}\}$.

 Reorder the rows of $\mathbf{H}_{\mathrm{all}}$ such that the row vectors which satisfy $\{\mathbf{h}_{kjpq}\!:\!c^{r*}_{kjpq}\!=\!1\}$ appear in the upper part of the matrix. To show that $\mathbf{H}_{\mathrm{all}}$ is full row rank, we need to show there exists a sub-matrix $\mathbf{H}^\dag_{\mathrm{all}}\in\mathbb{C}^{(\sum_{k=1}^K\sum_{j=1,\neq k}^Kd_kd_j)\!\times\! (\sum_{k=1}^K\sum_{j=1,\neq k}^Kd_kd_j)}$ of $\mathbf{H}_{\mathrm{all}}$, whose determinant is non-zero almost surely. Construct $\mathbf{H}^\dag_{\mathrm{all}}$ by removing the columns which contain the coefficients of $\tilde{u}_{k}(p,q)$, $\tilde{v}_{j}(p^\dag,q^\dag)$, where $k$, $j$, $q$, $q^\dag$, $p$, and $p^\dag$ satisfy:

\begin{eqnarray}1+c^{r*}_{kp}&\le p\le& N_k-d_k,\label{eqn:extrar}\\
1+c^{t*}_{jp^\dag}&\le p^\dag\le& M_j-d_j\label{eqn:extrat},
\end{eqnarray}
where $c^{r*}_{kp}=\sum_{j=1\atop\neq k}^K \sum_{q=1}^{d_j} c^{r*}_{kjpq}$, $c^{t*}_{jq}=
\sum_{k=1\atop\neq j}^K \sum_{p=1}^{d_k}c^{t*}_{kjpq}$.

In the following analysis, we will partition $\mathbf{H}_{\mathrm{all}}$ in the same way as that in the example above and show that the major properties labeled in Fig.~\ref{fig_egid} still {hold}.
\begin{itemize}
\item{\bf $\mathbf{H}^{11}_{\mathrm{all}}$ is full rank almost surely:} Consider the $1\sim c^{r*}_{11}$ columns of $\mathbf{H}^\dag_{\mathrm{all}}$. From \eqref{eqn:extrar}, these columns are also the first $c^{r*}_{11}$ (note that from \eqref{eqn:cv2r}, $c^{r*}_{11}\le N_1-d_1$) columns of $\mathbf{H}_{\mathrm{all}}$. Hence from the form of vectors $\{\mathbf{h}_{kjpq}\}$, we have that elements in these columns are non-vanishing if and only if index $k=1$, $j=1$. In $\mathbf{H}^{11}_{\mathrm{all}}$, only the first $c^{r*}_{11}$ rows are non-vanishing. Repeat the same analysis for other columns and we can show that $\mathbf{H}^{11}_{\mathrm{all}}$ is a $\sum_{k=1}^K \sum_{p=1}^{d_k} c^{r*}_{kp}\times \sum_{k=1}^K\sum_{p=1}^{d_k} c^{r*}_{kp}$ block diagonal matrix, with diagonal block sizes $c^{r*}_{11},...,c^{r*}_{1d_1},...,c^{r*}_{Kd_K}$, respectively. From Fig.~\ref{fig_Hall}, \eqref{eqn:hu}, and \eqref{eqn:hv}, we have that the elements in a same diagonal blocks are independent of each other, hence $\det{\mathbf{H}^{11}_{\mathrm{all}}}\neq 0$ almost surely.
\item{\bf $\mathbf{H}^{22}_{\mathrm{all}}$ is full rank almost surely:} Using the analysis similar to above, we can show that $\mathbf{H}^{22}_{\mathrm{all}}$ is also block-diagonal and $\det{\mathbf{H}^{22}_{\mathrm{all}}}\neq 0$ almost surely.
\item{\bf $\mathbf{H}^{22}_{\mathrm{all}}$ is independent of $\mathbf{H}^{11}_{\mathrm{all}}$, $\mathbf{H}^{12}_{\mathrm{all}}$, and $\mathbf{H}^{21}_{\mathrm{all}}$:} From Fig.~\ref{fig_Hall}, \eqref{eqn:hu}, and \eqref{eqn:hv}, an element $h_{kj}(p,d_j+s)$, $p\in\{1,...,d_k\}$, $s\in\{1,...M_j-d_j\}$ only appears in vectors $\mathbf{h}_{kjp1},...,\mathbf{h}_{kjpd_j}$. Hence condition \eqref{eqn:cv3} assure that if $h_{kj}(p,d_j+s)$ appears in $\mathbf{H}^{22}_{\mathrm{all}}$, it appears in neither of the other three sub-matrices. This proves that $\mathbf{H}^{22}_{\mathrm{all}}$ is independent of $\mathbf{H}^{11}_{\mathrm{all}}$, $\mathbf{H}^{12}_{\mathrm{all}}$, and $\mathbf{H}^{21}_{\mathrm{all}}$.
\end{itemize}

The three facts above show that $\det\left(\mathbf{H}^\dag_{\mathrm{all}}\right)\neq 0$ almost surely. This completes the proof.

%%%%%%%%%%%%%%%%%%%%%%%%%%%%% problem is below
\subsection{Proof of Lemma~\ref{lem:nes1}}
\label{pf_lem:nes1}

\mysubnote{Prove the ``only if" statement.}

We first prove the ``only if"
side. Denote $\{c^{t*}_{kjpq},c^{r*}_{kjpq}\}$ as a set of binary variables that satisfy \eqref{eqn:cv1}{--}\eqref{eqn:cv2t}. Then we have:
\begin{eqnarray}&&\!\!\!\!\!\!\!\!\!\!\!\!\nonumber\sum_{j:(\cdot,j)\in \mathcal{J}_{\mathrm{sub}}}\!\!\!\!\!\!d_j(M_j-d_j)
+\!\!\!\!\!\!\sum_{k:(k,\cdot)\in \mathcal{J}_{\mathrm{sub}}}\!\!\!\!\!\!d_k(N_k-d_k)
\\&& \!\!\!\!\!\!\qquad\quad \ge
\!\!\!\!\sum_{j:(\cdot,j)\atop\in \mathcal{J}_{\mathrm{sub}}}\!\sum_{k:(k,\cdot)\atop\in \mathcal{J}_{\mathrm{sub}}}\sum_{q=1}^{d_j}\sum_{p=1}^{d_k}(c^{\mathrm{t}}_{kjpq}+c^{\mathrm{r}}_{kjpq})\label{eqn:c2c}
\\&&\!\!\!\!\!\! \qquad\quad=\!\!\!\!\sum_{(k,j)\in \mathcal{J}_{\mathrm{sub}}}\!\!\!\!d_k d_j
\label{eqn:ifside}
\end{eqnarray}
$\forall \mathcal{J}_{\mathrm{sub}}$, where \eqref{eqn:c2c} is true due to \eqref{eqn:cv2r}, \eqref{eqn:cv2t}, and \eqref{eqn:ifside} is given by \eqref{eqn:cv1}. This completes the ``only if" side of the proof.

\mysubnote{Prove the ``if" statement.}

Then we turn to the ``if" side. We adopt a constructive approach. In following algorithm\footnote{ Note that this algorithm may not be the only way to construct $\{c^{t*}_{kjpq}, c^{r*}_{kjpq}\}$.}, we will propose a method to construct the binary variables $\{c^{t*}_{kjpq}, c^{r*}_{kjpq}\}$ and show that when conditions \eqref{eqn:f3} is true, the binary variables $\{c^{t*}_{kjpq}, c^{r*}_{kjpq}\}$ constructed by Alg.~\ref{alg:c} satisfy \eqref{eqn:cv1}{--}\eqref{eqn:cv2t}.

\mysubnote{This algorithm is the core component of the constructive proof.}
{
We first define two notions which will be used in the algorithm. To indicate how far the constraint assignment policy $\{c^{\mathrm{r}}_{kjpq}, c^{\mathrm{t}}_{kjpq}\}$  is from satisfying constraints \eqref{eqn:cv2r}, \eqref{eqn:cv2t}, we define:
\begin{Def}[Constraint Pressure]
\begin{eqnarray}
 P^{\mathrm{t}}_{jq}\!\triangleq\!M_j\!-\!d_j\!-\!\!\sum_{k=1\atop \neq j}^K\!\sum_{p=1}^{d_k} \!c^{\mathrm{t}}_{kjpq},\,P^{\mathrm{r}}_{kp}\!\triangleq\!N_k\!-\!d_k\!-\!\!\sum_{j=1\atop \neq k}^K\!\sum_{q=1}^{d_j} \!c^{\mathrm{r}}_{kjpq}.\label{eqn:pressure}\end{eqnarray}~\hfill~\IEEEQED
\end{Def}

To update the constraint assignment policy, we introduce the following data structure.
\begin{Def}[Pressure Transfer Tree (PTT)] As illustrated by Fig.~\ref{fig_tree}A, define
a weighted tree data structure with the following properties:
\begin{itemize}
\item[-]The weight of the nodes in this tree is given by the constraint pressure, i.e. $\{P^{\mathrm{t}}_{jq}, P^{\mathrm{r}}_{kp}\}$.
\item[-]The constraint pressure of a parent node and its child nodes have different superscript, i.e. $t$ or $r$.
\item[-]The link strength between two nodes, e.g. $P^{\mathrm{t}}_{jq}$ and $P^{\mathrm{r}}_{kp}$ is given by $c^{\mathrm{t}}_{kjpq}$, if $P^{\mathrm{t}}_{jq}$ is the parent node, or $c^{\mathrm{r}}_{kjpq}$, if $P^{\mathrm{r}}_{kp}$ is the parent node.~\hfill~\IEEEQED
\end{itemize}
\end{Def}
}
\begin{Alg}[Construct $\{c^{t*}_{kjpq}, c^{r*}_{kjpq}\}$]\label{alg:c}
\begin{itemize}
\item[]
\item{\bf Initialize the constraint allocation:} Randomly {generate}
a \emph{constraint allocation policy}, i.e. $\{c^{\mathrm{t}}_{kjpq},c^{\mathrm{r}}_{kjpq}\}$
such that: $c^{\mathrm{t}}_{kjpq},c^{\mathrm{r}}_{kjpq}\in\{0,1\}$, $
c^{\mathrm{t}}_{kjpq}+c^{\mathrm{r}}_{kjpq}=1$, $k,j \in\{1,2,...,K\}$, $p\in\{1,...,d_k\}$, $q\in\{1,...,d_j\}$. Calculate $\{P^{\mathrm{t}}_{jq},P^{\mathrm{r}}_{kp}\}$ according to \eqref{eqn:pressure}.
\item{\bf Update the constraint allocation policy:} While there exist ``overloaded streams", i.e. $P^{\mathrm{t}}_{jq}<0$ or
$P^{\mathrm{r}}_{kp}<0$, do the following to update $\{c^{\mathrm{t}}_{kjpq},c^{\mathrm{r}}_{kjpq}\}$:
\begin{itemize}\item{\bf A. Initialization:}
Select a negative pressure, e.g. $P^{\mathrm{t}}_{jq}<0$. Set $P^{\mathrm{t}}_{jq}$
to be the root node of a PTT.
\item{\bf B. Add Leaf nodes to the
pressure transfer tree:}

For every leaf nodes (i.e. nodes without child nodes) e.g. $P^{\mathrm{t}}_{jq}$, with depths equal to the height
of the tree (i.e. the nodes at the bottom in Fig.~\ref{fig_tree}):
\begin{itemize}
\item[] For every $k\in\{1,2,...,K\}$, $p\in\{1,...,d_k\}$: If $c^{\mathrm{r}}_{kjpq}=1$,
add $P^{\mathrm{r}}_{kp}$ as a child node of $P^{\mathrm{t}}_{jq}$.
\end{itemize}
\item{\bf C. Transfer pressure from root to leaf nodes:} For every leaf node just
added to the tree in Step B with positive pressure, i.e. $P^{\mathrm{t}}_{jq} (\mbox{ or } P^{\mathrm{r}}_{kp})>0$, transfer
pressure from root to these leafs by updating $\{c^{\mathrm{t}}_{kjpq},c^{\mathrm{r}}_{kjpq}\}$. For instance, as
illustrated in Fig.~\ref{fig_tree}B,
$P^{\mathrm{t}}_{j_1q_1}\xrightarrow{c^{\mathrm{t}}_{k_1j_1p_1q_1}\!=\!1}P^{\mathrm{r}}_{k_1p_1}\xrightarrow{c^{\mathrm{r}}_{k_1j_2p_1q_2}\!=\!1}
P^{\mathrm{t}}_{j_2q_2}$
is a root-to-leaf branch of the tree (red lines). Transfer pressure
from $P^{\mathrm{t}}_{j_1q_1}$ to $P^{\mathrm{t}}_{j_2q_2}$ by setting: $c^{\mathrm{t}}_{k_1j_1p_1q_1}=
0$, $c^{\mathrm{r}}_{k_1p_1j_1q_1}=
1$, $c^{\mathrm{r}}_{k_1j_2p_1q_2}= 0$,
$c^{\mathrm{t}}_{k_1j_2p_1q_2}= 1$. Hence we have
$P^{\mathrm{t}}_{j_1q_1}$ is increased by $1$ and
$P^{\mathrm{t}}_{j_2q_2}$ is reduced by $1$. This
operation can also be done for the green line in
Fig.~\ref{fig_tree}B.
\item{\bf D. Remove the ``depleted" links and ``neutralized" roots:}
\begin{itemize}
\item If the strength of a link, i.e. $c^{\mathrm{r}}_{kjpq}$ or $c^{\mathrm{t}}_{kjpq}$, becomes 0 after Step C: Separate the
subtree rooted from the child node of this link from the original
tree.
\item If the root of a pressure transfer tree (including the
subtrees just separated from the original tree) is nonnegative,
remove the root and hence the subtrees rooted from each child node
of the root become new trees. Repeat this process until all roots
are negative. For each newly generated pressure transfer tree,
repeat Steps B{--}D (Please refer to Fig.~\ref{fig_tree}C for an
example).
\end{itemize}
\item{\bf E. Exit Conditions:} Repeat Steps A{--}D until one of the following cases appears.

{\bf Case 1:} All
trees become empty.

{\bf Case 2:} No new leaf
node can added for any of the non-empty trees in Step B.

Set $\{c^{t*}_{kjpq}, c^{r*}_{kjpq}\}$ to be the current value of $\{c^{\mathrm{t}}_{kjpq}, c^{\mathrm{r}}_{kjpq}\}$ Exit the algorithm.~\hfill~\IEEEQED
\end{itemize}
\end{itemize}
\end{Alg}
\begin{figure*}[t] \centering
\includegraphics[scale=0.55]{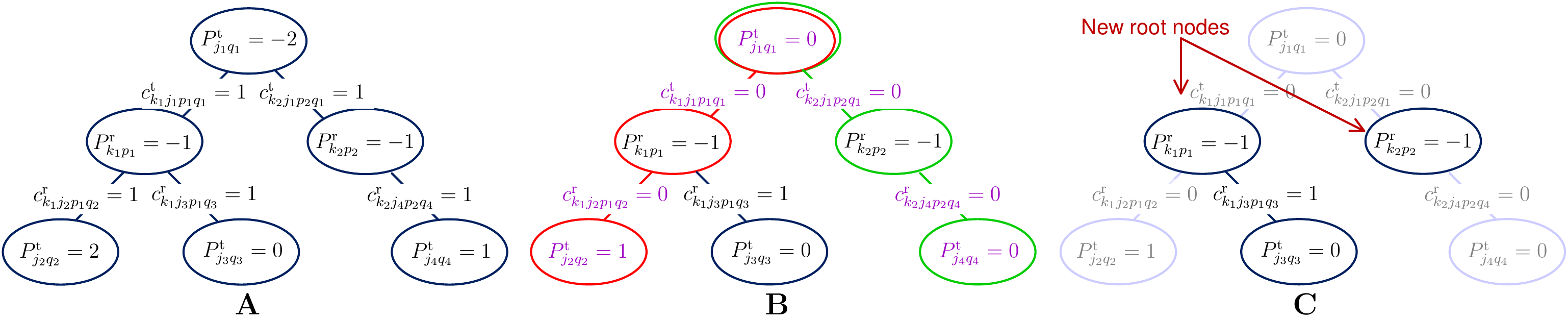}
\caption {Illustrative example of the ``pressure
transfer tree" and the corresponding operations in
Alg.~\ref{alg:c}. A) A tree generated in Step A and B;
B) Pressure transfer in Step C; C) Removal of depleted links and
neutralized roots in Step D.} \label{fig_tree}
\end{figure*}

Note that if Alg.~\ref{alg:c} exits with Case 1, we have that $M_j-d_j-\sum_{k=1\atop \neq j}^K\sum_{p=1}^{d_k} c^{\mathrm{t}}_{kjpq}\ge 0$, and
$N_k-d_k-\sum_{j=1\atop \neq k}^K\sum_{q=1}^{d_j} c^{\mathrm{r}}_{kjpq}\ge 0$, $\forall k,j,p,q$, which lead to \eqref{eqn:f3}. Hence to prove the ``if" side, we only need to prove the following proposition:
\begin{Prop2}[Exit State of Alg.~\ref{alg:c}]\hspace{-40.9mm}{\color{white}\scriptsize$\blacklozenge$}\hspace{38.7mm} When \eqref{eqn:f3} is true, Alg.~\ref{alg:c} always exits with Case 1.\label{prop:exit}
\end{Prop2}

We prove Prop\ref{prop:exit}\hspace{-6.6mm}{\color{white}\scriptsize$\blacklozenge$}\hspace{4.4mm} by contradiction. \eqref{eqn:f3} is equivalent to the following inequalities:
\vspace{-5pt}
\begin{eqnarray}
\!\!\!\!\sum_{(j,q):(\cdot,j,\cdot,q)\atop \in\mathcal{L}_{\mathrm{sub}}}\!\!\!\!\!\!(M_j\!-\!d_j)
\!+\!\!\!\!\!\!\sum_{(k,p):(k,\cdot,p,\cdot)\atop \in\mathcal{L}_{\mathrm{sub}}}\!\!\!\!\!\!(N_k\!-\!d_k)\!\ge\! |\mathcal{L}_{\mathrm{sub}}|,\,\forall \mathcal{L}_{\mathrm{sub}}\!\subseteq\! \mathcal{L}\label{eqn:f3_new}
\end{eqnarray}
\vspace{-10pt}

$\!\!\!\!\!\!$where $\mathcal{L}=\{(k,j,p,q):\; k,j\in\{1,...,K\},k\neq j,p\in\{1,...,d_k\},q\in\{1,...,d_j\}\}$, $(\cdot,j,\cdot,q)$ (or $(k,\cdot,p,\cdot)$) $\ in\mathcal{L}_{\mathrm{sub}}$ denotes that there exists $k,p$ (or $j,q$) such that $(k,j,p,q)\in\mathcal{L}_{\mathrm{sub}}$.
%%%%%%%%%%%%%%%%%%%%%%%%%%%%% below
If Alg.~\ref{alg:c} exits with Case 2, from the
exit condition, there must exist a non-empty pressure
transfer tree such that:
\begin{itemize}
\item Root node has negative pressure.
\item All other nodes are non-positive. This is because positive nodes are either
``neutralized" by the root in Step C or separated from the
tree in Step D.
\item No other nodes can be added to the tree, which implies $c^{\mathrm{r}}_{kjpq}=0$ and $c^{\mathrm{t}}_{k^\dag j^\dag p^\dag q^\dag}=0$ for any
$P^{\mathrm{t}}_{jq}$, $P^{\mathrm{r}}_{k^\dag p^\dag}$  in the tree and $P^{\mathrm{r}}_{kp}$, $P^{\mathrm{t}}_{j^\dag q^\dag}$ not in the tree.
\end{itemize}
Hence, set $\mathcal{L}_{\mathrm{sub}}$ in \eqref{eqn:f3_new} to be
$\{(k,j,p,q):$ both $P^{\mathrm{r}}_{kp}$  and  $P^{\mathrm{t}}_{jq}$ are in the tree$\}$ that are in the remaining pressure
transfer tree, we have:
%%%%%%%%%%%%%%%%%%%%%%%%%%%%%%%%%%% above
\begin{eqnarray}
\nonumber &&\!\!\!\!\!\!\!\!\!\!\!\!\!\!\!\!\!\!\!\! \sum_{(j,q):(\cdot,j,\cdot,q)\atop\in \mathcal{L}_{\mathrm{sub}}}\!\!\!\!(M_j\!-\!d_j)
\!+\!\!\!\!\!\sum_{(k,p):(k,\cdot,p,\cdot)\atop\in \mathcal{L}_{\mathrm{sub}}}\!\!\!\!(N_k\!-\!d_k)\!-\! |\mathcal{L}_{\mathrm{sub}}|
\\ \nonumber &&\!\!\!\!\!\!\!\!\!\!\!\!\!\!\!\!\!\!\!\! \qquad\!=\!\!\!\!\!\!\!\!\sum_{(j,q):(\cdot,j,\cdot,q)\atop\in \mathcal{L}_{\mathrm{sub}}}\!\!\!\!\!\!\!\!(M_j\!-\!d_j\!-\!\!\!\!\!\!\!\!\sum_{(k,p):(k,\cdot,p,\cdot)\atop\in \mathcal{L}_{\mathrm{sub}}}\!\!\!\!\!\!\!\!c^{\mathrm{t}}_{kjpq})\!+\!
\!\!\!\!\!\!\!\sum_{(k,p):(k,\cdot,p,\cdot)\atop\in \mathcal{L}_{\mathrm{sub}}}\!\!\!\!\!\!\!\!(N_k\!-\!d_k\!-\!\!\!\!\!\!\!\!\sum_{(j,q):(\cdot,j,\cdot,q)\atop\in \mathcal{L}_{\mathrm{sub}}}\!\!\!\!\!\!\!\!c^{\mathrm{r}}_{kjpq})
\\ \nonumber &&\!\!\!\!\!\!\!\!\!\!\!\!\!\!\!\!\!\!\!\!\qquad\!=\!
\!\!\!\!\!\!\!\!\!\sum_{(j,q): (k,j,p,q)\atop\in \mathcal{L}_{\mathrm{sub}}}\!\!\!\!\!\!\!\!(M_j\!-\!d_j\!-\!\!\!\sum_{k=1}^K\sum_{p=1}^{d_k} \!c^{\mathrm{t}}_{kjpq})\!+\!
\!\!\!\!\!\!\!\!\!\sum_{(k,p):(k,\cdot,p,\cdot)\atop \in \mathcal{L}_{\mathrm{sub}}}\!\!\!\!\!\!\!\!(N_k\!-\!d_k\!-\!\!\!\sum_{j=1}^K\sum_{q=1}^{d_j}\! c^{\mathrm{r}}_{kjpq})
\\&&\!\!\!\!\!\!\!\!\!\!\!\!\!\!\!\!\!\!\!\! \qquad\!=\!\!\!\!\!\sum_{(j,q):(\cdot,j,\cdot,q)\atop\in \mathcal{L}_{\mathrm{sub}}}\!\!\!\!P^{\mathrm{t}}_{jq}
\!+\!\!\!\!\!\sum_{(k,p):(k,\cdot,p,\cdot)\atop\in \mathcal{L}_{\mathrm{sub}}}\!\!\!\!P^{\mathrm{r}}_{kp}\!<\!0 \label{eqn:otherside}
\end{eqnarray}
which contradicts with \eqref{eqn:f3_new}.

\subsection{Construct $\{c^{t*}_{kjpq}, c^{r*}_{kjpq}\}$ to Prove Corollary~\ref{cor:div} }
\label{pf_constructc}
{Without loss of generality}, assume $d|N_k$, $\forall k$. Modify Alg.~\ref{alg:c} in Appendix\ref{pf_lem:nes1}\hspace{-4mm}{\color{white}\scriptsize$\blacklozenge$}\hspace{2.1mm} to construct $\{c^{t*}_{kjpq}, c^{r*}_{kjpq}\}$.
\begin{Alg}[Variation of Algorithm~\ref{alg:c}]
Adopt Alg.~\ref{alg:c} with the following modifications:
\begin{itemize}
\item {In the Initialization Step:} set
$c^{\mathrm{t}}_{kjp1}=...=c^{\mathrm{t}}_{kjpd_j}$, $c^{\mathrm{r}}_{kjp1}=...=c^{\mathrm{r}}_{kjpd_j}$.
\item {In Step C:} All pressure transfer operations must be symmetric w.r.t to index $q$, i.e. the operations on $c^{\mathrm{t}}_{kj1q}$ ($c^{\mathrm{r}}_{kj1q}$), $...$, $c^{\mathrm{t}}_{kjd_kq}$ ($c^{\mathrm{r}}_{kjd_kq}$) must be the same.~\hfill~\IEEEQED
\end{itemize}\label{alg:c2}
\end{Alg}

Suppose Prop\ref{prop:exit}\hspace{-6.6mm}{\color{white}\scriptsize$\blacklozenge$}\hspace{4.4mm} still holds for Alg.~\ref{alg:c2}. Then the output of Alg.~\ref{alg:c2} $\{c^{t*}_{kjpq}, c^{r*}_{kjpq}\}$ satisfies \eqref{eqn:cv1}{--}\eqref{eqn:cv3}. Therefore, we focus on proving Prop.\ref{prop:exit}\hspace{-6.6mm}{\color{white}\scriptsize$\blacklozenge$}\hspace{4.4mm} .

From \eqref{eqn:pressure}, after initialization, we have that
\begin{eqnarray}P^{\mathrm{t}}_{j1}=...=P^{\mathrm{t}}_{jd_j},\; d|P^{\mathrm{t}}_{jq},\; \forall n,q.
\label{eqn:divP}\end{eqnarray}

Since all $c^{\mathrm{t}}_{kjpq}$, $c^{\mathrm{r}}_{kjpq}$, $P^{\mathrm{t}}_{jq}$, and $P^{\mathrm{r}}_{kp}$ are symmetric w.r.t. to index $q$, after we perform Steps A, B of the constraint updating process, the pressure transfer trees are also symmetric w.r.t. to index $q$. Further note that $d|P^{\mathrm{t}}_{jq}$, in Step C, the symmetric operation is always feasible. As a result, we can follow the analysis used in Appendix\ref{pf_lem:nes1}\hspace{-4mm}{\color{white}\scriptsize$\blacklozenge$}\hspace{2.1mm}, and prove Prop\ref{prop:exit}\hspace{-6.5mm}{\color{white}\scriptsize$\blacklozenge$}\hspace{4.3mm}.

\vspace{-10pt}
\begin{IEEEbiography}
[{\includegraphics[width=1in,height=1.25in,clip,keepaspectratio]{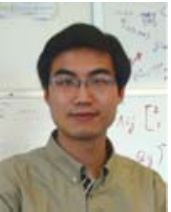}}]{Liangzhong Ruan} (S'10)
received B.Eng degree in ECE from Tsinghua University, Beijing in 2007.
He is currently a Ph.D. student at the Department of ECE,
Hong Kong University of Science and Technology (HKUST). Since Feb. 2012, he has been a visiting graduate student at the Laboratory for Information \& Decision Systems (LIDS), Massachusetts Institute of Technology (MIT). His research interests include interference management and
 cross-layer optimization.
\end{IEEEbiography}
\vspace{-8pt}
\begin{IEEEbiography}[{\includegraphics[width=1in,height=1.25in,clip,keepaspectratio]
{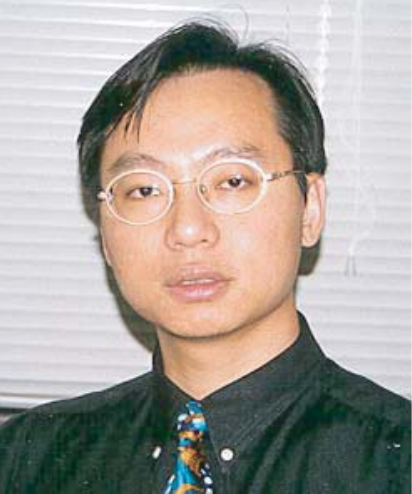}}]{Vincent Lau} (F'11) obtained B.Eng (Distinction 1st Hons) from the University of Hong Kong (1989-1992)
and Ph.D. from Cambridge University (1995-1997). Dr. Lau joined the department of ECE, HKUST as Associate Professor in 2004 and was promoted to Professor in 2010. His research interests include the
delay-sensitive cross-layer optimization for wireless systems, interference management, cooperative communications, as well as stochastic approximation and Markov Decision Process. He has been the technology advisor and consultant for a number of companies such as ZTE, Huawei, and ASTRI. He is the founder and director of Huawei-HKUST Innovation Lab.
\end{IEEEbiography}
\vspace{-8pt}
\begin{IEEEbiography}[{\includegraphics[width=1in,height=1.25in,clip,keepaspectratio]
{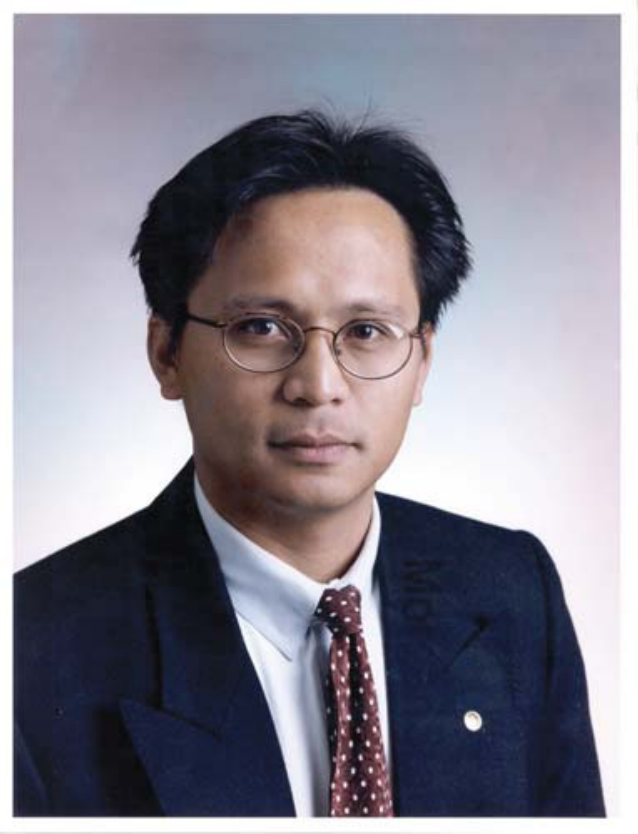}}]{Moe Win} (F'04)
 received the B.S. degree from Texas A\&M University, College Station, in 1987 and the M.S. degree from the University of Southern California (USC), Los Angeles, in 1989, both in Electrical Engineering. As a Presidential Fellow at USC, he received both an M.S. degree in Applied Mathematics and the Ph.D. degree in Electrical Engineering in 1998.

Dr. Win is a Professor at the LIDS, MIT. His main research interests are the application of mathematical and statistical theories to communication, detection, and estimation problems. He is a recipient
of the IEEE Kiyo Tomiyasu Award and the IEEE Eric E. Sumner Award.
\end{IEEEbiography}

\end{document}